\newcommand{\ignore}[1]{}
\newcommand{\A}{\mathbf{A}}
\newcommand{\B}{\mathbf{B}}
\newcommand{\K}{\mathbf{K}}
\newcommand{\X}{\mathbf{X}}
\newcommand{\N}{\mathbb{N}}
\newcommand{\R}{\mathbb{R}}
\newcommand{\Q}{\mathbb{Q}}
\newcommand{\cT}{\mathcal{T}}
\newcommand{\bF}{\mathbb{F}}
\newcommand{\sM}{\mathscr{M}}
\renewcommand{\vec}[1]{\mathbf{#1}}
\newcommand{\ba}{\vec{a}}
\newcommand{\bb}{\vec{b}}
\newcommand{\bc}{\vec{c}}
\newcommand{\bd}{\vec{d}}
\newcommand{\bi}{\vec{i}}
\newcommand{\bj}{\vec{j}}
\newcommand{\bu}{\vec{u}}
\newcommand{\bx}{\vec{x}}
\newcommand{\bv}{\vec{v}}
\newcommand{\by}{\vec{y}}
\newcommand{\be}{\vec{e}}
\newcommand{\bell}{\ensuremath{\boldsymbol\ell}}
\DeclareMathOperator{\BLP}{BLP}
\DeclareMathOperator{\Pol}{Pol}
\DeclareMathOperator{\PCSP}{PCSP}
\DeclareMathOperator{\CSP}{CSP}
\DeclareMathOperator{\id}{id}
\DeclareMathOperator{\Imago}{\operatorname{Im}}
\DeclareMathOperator{\Hom}{\operatorname{Hom}}
\DeclareMathOperator{\Aut}{\operatorname{Aut}}
\DeclareMathOperator{\supp}{supp}
\DeclareMathOperator{\SA}{SA}
\DeclareMathOperator{\dom}{dom}
\DeclareMathOperator{\ar}{ar}
\DeclarePairedDelimiter{\floor}{\lfloor}{\rfloor}
\newcommand{\Qconv}{\ensuremath{\mathscr{Q}_{\operatorname{conv}}}}
\newcommand{\bone}{\mathbf{1}}  
\newcommand{\tensor}[1]{\textsuperscript{\raisebox{-.5pt}{\normalfont\textcircled{\raisebox{-.1pt}{\tiny #1}}}}}
\newcommand\cont[1]{\mathrel{\overset{\makebox[0pt]{\mbox{\normalfont\tiny\sffamily #1}}}{\ast }}}
\theoremstyle{plain}
\newtheorem{thm}{Theorem}
\newtheorem*{thm*}{Theorem}
\newtheorem{lem}[thm]{Lemma}
\newtheorem*{lem*}{Lemma}
\newtheorem{prop}[thm]{Proposition}
\newtheorem*{prop*}{Proposition}
\newtheorem*{cor*}{Corollary}
\theoremstyle{definition}
\newtheorem{defn}[thm]{Definition}
\newtheorem{rem}[thm]{Remark}
\newtheorem{example}[thm]{Example}
\pgfmathsetmacro{\opacityDefault}{.05}
\pgfmathsetmacro{\bigShift}{4.5}
\newcommand\cube[5]{
\draw[fill=blue,opacity=#2,shift={(-#3,#5,-#4)},scale=.7](0,0,0)--++(0,-1,0)--++(0,0,-1)--++(0,1,0)--cycle;
\draw[fill=blue,opacity=#2,shift={(-#3,#5,-#4)},scale=.7](0,0,0)--++(-1,0,0)--++(0,0,-1)--++(1,0,0)--cycle;
\draw[fill=blue,opacity=#2,shift={(-#3,#5,-#4)},scale=.7](0,0,0)--++(-1,0,0)--++(0,-1,0)--++(1,0,0)--cycle;
}
\begin{document}

\title{The Sherali-Adams Hierarchy\\ for Promise CSPs through Tensors\thanks{The research leading to these results has received funding from the European Research Council (ERC) under the European Union's Horizon 2020 research and innovation programme (grant agreement No 714532). The paper reflects only the authors' views and not the views of the ERC or the European Commission. The European Union is not liable for any use that may be made of the information contained therein.}}

\author{Lorenzo Ciardo\\
University of Oxford, UK\\
\texttt{lorenzo.ciardo@cs.ox.ac.uk} 
\and 
Stanislav {\v{Z}}ivn{\'y}\\
University of Oxford, UK\\
\texttt{standa.zivny@cs.ox.ac.uk}
}

\date{}
\maketitle

\begin{abstract}
\noindent
We study the Sherali-Adams linear programming hierarchy in the context of promise constraint satisfaction problems ($\PCSP$s). We characterise when a level of the hierarchy accepts an instance in terms of a homomorphism problem for an appropriate multilinear structure obtained through a tensor power of the constraint language. The geometry of this structure, which consists in a space of tensors satisfying certain symmetries, allows then to establish non-solvability of the approximate graph colouring problem via constantly many levels of Sherali-Adams.
 
Besides this primary application, our tensorisation construction introduces a new tool to the study of hierarchies of algorithmic relaxations for computational problems within (and, possibly, beyond) the context of constraint satisfaction. In particular, we see it as a key step towards the algebraic characterisation of the power of Sherali-Adams for $\PCSP$s.
\end{abstract}

\setcounter{page}{0}\thispagestyle{empty}\clearpage

\section{Introduction}
\label{sec:intro}

What are the limits of efficient algorithms and where is the precise borderline of tractability? The \emph{constraint satisfaction problem} ($\CSP$) offers a general framework for studying such fundamental questions for a large class of computational problems~\cite{Creignouetal:siam01,Creignou08:complexity,kz17:dagstuhl} but yet for a class that is amenable to identifying the mathematical structure governing tractability. Canonical examples of $\CSP$s are satisfiability or ``not-all-equal'' satisfiability of 3-CNF formulas  (called 3-SAT and 3-NAE-SAT, respectively), 
linear equations, various variants of (hyper)graph colourings, and the graph clique problem. All $\CSP$s can be seen as homomorphism problems between relational structures~\cite{Feder98:monotone}: Given two relational structures $\X$ and $\A$, is there a homomorphism from $\X$ to $\A$?
Intuitively, the  structure  $\X$ represents the variables of the $\CSP$ instance and their interactions, whereas the  structure $\A$  represents the constraint language; i.e., the alphabet and the allowed constraint relations.

Among the possible restrictions that can be imposed on a class of $\CSP$s,
one of the most studied types of such problems are so-called \emph{finite-domain} \emph{non-uniform} $\CSP$s~\cite{Jeavons97:closure,Feder98:monotone,Kolaitis00:jcss,BKW17}, in which both structures $\X$ and $\A$ have a finite universe, and the target structure $\A$ is fixed whereas the source structure $\X$ is given on input. From the examples above, 3-SAT, 3-NAE-SAT, linear equations over finite fields, graph colourings with constantly many colours are all examples of finite-domain non-uniform $\CSP$s. For instance, in the graph $c$-colouring problem the target structure $\A$ is a $c$-clique and the structure $\X$ is the input graph. The existence of a homomorphism from a graph to a $c$-clique is equivalent to the existence of a colouring of the graph with $c$ colours.

Non-examples of finite-domain non-uniform $\CSP$s are the graph clique problem and linear equations over the rationals. 
The former is 
a $\CSP$ with 
a fixed class of source structures~\cite{Grohe07:jacm,Marx13:jacm} (but an arbitrary target structure). The latter is 
an infinite-domain $\CSP$~\cite{Bodirsky19:sicomp,Barto20:sicomp,Bodirsky21:sicomp}.

We will be concerned with polynomial-time tractability of $\CSP$s. Studied research directions include investigating questions such as: Is there a
solution~\cite{Bulatov17:focs,Zhuk20:jacm}? How many solutions are there,
exactly~\cite{Creignou96:ic,Bulatov13:jacm-dichotomy,Dyer13:sicomp} or approximately~\cite{Bulatov13:jacm,Chen15:jcss}?
What is the maximum number of simultaneously satisfied constraints,
exactly~\cite{Creignou95:jcss,Huber14:sicomp,tz16:jacm-complexity} or approximately~\cite{Deineko08:jacm,Austrin10:sicomp,Raghavendra08:everycsp}? What is
the minimum number of simultaneously unsatisfied
constraints~\cite{Khanna00:approximability,Dalmau18:jcss}? Given an almost
satisfiable instance, can one find a somewhat satisfying
solution~\cite{Dalmau13:toct,Barto16:sicomp,Dalmau19:sicomp}? 
In this paper, we will focus on the following question:

\medskip

\emph{Given a satisfiable instance, can one find a solution that is satisfying in a weaker sense}~\cite{AGH17,BBKO21,BG21:sicomp}? 

\medskip

\noindent This was formalised as \emph{promise constraint satisfaction problems} ($\PCSP$s) by Austrin, Guruswami and H{\aa}stad~\cite{AGH17} and Brakensiek and Guruswami~\cite{BG21:sicomp}. Let $\A$ and $\B$ be two fixed relational structures such that there is a homomorphism from $\A$ to $\B$. 
Intuitively, the structure $\A$ represents the allowed ``strict'' constraints and the structure $\B$ represents the corresponding ``weak'' constraints.
An instance of the $\PCSP$ with the template $(\A,\B)$, denoted by $\PCSP(\A,\B)$, is a relational structure $\X$ such that there is a homomorphism from $\X$ to $\A$. The task is to find a homomorphism from $\X$ to $\B$, which exists by the composition of the two promised homomorphisms. 

$\PCSP$s are a vast generalisation of $\CSP$s including problems that cannot be
expressed as $\CSP$s.
The work of Barto, Bul\'in, Krokhin, and Opr\v{s}al~\cite{BBKO21} lifted
and greatly extended the algebraic framework developed for
$\CSP$s~\cite{Jeavons97:closure,Bulatov05:classifying,BOP18} to the realm of $\PCSP$s.
Subsequently, there has been a series of recent works on the computational
complexity of
$\PCSP$s building on~\cite{BBKO21}, including applicability of convex relaxations~\cite{BG19,BG20,bgwz20,Butti21:mfcs,Ciardo22:soda,Atserias22:soda} and complexity of fragments of $\PCSP$s~\cite{GS20:icalp,AB21,Barto21:stacs,BWZ21,BG21:talg,Barto22:soda}.
Other strong results on $\PCSP$s have also been established via
other techniques than those in~\cite{BBKO21},
e.g., hardness of various (hyper)graph colourings~\cite{Khot01,DRS05,Huang13,ABP20} and  other $\PCSP$s~\cite{Bhangale21:stoc,Braverman21:itcs,BGS21,Bhangale22:stoc}.

An example of a $\PCSP$, identified in~\cite{AGH17}, is finding a satisfying assignment to a $k$-CNF formula given that a $g$-satisfying assignment exists; i.e., an assignment that satisfies at least $g$ literals in each clause. Austrin et al. established that this problem is NP-hard if $g/k<1/2$ and solvable via a constant level of the Sherali-Adams linear programming relaxation otherwise~\cite{AGH17}. This classification was later extended to problems over arbitrary finite domains by Brandts et al.~\cite{BWZ21}.

Another example of a $\PCSP$, identified in~\cite{BG21:sicomp}, is finding a ``not-all-equal'' assignment to a monotone $3$-CNF formula given that a ``$1$-in-$3$'' assignment is promised to exist; i.e., given a $3$-CNF formula with positive literals only and the promise that an assignment exists that satisfies exactly one literal in each clause, the task is to find an assignment that satisfies one or two literals in each clause.
This problem is solvable in polynomial time via a constant level of the Sherali-Adams linear programming relaxation~\cite{BG21:sicomp} but not via a reduction to finite-domain $\CSP$s~\cite{BBKO21}.

A third example of a $\PCSP$ is the well-known \emph{approximate graph colouring} problem: Given a $c$-colourable graph, find a $d$-colouring of it, for $c\leq d$. In the decision version, the problem asks 
to distinguish graphs that are $c$-colourable from graphs that are not even $d$-colourable. This corresponds to $\PCSP(\K_c,\K_d)$, where $\K_p$ is the clique on $p$ vertices. Contrary to the two examples above -- and despite a long history dating back to 1976~\cite{GJ76} -- the complexity of this problem is still unknown in general; it is widely believed that it is NP-hard for any constant values of $c$ and $d$ with $3\leq c\leq d$. For $c=d$, it becomes the classic $c$-colouring problem, which appeared on Karp's original list of $21$ NP-complete problems~\cite{Karp72}. The case $c=3$, $d=4$ was only proved to be NP-hard in 2000 by Khanna, Linial, and Safra~\cite{KhannaLS00} (cf. also~\cite{GK04});
 more generally, they showed hardness of the case $d=c+2\floor{c/3}-1$. This was improved to $d=2c-2$ in 2016~\cite{BrakensiekG16}, and recently to $d=2c-1$ in~\cite{BBKO21}. In particular, this last result implies hardness of the case $c=3$, $d=5$; the complexity of the case $c=3$, $d=6$ is still open. Building on the work of Khot~\cite{Khot01} and Huang~\cite{Huang13}, Wrochna and \v{Z}ivn\'{y} established NP-hardness of $d={c\choose\floor{c/2}}-1$ for $c\geq 4$~\cite{WZ20} (cf. also~\cite{KrokhinOWZ20}).
 NP-hardness of approximate graph colouring was established for all constant $3\leq c\leq d$ by Dinur, Mossel, and Regev~\cite{Dinur09:sicomp} under a non-standard variant of the Unique Games Conjecture, and by Guruswami and Sandeep~\cite{GS20:icalp} under the $d$-to-1 conjecture for any fixed $d$.

\medskip
A key concept in the study of the computational complexity of, and the power of algorithms for, $\CSP$s is that of polymorphisms~\cite{BKW17}. Intuitively, one can see polymorphisms of a $\CSP$ as higher order symmetries of the solution spaces~\cite{BKW17};
e.g., given a, say, $3$-ary polymorphism of a $\CSP$ and any three solutions, the polymorphism combines the three solutions to produce another one. Crucial properties of polymorphisms are captured via identities, which can express features such as being symmetric (order invariant). Algebraically, polymorphisms form minions: sets of functions closed under identifying variables, permuting variables, and introducing dummy variables~\cite{BBKO21}. More abstractly, minions are functors from the category of finite sets to the category of finite sets.

The power of convex relaxations has been instrumental in the understanding of
the computational complexity of various variants of $\CSP$s, including constant
approximability of Min-$\CSP$s~\cite{Ene13:soda,Dalmau18:jcss} and
Max-$\CSP$s~\cite{Khot07:sicomp,Raghavendra08:everycsp}, robust satisfiability of
$\CSP$s~\cite{Zwick98:stoc,Kun12:itcs,Barto16:sicomp}, and exact solvability of
optimisation $\CSP$s~\cite{ktz15:sicomp,tz17:sicomp}. 
In the context of $\PCSP$s,
the more general view of minions mentioned above has been useful in establishing characterisations of convex relaxations~\cite{BBKO21,bgwz20,Ciardo22:soda}. In particular, 
Barto et al.~\cite{BBKO21} characterised (in terms of a minion and polymorphism
identities) the power of the basic linear programming (BLP) relaxation and the
power of the affine integer programming (AIP) relaxation. Furthermore,
Brakensiek et al.~\cite{bgwz20} characterised (again, in terms of a minion and
polymorphism identities) the power of the combined BLP+AIP relaxation. Essentially, the power of BLP is captured by symmetric polymorphisms of all arities, whereas the power of BLP+AIP is captured by polymorphisms of all odd arities which have the property that they are symmetric on odd and even coordinates. In recent
work, Ciardo and \v{Z}ivn\'{y} proposed a relaxation, called CLAP, that is stronger than
BLP+AIP and gave a minion-based characterisation of its power as well as a
sufficient condition in terms of polymorphism
identities~\cite{Ciardo22:soda}.\footnote{CLAP is an LP relaxation
sitting between the first and second levels of Sherali-Adams and augmented with
AIP.}

\paragraph{Contributions}
We study applicability of the Sherali-Adams linear programming hierarchy~\cite{Sherali1990} to $\PCSP$s and give three main contributions.

\medskip

\noindent\textbf{(1) Tensorisation}
We propose a new approach to the study of the Sherali-Adams hierarchy inspired by multilinear algebra. We interpret Sherali-Adams acceptance as a homomorphism problem involving a tensorised version of the original structures. Essentially, the problem of distinguishing the cases when the $k$-th level of the hierarchy accepts or rejects a given instance is cast as the problem of checking the existence of a homomorphism between the $k$-th tensor power of the instance and a space of tensors built from the $k$-th tensor power of the constraint language and the specific minion $\Qconv$ capturing the power of $\BLP$. Equivalently, the $k$-th level of Sherali-Adams is interpreted as $\BLP$ applied to the $k$-th tensor powers of the instance and the constraint language. This allows us to describe the functioning of the algorithms in the hierarchy by describing the geometry of a space of tensors -- which can be accomplished by using multilinear algebra.
As far as we know, this approach has not appeared in the literature on Sherali-Adams (and related algorithmic techniques such as local consistency~\cite{Barto14:jacm} and the high-dimensional Weisfeiler-Leman algorithm~\cite{AtseriasM13,Butti21:mfcs}).

The scope of this idea is not limited to the Sherali-Adams hierarchy and to the $\PCSP$ framework. By replacing the minion $\Qconv$ with different minions, one can use the tensorisation construction to analogously characterise acceptance for other hierarchies of relaxation algorithms. For example, one can cast acceptance for the $k$-th level of local consistency (which is less powerful than Sherali-Adams) as the same homomorphism problem as above, the only difference being that, in this case, the ``coarser'' minion $\mathscr{H}$ capturing arc consistency~\cite{BBKO21} takes the role of $\Qconv$; equivalently, the $k$-th level of local consistency can be interpreted as arc consistency applied to tensorised structures.
In addition, the multilinear pattern that we found at the core of Sherali-Adams appears to be transversal to the $\PCSP$ framework and, instead, inherently connected to the algorithmic technique itself, which can be applied to classes of computational problems not expressible as
$\PCSP$s.

In summary, while in this work we focus on Sherali-Adams for $\PCSP$s, the tensorisation construction seems to be more widely applicable for getting insight into different hierarchies of algorithmic relaxations, both within the realms of ($\operatorname{P}$)$\CSP$s and beyond.  

\medskip\noindent\textbf{(2) Non-solvability of approximate graph colouring}
Building on our characterisation and the tensorisation machinery, we 
obtain our second main contribution:
non-solvability of the approximate graph colouring problem via the Sherali-Adams linear programming hierarchy. 
The key step is establishing that the $k$-th level of Sherali-Adams applied to the $k$-clique $\K_k$ accepts for any input digraph $\X$. Intuitively, this result relies on the fact that the symmetries satisfied by the space of tensors corresponding to the $k$-th tensor power of $\K_k$ are not strong enough to prevent the existence of a homomorphism from any input. Considering the line digraph construction~\cite{HarnerE72} and observing that it preserves not only polynomial-time solvability~\cite{WZ20,KrokhinOWZ20}
but also acceptance by Sherali-Adams, we then use the previous result to establish that constant levels of Sherali-Adams do not solve the approximate graph colouring problem for any constant number of colours. 

\medskip
\noindent\textbf{(3) Towards the power of Sherali-Adams}
The first step in all minion characterisations of the power of
BLP~\cite{BBKO21},
AIP~\cite{BBKO21}, BLP+AIP~\cite{bgwz20}, and CLAP~\cite{Ciardo22:soda}
 was to understand the problem of whether an input relational structure $\X$ is \emph{accepted} by the relaxation as a homomorphism problem involving the so-called free structure of some minion $\mathscr{M}$; then, minion-theoretic results allow to characterise those templates that are \emph{solved} by the relaxation by checking whether $\mathscr{M}$ is homomorphic to the minion of polymorphisms of the template. 
We believe that our acceptance characterisation in~\textbf{(1)} is the first step towards an algebraic (possibly, minion-theoretic) description of the power of Sherali-Adams.  
Preliminary results in this direction are discussed in Section~\ref{sec:solv}, where we show
that a natural condition based on tensorisation, while sufficient, is not necessary for a $\PCSP$ to be solved by Sherali-Adams. 

\paragraph{Organisation}
$\PCSP$s, the Sherali-Adams hierarchy, and basic notions on tensors are formally introduced in Section~\ref{sec:prelims}. Section~\ref{sec_extended_abstract} is a brief summary of our main contributions and the techniques used to achieve them. The rest of the paper gives full details: In Section~\ref{sec_tensorisation}, we explore the properties of our tensorisation construction and prove various technical results about it; the Sherali-Adams acceptance characterisation is proved in Section~\ref{sec_characterisation_acceptance}; non-solvability of the approximate graph colouring problem via Sherali-Adams is established in Section~\ref{sec:approx}; in Section~\ref{sec:solv}, we discuss acceptance vs. solvability for Sherali-Adams.

\paragraph{Related work}
Linear programming (LP) is one of the most powerful algorithmic tools known for
designing efficient
relaxations~\cite{vazirani2001approximation,williamson2011design}. If P$\neq$NP,
we do not expect polynomial-sized LPs to compute optimal solutions or even
arbitrarily good approximations of optimal solutions to hard problems. Thus a well established line
of work has sought to prove lower bounds on the efficacy of small LPs. This was
pioneered by Arora, Bollob\'{a}s, Lov\'{a}sz, and Tourlakis~\cite{Arora06:toc},
who explored the limitations of LPs arising from lift-and-project hierarchies such as that of Sherali and Adams~\cite{Sherali1990}. The ultimate goal of this line of work is to prove unconditional lower bounds for small LPs and recent highlights include, e.g.,~\cite{Braun15:stoc,Chan16:jacm-lp,Kothari17:stoc,Ghos18:toc}. We note that an impressive line of work also established lower bounds against (more powerful) semidefinite programming relaxations, e.g.,~\cite{Tulsiani09:stoc,Lee15:stoc,Chan15:jacm}.

In the context of (exact solvability of) $\CSP$s, \cite{tz17:sicomp} characterised the power of Sherali-Adams for valued $\CSP$s, which implies a characterisation for $\CSP$s: The $k$-th level, for $k\geq 3$, solves a $\CSP$ if and only if the third level solves it, and this coincides with the condition that characterises the power of the local consistency algorithm~\cite{Barto14:jacm,Bulatov09:width},
where the collapse to the third level was  shown~\cite{Barto14:jloc}.
Butti and Dalmau~\cite{Butti21:mfcs} recently characterised for $\CSP$s when the $k$-th level of the Sherali-Adams linear programming hierarchy accepts in terms of a construction different from the one introduced in this work. Unlike the tensorisation, the construction considered in~\cite{Butti21:mfcs} yields a relational structure whose domain includes the set of constraints of the original structure.  

The line digraph construction used in our proofs also appeared in~\cite{WZ20} and~\cite{GS20:icalp} to establish results on NP-hardness of approximate graph colouring.

Recent work of Atserias and Dalmau~\cite{Atserias22:soda} established that constant (and in fact sublinear) levels of the local consistency algorithm do not solve the approximate graph colouring problem. While it is well known that the $k$-th level of the Sherali-Adams hierarchy is at least as strong as the (combinatorial) $k$-consistency algorithm, it is now also known that for some $\PCSP$s Sherali-Adams is more powerful: Indeed, Atserias and Dalmau showed that the simple $\PCSP$ ``$1$-in-$3$ vs. not-all-equal'' described above is not solvable by local consistency~\cite{Atserias22:soda} although 
it is solvable by a constant level of the Sherali-Adams hierarchy~\cite{BG21:sicomp}.
Our result on non-solvability of approximate graph colouring extends the result from~\cite{Atserias22:soda} established for the local consistency algorithm. The proof technique followed in~\cite{Atserias22:soda} is probabilistic, as it relies on variants of the sparse incomparability lemma~\cite{Nesetril89,Nesetril04:sparse}, which is a generalisation of Erd\"os' classic result on the existence of graphs of large chromatic number and large girth~\cite{erdos1959graph}. Instead, we prove non-solvability of approximate graph colouring via Sherali-Adams by adopting a multilinear framework, which consists in describing the geometry of specific tensor spaces associated with the problem. This results in a new, non-probabilistic approach to the study of the complexity of approximate graph colouring.

\section{Preliminaries}
\label{sec:prelims}

We denote by $\N$ the set of positive integers. For $k\in\N$, we denote by $[k]$ the set $\{1,\ldots,k\}$. Given a set $X$, a tuple $\bx=(x_1,\dots,x_k)\in X^k$, and a tuple $\bi=(i_1,\dots,i_\ell)\in [k]^\ell$, $\bx_\bi$ shall denote the \emph{projection}
of $\bx$ on $\bi$, i.e., the tuple in $X^\ell$ given by $\bx_\bi\coloneqq (x_{i_1},\dots,x_{i_\ell})$. Given two tuples $\bx=(x_1,\dots,x_k)\in X^k$ and $\by=(y_1,\dots,y_\ell)\in X^\ell$, their \emph{concatenation} is the tuple $(\bx,\by)\coloneqq (x_1,\dots,x_k,y_1,\dots,y_\ell)\in X^{k+\ell}$. We also define $\{\bx\}\coloneqq \{x_1,\dots,x_k\}$.
We indicate by
$\be_i$ the $i$-th standard unit vector of the appropriate size (which will
be clear from the context); i.e., the $i$-th entry of $\be_i$ is $1$, and all other entries are $0$. We denote by $\bone_p$ the all-one vector of size $p$.

\paragraph{Promise CSPs}
A (relational) signature $\sigma$ is a finite set of relation symbols $R$, each with arity 
$\ar(R)\in\N$.
A (relational) $\sigma$-structure $\A$ consists of a domain (universe) $A$ and, for each $R\in\sigma$, a relation $R^{\A}\subseteq A^{\ar(R)}$. A $\sigma$-structure $\A$ is finite if the size $|A|$ of its domain $A$
is finite. In this case, we shall often assume that the domain of $\A$ is $A=[n]$.
An undirected graph is seen as a relational structure having a unique binary relation containing both directions of each edge.

Let $\A$ and $\B$ be $\sigma$-structures. A \emph{homomorphism} from $\A$ to $\B$ is a map $h:A\to B$ such that, for each $R\in\sigma$ with $r=\ar(R)$ and for each $\ba=(a_1,\ldots,a_r)\in A^r$, if $\ba\in R^{\A}$ then $h(\ba)=(h(a_1),\ldots,h(a_r))\in R^{\B}$. We denote the existence of a homomorphism from $\A$ to $\B$ by $\A\to\B$.

A pair of  $\sigma$-structures $(\A,\B)$ with $\A\to\B$ is called a \emph{promise constraint satisfaction problem} ($\PCSP$) \emph{template}. The $\PCSP$ problem parameterised by the template $(\A,\B)$, denoted by $\PCSP(\A,\B)$, is the following computational problem. In the \emph{search} version, the input is a $\sigma$-structure $\X$ with $\X\to\A$ and the goal is to find a homomorphism from $\X$ to $\B$ (which necessarily exists by the assumptions, since composing two homomorphisms results in a new homomorphism). In the  \emph{decision} version, the input is a $\sigma$-structure $\X$ and the goal is to answer \textsc{Yes} if $\X\to\A$ and \textsc{No} if $\X\not\to\B$. The promise is that it is not the case that $\X\not\to\A$ and $\X\to\B$. 
It is known that the decision version reduces to the search version~\cite{BBKO21}, but the converse is not known to hold in general. In this paper, we will focus on the decision version.

We write $\CSP(\A)$ for $\PCSP(\A,\A)$, the classic (non-promise) constraint satisfaction problem.

\paragraph{Algebraic approach to PCSPs}
The algebraic theory of $\PCSP$s developed in~\cite{BBKO21} relies on the notions of polymorphism and minion. 

Let $\A$ be a $\sigma$-structure. For $L\in\N$, the \emph{$L$-th power} of $\A$ is the $\sigma$-structure $\A^L$ with domain $A^L$ whose relations are defined as follows: Given $R\in\sigma$ and an $L\times \ar(R)$ matrix $M$ such that all rows of $M$ are tuples in $R^\A$, the columns of $M$ form a tuple in $R^{\A^L}$. An $L$-ary \emph{polymorphism} of a $\PCSP$ template $(\A,\B)$ is a homomorphism from $\A^L$ to $\B$.
Minions were defined in~\cite{BBKO21} as sets of functions with certain properties. We shall use here the abstract definition of minions, as first done in~\cite{bgwz20}, cf. also~\cite{Ciardo22:soda}.
A \emph{minion} $\mathscr{M}$ consists in the disjoint union of sets $\mathscr{M}^{(L)}$ for $L\in \N$ equipped with (so-called \emph{minor}) operations $(\cdot)_{/\pi}:\mathscr{M}^{(L)}\rightarrow\mathscr{M}^{(L')}$ for all functions $\pi:[L]\rightarrow [L']$, which satisfy
$M_{/\operatorname{id}}=M$ and, for
 $\pi:[L]\rightarrow [L']$ and $\tilde{\pi}:[L']\rightarrow [L'']$,
 $(M_{/\pi})_{/\tilde{\pi}}=M_{/\tilde{\pi}\circ \pi}$
for all $M\in\mathscr{M}^{(L)}$. 

\begin{example}
The set $\Pol(\A,\B)$ of all polymorphisms of a $\PCSP$ template $(\A,\B)$ is a minion with the minor operations defined by 
$f_{/\pi}(a_1,\dots,a_{L'})\coloneqq f(a_{\pi(1)},\dots,a_{\pi(L)})$
for $f:\A^L\to\B$ and $\pi:[L]\to [L']$.
In this minion, the minor operations correspond to identifying coordinates, permuting coordinates, and introducing dummy coordinates (of polymorphisms).
\end{example}
\begin{example}
Another example of minion that shall appear frequently is $\Qconv$. Its $L$-ary elements are rational vectors of size $L$ that are stochastic (i.e., whose entries are nonnegative and sum up to $1$), while the minor
operations are defined as follows: If $q\in\Qconv^{(L)}$
and $\pi:[L]\to[L']$, then $q_{/\pi}\coloneqq Pq$, where $P$ is the $L'\times
L$ matrix whose $(i,j)$-th entry is $1$ if $\pi(j)=i$, and $0$ otherwise.
\end{example}

For two minions $\mathscr{M}$ and $\mathscr{N}$, a \emph{minion homomorphism} $\xi:\mathscr{M}\rightarrow\mathscr{N}$ is a map that preserves arities and minors: Given $M\in\mathscr{M}^{(L)}$ and $\pi:[L]\rightarrow[L']$, $\xi(M)\in \mathscr{N}^{(L)}$ and $\xi(M_{/\pi})=\xi(M)_{/\pi}$. We denote the existence of a minion homomorphism from $\mathscr{M}$ to $\mathscr{N}$ by $\mathscr{M}\to\mathscr{N}$.

We will also need the concept of free structure from~\cite{BBKO21}. 
Let $\mathscr{M}$ be a minion and let $\A$ be a (finite) $\sigma$-structure. The \emph{free structure} of $\sM$ generated by $\A$ is a $\sigma$-structure $\bF_{\sM}(\A)$ with domain $\sM^{(|A|)}$ (potentially infinite). Given a relation symbol $R\in\sigma$ of arity $r$, a tuple $(M_1,\dots,M_r)$ of elements of $\sM^{(|A|)}$ belongs to $R^{\bF_{\sM}(\A)}$ if and only if there is some $q\in \sM^{(|R^\A|)}$ such that $M_i=q_{/\pi_i}$ for each $i\in[r]$, where $\pi_i:R^\A\to A$ maps $\ba\in R^\A$ to its $i$-th coordinate $a_i$.
\begin{example}
\label{ex_free_structure}
The free structure $\bF_{\Qconv}(\A)$ has domain consisting of all rational stochastic vectors of size $|A|$. Given $R\in\sigma$ of arity $r$, a tuple $(v_1,\dots,v_r)$ of $r$ rational stochastic vectors of size $|A|$ is an element of $R^{\bF_{\Qconv}(\A)}$ if and only if there exists some rational stochastic vector $q$ of size $|R^\A|$ for which $v_i=P_iq$ for each $i\in [r]$, where $P_i$ is the $|A|\times|R^\A|$ matrix such that, for $a\in A$ and $\ba=(a_1,\dots,a_r)\in R^\A$, the $(a,\ba)$-th entry of $P_i$ is $1$ if $a_i=a$, and $0$ otherwise.
\end{example}
\paragraph{Sherali-Adams LP hierarchy}

The Sherali-Adams linear programming hierarchy consists in a refinement of the well-known $\BLP$ relaxation (cf.~Appendix~\ref{app:BLP}). Essentially, the $k$-th level of the hierarchy is obtained by enforcing consistency of probability distributions over partial assignments on up to $k$ variables in the input structure. We shall follow the definition as presented in~\cite{Butti21:mfcs}.\footnote{We note that \cite{Atserias22:soda} has a slightly different definition of the Sherali-Adams hierarchy, 
 but the two interleave, cf.~\cite[Appendix~A]{Butti21:mfcs}. In particular, the class of $\PCSP$s solved by constant levels of the hierarchy is the same for both definitions.} 

Given two $\sigma$-structures $\X,\A$, we introduce a variable $\lambda_V(f)$ for every subset $V\subseteq X$ with $1\leq |V|\leq k$ and every function $f:V\to A$, and a variable $\lambda_{R,\bx}(f)$ for every $R\in\sigma$, every $\bx\in R^\X$, and every $f:\{\bx\}\to A$. The $k$-th level of Sherali-Adams is given by the following constraints:
\[
\begin{array}{lll}
\mbox{(SA1)} & \displaystyle\sum_{f:V\to A}\lambda_V(f)=1 & V\subseteq X \mbox{ such that }1\leq|V|\leq k\\
\mbox{(SA2)} & \displaystyle \lambda_U(f)=\sum_{g:V\to A,\;g|_{U}=f}\lambda_V(g) & 
U\subseteq V\subseteq X \mbox{ such that } 1\leq|V|\leq k, f:U\to A 
\\
\mbox{(SA3)} & \displaystyle \lambda_U(f)=\sum_{g:\{\bx\}\to A,\;g|_{U}=f}\lambda_{R,\bx}(g)
&
R\in\sigma, \bx\in R^\X, U\subseteq \{\bx\}
\mbox{ such that }1\leq|U|\leq k, f:U\to A 
\\
\mbox{(SA4)} & \displaystyle \lambda_{R,\bx}(f)=0 & 
R\in\sigma, \bx\in R^\X, f:\{\bx\}\to A
\mbox{ such that } f(\bx)\not\in R^\A.
\end{array}
\]
We say that $\SA^k(\X,\A)$ \emph{accepts} if the system above admits a solution such that all variables take real (equivalently, rational) values in the interval $[0,1]$. This can be checked in polynomial time in the size of the input $\X$ (as it corresponds to checking feasibility of a polynomial-sized LP)~\cite{schrijver1998theory}.
We say that $\SA^k$ \emph{solves} a $\PCSP$ template $(\A,\B)$ if,
for every instance $\X$ of $\PCSP(\A,\B)$, we have (i) if $\X\to\A$ then $\SA^k(\X,\A)$ accepts, and (ii) if $\SA^k(\X,\A)$ accepts then $\X\to\B$. (Note that (i) always holds as, from the definition of $\SA^k$, if $\SA^k(\X,\A)$ does not accept then $\X\not\to\A$.)

\paragraph{Tensors}
Let $S$ be a set. For $n_1,\dots,n_k\in\N$, $\cT^{n_1\times\dots\times n_k}(S)$ denotes the set of all functions from $[n_1]\times\dots\times [n_k]$ to $S$, which we visualise as hypermatrices or tensors. We sometimes denote an element of $\cT^{n_1\times\dots\times n_k}(S)$ by $M=[m_\bi]$, where $\bi\in [n_1]\times\dots\times [n_k]$ and $m_\bi$ is the image of $\bi$ under $M$. 
If $S$ is a ring, the \emph{contraction} of two tensors 
$
M=[m_\bi]\in \cT^{n_1\times\dots\times n_p\times n_{p+1}\times\dots\times n_{p+k}}(S)$,
$\tilde M=[\tilde m_\bi]\in \cT^{n_{p+1}\times\dots\times n_{p+k}\times n_{p+k+1}\times\dots\times n_{p+k+s}}(S)
$,  
denoted by $M\cont{k}\tilde{M}$, is the tensor in $
\cT^{n_1\times\dots\times n_p\times n_{p+k+1}\times\dots\times n_{p+k+s}}(S)
$
such that, for $\bi\in [n_1]\times\dots\times[n_p]$ and $\bj\in [n_{p+k+1}]\times\dots\times[n_{p+k+s}]$, the $(\bi,\bj)$-th entry of $M\cont{k}\tilde{M}$
is given by
$
\sum_{\bell\in [n_{p+1}]\times\dots\times [n_{p+k}]}m_{(\bi,\bell)}\tilde{m}_{(\bell,\bj)}
$.
When $p=0$ or $s=0$, we write $M\ast  \tilde{M}$ for $M\cont{k}\tilde{M}$.
\begin{example}
Given two vectors $\bu,\bv\in\cT^{n}(\mathbb{R})$ and two matrices $M\in \cT^{m\times n}(\R)$, $N\in\cT^{n\times p}(\R)$, we have that
$\bu\cont{1}\bv=\bu\ast \bv=\bu\cdot\bv$, the dot product of $\bu$ and $\bv$;
$M\cont{1}\bu=M\ast\bu=M\bu$, the matrix-vector product of $M$ and $\bu$;
$M\cont{1}N=MN$, the matrix 
product of $M$ and $N$.
\end{example}
Given $\bi\in [n_1]\times\dots\times [n_k]$, we denote by $E_\bi$ the tensor in $\cT^{n_1\times\dots\times n_k}(\R)$ all of whose entries are $0$, except the $\bi$-th entry that is $1$. For $M\in \cT^{n_1\times\dots\times n_k}(\mathbb{R})$, notice that $E_\bi\ast M$ is the $\bi$-th entry of $M$. The \emph{support} of $M$ is the set of indices of all nonzero entries of $M$ -- i.e., the set 
$\supp(M)=\{\bi\in [n_1]\times\dots\times [n_k]:E_\bi\ast M\neq 0\}$.
Most tensors in this paper are cubical, meaning that $n_1=n_2=\dots=n_k$. In this case, we write $\cT^{k;n}(S)$ for $\cT^{n_1\times\dots\times n_k}(S)$, where $n=n_1=\dots=n_k$. For example, $\cT^{1;n}(\R)$ is the set of real $n$-vectors, while $\cT^{2;n}(\R)$ is the set of real $n\times n$ matrices.

\section{Overview of Results and Techniques}
\label{sec_extended_abstract}

What structure lies at the core of the basic linear programming relaxation? Given two $\sigma$-structures $\X$ and $\A$, the $\BLP$ algorithm applied to $\X$ and $\A$ tries to assign probability distributions (i.e., stochastic vectors) over the domain of $\A$ to the elements of $\X$, in such a way that certain marginality requirements are satisfied (see Appendix~\ref{app:BLP}).
Formally, the problem of understanding whether $\BLP(\X,\A)$ accepts was shown in~\cite{BBKO21} to be equivalent to the problem of understanding whether $\X$ is homomorphic to the free structure $\mathbb{F}_{\Qconv}(\A)$. In other words, $\BLP$ acceptance can be cast as $\CSP(\mathbb{F}_{\Qconv}(\A))$. This $\CSP$ enjoys an interesting description: $\mathbb{F}_{\Qconv}(\A)$ consists in an infinite set of stochastic vectors equipped with relations defined through linear identities (cf.~Example~\ref{ex_free_structure}) -- in other words, it is a \emph{linear} object. As a consequence, one can capture the functioning of $\BLP$ by using linear algebra. To give an example of the usefulness of this linear description, observe that the set of stochastic vectors includes certain particularly symmetric members -- the constant vectors $\frac{1}{p}\bone_p$. It follows that any problem that is solved by $\BLP$ must enjoy, in some sense, the same symmetries as those of the constant vectors. This ultimately results in the simple characterisation of the templates solved by $\BLP$, as those templates admitting symmetric polymorphisms of all arities (cf.~Theorem~\ref{thm:blp}).  

The first main goal of this work is to extend the scope of this idea by investigating the structure lying at the core of the Sherali-Adams hierarchy. Rather than stochastic vectors, we will find that the hierarchy can be naturally described by means of stochastic tensors. In other words, we shall argue that the core of Sherali-Adams is essentially a \emph{multilinear} object. The geometric features of this object will then allow us to rule out the Sherali-Adams hierarchy as an algorithmic technique to solve the approximate graph colouring problem. 

We now give a formal overview of these ideas. For $k\in\N$ and a signature $\sigma$, $\sigma^{\tensor{k}}$ is the signature consisting of the same symbols as $\sigma$ such that each symbol $R$ of arity $r$ in $\sigma$ has arity $r^k$ in $\sigma^{\tensor{k}}$. 

\begin{defn}
The \emph{$k$-th tensor power} of a $\sigma$-structure $\A$ is the $\sigma^{\tensor{k}}$-structure $\A^{\tensor{k}}$ having domain $A^k$ and relations defined as follows: For each symbol $R\in\sigma$ of arity $r$ in $\sigma$, we set
$
R^{\A^\tensor{k}}=\left\{\ba^{\tensor{k}}:\ba\in R^\A\right\}
$,
where, for $\ba\in R^\A$, $\ba^\tensor{k}$ is the tensor in $\cT^{k;r}(A^k)$ defined by $E_\bi\ast\ba^\tensor{k}=\ba_\bi$ for any $\bi\in [r]^k$ -- i.e., the $(i_1,i_2,\dots,i_k)$-th element of $\ba^\tensor{k}$ is $(a_{i_1},a_{i_2},\dots,a_{i_k})$.\footnote{\label{footnote_segre}We can visualise $\ba^\tensor{k}$ as the formal Segre outer product of $k$ copies of $\ba$ (cf.~\cite{lim2013tensors}; see also Remark~\ref{remark_rank_one_segre}).}$^{,}$\footnote{\label{footnote_RA_RAk}Notice that $\A^\tensor{1}=\A$. Also, the function $R^\A\to R^{\A^\tensor{k}}$ given by $\ba\mapsto\ba^\tensor{k}$ is a bijection, so $\left|R^{\A^\tensor{k}}\right|=|R^\A|$.}
\end{defn}
\begin{example}
\label{example_tensorisation_1406}
Let us describe the third tensor power of the $3$-clique -- i.e., the structure $\K_3^\tensor{3}$. The domain of $\K_3^\tensor{3}$ is $[3]^3$, i.e., the set of tuples of elements in $[3]$ having length $3$. Let $R$ be the symbol corresponding to the binary edge relation in $\K_3$, so that $R^{\K_3}=\{(1,2),(2,1),(2,3),(3,2),(3,1),(1,3)\}$. Then, $R^{\K_3^\tensor{3}}$ has arity $2^3=8$ and it is a subset of $\mathcal{T}^{3;2}([3]^3)$. 
Specifically, $R^{\K_3^\tensor{3}}=\{(1,2)^\tensor{3},(2,1)^\tensor{3},\linebreak(2,3)^\tensor{3},(3,2)^\tensor{3},(3,1)^\tensor{3},(1,3)^\tensor{3}\}$ where, e.g., $(2,3)^\tensor{3}=\left[\begin{array}{@{}cc|cc@{}}
(2,2,2)&(2,2,3)&(3,2,2)&(3,2,3)\\
(2,3,2)&(2,3,3)&(3,3,2)&(3,3,3)
\end{array}\right]$.\footnote{The vertical line separates the two layers of the tensor.}
\end{example}

The tensorisation construction described above captures algebraically the functioning of the Sherali-Adams hierarchy, as stated in the following main result proved in Section~\ref{sec_characterisation_acceptance}.  We say that a $\sigma$-structure $\A$ is \emph{$k$-enhanced} if $\sigma$ contains a $k$-ary symbol $R_k$ and $R_k^\A=A^k$. Observe that any two $\sigma$-structures $\A,\B$ are homomorphic if and only if the structures $\tilde{\A},\tilde{\B}$ obtained by adding $R_k$ to their signatures are homomorphic. Hence, $\PCSP(\A,\B)$ is equivalent to $\PCSP(\tilde{\A},\tilde{\B})$, and considering $k$-enhanced structures results in no loss of generality.

\begin{thm}\label{thm:main}
Let $k\in\N$ with $k\geq 2$, and let $\X,\A$ be two $k$-enhanced $\sigma$-structures.
Then $\SA^k(\X,\A)$ accepts if and only if
$\X^{\tensor{k}}\to\bF_{\Qconv}(\A^{\tensor{k}})$.\footnote{An equivalent way to phrase Theorem~\ref{thm:main} is stating that $\SA^k(\X,\A)$ accepts exactly when $\BLP(\X^\tensor{k},\A^\tensor{k})$ accepts.}
\end{thm}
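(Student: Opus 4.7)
The plan is to relate the $\lambda$-variables of $\SA^k(\X,\A)$ to entries of stochastic vectors $h(\bx)\in\Qconv^{(|A|^k)}$ attached to tuples $\bx\in X^k$, via the correspondence $h(\bx)_{\ba}\leftrightarrow\lambda_{\{\bx\}}(f_\ba)$, where $f_\ba:\{\bx\}\to A$ maps $x_i\mapsto a_i$ (defined exactly when $\ba\in A^k$ respects the equality pattern of $\bx$, and zero otherwise). Equivalently, I view $f\circ\bx\in A^k$ as the canonical index associated to $f:\{\bx\}\to A$.

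For the forward direction, I start from an $\SA^k$-solution $\lambda$, define $h$ by the formula above (with each $h(\bx)$ stochastic by (SA1)), and verify the homomorphism property. Given $R\in\sigma^{\tensor{k}}$ and $\bx^{\tensor{k}}\in R^{\X^{\tensor{k}}}$ with $\bx\in R^\X$ of arity $r$, the witness $q\in\Qconv^{(|R^\A|)}$ is $q_{\ba}:=\lambda_{R,\bx}(f_\ba)$ for $\ba\in R^\A$ respecting $\bx$'s pattern (and $0$ otherwise); stochasticity follows from (SA1), the trivial case of (SA3) at $U=\{\bx\}$ (which gives $\lambda_{\{\bx\}}(f)=\lambda_{R,\bx}(f)$), and (SA4). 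The identity $h(\bx_\bi)=P_\bi q$ for $\bi\in[r]^k$ reduces, via (SA3) at $U=\{\bx_\bi\}$, to matching extensions $g:\{\bx\}\to A$ of $f:\{\bx_\bi\}\to A$ with pattern-respecting tuples $\ba\in R^\A$ satisfying $\ba_\bi=f\circ\bx_\bi$, using the bijection $g\leftrightarrow g(\bx)$ and (SA4) to dispose of tuples outside $R^\A$.

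The reverse direction starts from a homomorphism $h$ and its witnesses $q_{R,\bx}$. The crux is a coherence identity:
\[
h(\bx_\bi)=P_\bi\,h(\bx)\quad\text{for every }\bx\in X^k,\ \bi\in[k]^k.
\]
Here the $k$-enhancement is essential: because $R_k^\X=X^k$, every $\bx^{\tensor{k}}$ lies in $R_k^{\X^{\tensor{k}}}$, producing a single vector $q\in\Qconv^{(|A|^k)}$ with $h(\bx_\bi)=P_\bi q$ for all $\bi\in[k]^k$; specialising to $\bi=\id$ forces $q=h(\bx)$. Selecting $\bi$ with coordinates repeated wherever $\bx$ repeats entries then shows that $h(\bx)$ is supported on pattern-respecting tuples of $A^k$; the same argument applied via the $R$-witness yields the analogous support property for each $q_{R,\bx}$. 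I then set $\lambda_V(f):=h(\bx)_{f\circ\bx}$ for any enumeration $\bx\in X^k$ of $V$ (possible since $|V|\leq k$), and $\lambda_{R,\bx}(f):=(q_{R,\bx})_{f(\bx)}$ if $f(\bx)\in R^\A$, else $0$. Well-definedness of $\lambda_V$ follows because any two enumerations of $V$ are related by some $\bi$ via coherence, and the support property collapses the resulting marginalisation to a single term. The four Sherali-Adams constraints are then verified routinely by marginalising $h(\bx)$ or $q_{R,\bx}$ and invoking the support property to absorb the pattern-respecting conditions.

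The main obstacle is the delicate handling of tuples $\bx\in X^k$ with repeated entries: without additional structure, different values $h(\bx)$ or witnesses $q_{R,\bx}$ could be mutually inconsistent, and the $\lambda_V$'s of the reverse direction would fail to be well-defined. The $k$-enhancement is precisely what supplies the coherence and support properties via the auxiliary $R_k$-relation, so that the marginalisation identities in both directions line up across the bijections $f\leftrightarrow f\circ\bx$ and $g\leftrightarrow g(\bx)$ between partial assignments and pattern-respecting tuples.
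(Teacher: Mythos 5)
Your proposal follows essentially the same two-step route as the paper (Propositions~\ref{prop_SAkaccepts_hom_tensor} and~\ref{prop_first_implication_homo_SAk}): your ``coherence identity'' is the paper's Proposition~\ref{lem_a_symmetry} (phrased there via the tensor $\Pi_\bi$), your ``support property'' corresponds to Lemmas~\ref{cor_vanishing_contractions} and~\ref{lem_ea_ast_Q}, and your well-definedness reduction to a single surviving pattern-respecting term matches Lemma~\ref{lem_relabelling_is_fine}.

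There is, however, a small genuine gap in your forward direction. To show that the witness $q$ is stochastic you invoke ``the trivial case of (SA3) at $U=\{\bx\}$,'' giving $\lambda_{\{\bx\}}(f)=\lambda_{R,\bx}(f)$ and then summing via (SA1). But both (SA1) and (SA3) require $|V|\leq k$ (resp.\ $|U|\leq k$); if $\ar(R)>k$, the set $\{\bx\}$ may have more than $k$ elements, in which case $\lambda_{\{\bx\}}$ is not even a variable of the $\SA^k$ system and your step is inapplicable. The theorem (and the paper's proof) are not restricted to signatures of arity at most $k$ — see also Remark~\ref{rem_local_consistency_1}, which explicitly contrasts this with $k$-consistency. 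The paper avoids the issue by marginalising against a singleton: after using (SA4) to pass from a sum over pattern-respecting $\ba\in R^\A$ to a sum over all $g:\{\bx\}\to A$, it computes
\[
\sum_{g:\{\bx\}\to A}\lambda_{R,\bx}(g)=\sum_{a\in A}\sum_{\substack{g:\{\bx\}\to A\\g(x_1)=a}}\lambda_{R,\bx}(g)=\sum_{a\in A}\lambda_{\{x_1\}}(f_{x_1,a})=1
\]
by (SA3) at $U=\{x_1\}$ (always admissible, since $|U|=1$) and (SA1). This is a straightforward repair, and the rest of your argument is sound, but as written that step fails for signatures containing symbols of arity exceeding $k$.
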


It follows from Theorem~\ref{thm:main} that Sherali-Adams acceptance can be cast as the problem \linebreak $\CSP(\bF_{\Qconv}(\A^{\tensor{k}}))$ (applied to the instance $\X^{\tensor{k}}$).
Hence, one can investigate the features of the Sherali-Adams hierarchy by exploring the free structure $\bF_{\Qconv}(\A^{\tensor{k}})$ -- which in this story takes the role of the main character. The domain of $\mathbb{F}_{\Qconv}(\A^{\tensor{k}})$ is $\Qconv^{(n^k)}$, which we visualise as the set of nonnegative tensors in $\cT^{k;n}(\Q)$ whose entries sum up to $1$ (as usual, we are supposing $A=[n]$). Let us now describe the relations in $\mathbb{F}_{\Qconv}(\A^{\tensor{k}})$. Take $R\in \sigma$ of arity $r$, and consider a block tensor $M\in\cT^{k;r}(\cT^{k;n}(\Q))=\cT^{k;rn}(\Q)$ whose $\bi$-th block we denote by $M_\bi$, for $\bi\in [r]^k$.
We have that $M\in R^{\mathbb{F}_{\Qconv}(\A^{\tensor{k}})}$ if and only if there exists 
$
q\in\Qconv^{(m)}
$
(where $m=\big|R^\A\big|=\big|R^{\A^\tensor{k}}\big|$, cf.~Footnote~\ref{footnote_RA_RAk})
such that $M_\bi=q_{/\pi_\bi}$ for each $\bi\in [r]^k$, where
$
\pi_\bi:R^\A\to {A}^k$ is the function defined by
$\ba\mapsto \ba_\bi
$ for $\ba\in R^\A$.
Observe that we can write $q_{/\pi_\bi}$ as $q_{/\pi_\bi}=P_\bi\ast q$, where
$
P_\bi\in \cT^{\begin{scriptsize}\underbrace{n\times\dots\times n}_k\end{scriptsize}\,\times\,m}(\{0,1\})
$
is the tensor defined by 
\begin{align}
\label{defn_tensor_Pi}
E_\ba\ast P_\bi\ast \be_{\ba'}=
\left\{
\begin{array}{cl}
1 & \mbox{if }\ba'_\bi=\ba\\
0 & \mbox{otherwise}
\end{array}
\right.
 \hspace{1cm}
\forall \ba\in {A}^k, \forall \ba'\in R^\A.
\end{align}
\begin{example}
\label{example_free_structure}
Let us take a closer look at the free structure $\mathbb{F}_{\Qconv}(\K_3^\tensor{3})$ (which we denote by $\textbf{F}$ in this example) generated by the structure $\K_3^\tensor{3}$ described in Example~\ref{example_tensorisation_1406}.
The domain of $\textbf{F}$ is the set of nonnegative tensors in $\mathcal{T}^{3,3}(\Q)$ whose entries sum up to $1$. The relation $R^{\textbf{F}}$ is the set of those tensors $M\in\mathcal{T}^{3;2}(\mathcal{T}^{3;3}(\Q))=\mathcal{T}^{3;6}(\Q)$ such that there exists a stochastic vector $q=(q_1,\dots,q_6)\in\Qconv^{(6)}$ (which should be interpreted as a probability distribution over the elements of $R^{\K_3}$) for which the $\bi$-th block $M_\bi$ of $M$ satisfies $M_\bi=q_{/\pi_\bi}$ for any $\bi\in [2]^3$. As $q_{/\pi_\bi}=P_\bi\ast q$, making use of~\eqref{defn_tensor_Pi} or Lemma~\ref{lem_multiplication_rule_P}, we find that, for example,
\begin{small}
\begin{align*}
M_{(1,1,1)}&=
\left[\begin{array}{@{}ccc|ccc|ccc@{}}
q_1+q_6&0&0&0&0&0&0&0&0\\
0&0&0&0&q_2+q_3&0&0&0&0\\
0&0&0&0&0&0&0&0&q_4+q_5
\end{array}\right]\!\!,\;\;
M_{(2,1,2)}=
\left[\begin{array}{@{}ccc|ccc|ccc@{}}
0&0&0       &0&q_1&0      &0&0&q_6\\
q_2&0&0     &0&0&0        &0&0&q_3\\
q_5&0&0     &0&q_4&0      &0&0&0
\end{array}\right]\!\!.
\end{align*}
\end{small}
Figure~\ref{fig_hypermatrix} is a visual representation of the tensor $M\in R^{\textbf{F}}$ corresponding to the uniform distribution $q=\frac{1}{6}\bone_6$.
\end{example}
Example~\ref{example_free_structure} shows that $\mathbb{F}_{\Qconv}(\A^\tensor{k})$ has a rich geometric description. In particular, Figure~\ref{fig_hypermatrix} suggests that the tensors in the relations of such a structure are typically rather sparse, and the few nonzero entries form a regular pattern. This feature becomes more evident for higher values of $k$.
Lemmas~\ref{lem_multiplication_rule_P} through~\ref{prop_correspondence_morphisms_tensor_structures}
in Section~\ref{sec_tensorisation} give more insight into the tensorisation construction and, specifically, the geometry of $\mathbb{F}_{\Qconv}(\A^\tensor{k})$. They are used to prove Theorem~\ref{thm:main} and all other results in this work.

In turn, the geometry of $\mathbb{F}_{\Qconv}(\A^\tensor{k})$ is reflected in the properties of homomorphisms from $\X^{\tensor{k}}$ to $\bF_{\Qconv}(\A^{\tensor{k}})$ -- which, by virtue of Theorem~\ref{thm:main}, correspond to Sherali-Adams solutions. 
Specifically, if $\X$ and $\A$ are $k$-enhanced, any such homomorphism $\xi$ must satisfy certain symmetries
that ultimately depend on the fact that $\xi$ preserves $R_k$. As shown below in Proposition~\ref{lem_a_symmetry} (proved in Section~\ref{sec_tensorisation}), these symmetries can be concisely expressed through a tensor equation. Given a tuple $\bi\in [k]^k$, we let $\Pi_\bi\in \cT^{2k;n}(\{0,1\})$ be the tensor defined by 
\begin{align}
\label{defn_tensor_Pi_i}
E_\ba\ast \Pi_\bi\ast E_{\ba'}=
\left\{
\begin{array}{ll}
1 & \mbox{if }\ba'_\bi=\ba\\
0 & \mbox{otherwise}
\end{array}
\right.
\hspace{1cm}
\forall \ba,\ba'\in {A}^k.
\end{align}
\begin{prop}
\label{lem_a_symmetry}
Let $k\in\N$, let $\X,\A$ be two $k$-enhanced $\sigma$-structures, and let $\xi:X^k\to\Qconv^{(n^k)}$ be a map. Then $\xi$ preserves $R_k$ if and only if 
\begin{align}
\label{eqn_consistency}
\xi(\bx_\bi)=\Pi_\bi\ast \xi(\bx)
\end{align}
for any $\bx\in X^k$, $\bi\in [k]^k$.
\end{prop}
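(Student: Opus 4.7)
The plan is to recognise~\eqref{eqn_consistency} as the direct unfolding of the defining condition for $R_k^{\bF_{\Qconv}(\A^\tensor{k})}$ in the $k$-enhanced setting, where the identity projection available through the symbol $R_k$ pins down the free-structure witness uniquely.

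A first preliminary step I would carry out is to identify the tensor $\Pi_\bi$ of~\eqref{defn_tensor_Pi_i} with the tensor $P_\bi$ of~\eqref{defn_tensor_Pi} specialised to $R=R_k$. Since $\A$ is $k$-enhanced, $R_k^\A=A^k$, so the last mode of $P_\bi$ has size $n^k$; reshaping it as $k$ modes of size $n$ turns the defining condition $E_\ba\ast P_\bi\ast\be_{\ba'}=1\iff\ba'_\bi=\ba$ into $E_\ba\ast\Pi_\bi\ast E_{\ba'}=1\iff\ba'_\bi=\ba$. Consequently, for any $q\in\Qconv^{(n^k)}$ and any $\bi\in[k]^k$ we have $q_{/\pi_\bi}=P_\bi\ast q=\Pi_\bi\ast q$, bridging the abstract minor on the free-structure side with the tensor contraction on the right-hand side of~\eqref{eqn_consistency}.

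For the forward direction, I would use that the $k$-enhancement of $\X$ gives $R_k^{\X^\tensor{k}}=\{\bx^\tensor{k}:\bx\in X^k\}$, so preservation of $R_k$ by $\xi$ amounts to requiring, for every $\bx\in X^k$, the existence of some $q\in\Qconv^{(n^k)}$ with $\xi(\bx_\bi)=q_{/\pi_\bi}$ for all $\bi\in[k]^k$. Specialising to $\bi=(1,2,\ldots,k)$ makes $\pi_\bi$ the identity on $A^k$, so the minion axiom $q_{/\id}=q$ forces $q=\xi(\bx)$; substituting this back into the equation for a general $\bi$ then yields~\eqref{eqn_consistency}. For the converse, assuming~\eqref{eqn_consistency} holds for every $\bx$ and $\bi$, I would simply let $q\coloneqq\xi(\bx)$ play the role of witness: by the identification above, $q_{/\pi_\bi}=\Pi_\bi\ast\xi(\bx)=\xi(\bx_\bi)$ for every $\bi$, as required.

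No serious obstacle is anticipated; the proposition is essentially a rigidity statement, flowing from the fact that the $k$-enhancement of the signature forces the free-structure witness to be $\xi(\bx)$ itself, so that all the other projection constraints degenerate into the symmetries~\eqref{eqn_consistency}. The only genuine point of care is the notational reshaping between $P_\bi$ and $\Pi_\bi$, which hinges on the equality $|R_k^\A|=n^k$.
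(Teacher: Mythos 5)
Your proposal is correct and follows essentially the same route as the paper's own proof: both hinge on the observation that in the $k$-enhanced setting $\Pi_\bi$ is just $P_\bi$ for $R=R_k$ (after reshaping the last mode of size $n^k$ into $k$ modes of size $n$), and both use the projection along $\bi'=(1,\ldots,k)$ — for which $\pi_{\bi'}=\id_{A^k}$ — to identify the free-structure witness $q$ with $\xi(\bx)$ itself. The paper performs the reshaping implicitly, by introducing an auxiliary cubical tensor $Q$ and verifying the identity $\xi(\bx_\bi)=\Pi_\bi\ast Q$ entrywise via Lemmas~\ref{lem_multiplication_rule_P} and~\ref{lem_multiplication_rule}, whereas you make the reshaping explicit upfront; the resulting argument is the same.
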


\begin{wrapfigure}{r}{7.75cm}
\begin{tikzpicture}
\begin{scope}[3d view={110}{15},local bounding box=C,scale=.775]
\begin{scope}[shift={(0,0,0)}]
  \foreach \x in {0,...,2}
    \foreach \y in {0,...,2} 
    	\foreach \z in {0,...,2}
    		\cube{0}{\opacityDefault}{\x}{\y}{\z};
\cube{0}{.9}{0}{0}{0};
\cube{0}{.9}{1}{1}{1};
\cube{0}{.9}{2}{2}{2};
\end{scope}

%
%
%
%
\begin{scope}[shift={(0,\bigShift,0)}]
  \foreach \x in {0,...,2}
    \foreach \y in {0,...,2} 
    	\foreach \z in {0,...,2}
    		\cube{0}{\opacityDefault}{\x}{\y}{\z};
\cube{0}{.4}{2}{2}{1};  
\cube{0}{.4}{2}{2}{0};  
\cube{0}{.4}{1}{1}{2};	
\cube{0}{.4}{1}{1}{0};
\cube{0}{.4}{0}{0}{2};	
\cube{0}{.4}{0}{0}{1};
\end{scope}
%
%
%
%
%
\begin{scope}[shift={(0,0,-\bigShift)}]
  \foreach \x in {0,...,2}
    \foreach \y in {0,...,2} 
    	\foreach \z in {0,...,2}
    		\cube{0}{\opacityDefault}{\x}{\y}{\z};
\cube{0}{.4}{2}{1}{2};  
\cube{0}{.4}{2}{0}{2};  
\cube{0}{.4}{1}{2}{1};	
\cube{0}{.4}{1}{0}{1};
\cube{0}{.4}{0}{2}{0};	
\cube{0}{.4}{0}{1}{0};
\end{scope}
%
%
%
%
%
%
\begin{scope}[shift={(0,\bigShift,-\bigShift)}]
  \foreach \x in {0,...,2}
    \foreach \y in {0,...,2} 
    	\foreach \z in {0,...,2}
    		\cube{0}{\opacityDefault}{\x}{\y}{\z};
\cube{0}{.4}{2}{1}{1};  
\cube{0}{.4}{2}{0}{0};  
\cube{0}{.4}{1}{2}{2};	
\cube{0}{.4}{1}{0}{0};
\cube{0}{.4}{0}{2}{2};	
\cube{0}{.4}{0}{1}{1};
\end{scope}
%
%
%
%
%
\begin{scope}[shift={(-\bigShift-2,0,0)}]
  \foreach \x in {0,...,2}
    \foreach \y in {0,...,2} 
    	\foreach \z in {0,...,2}
    		\cube{0}{\opacityDefault}{\x}{\y}{\z};
\cube{0}{.4}{2}{1}{1};  
\cube{0}{.4}{2}{0}{0};  
\cube{0}{.4}{1}{2}{2};	
\cube{0}{.4}{1}{0}{0};
\cube{0}{.4}{0}{2}{2};	
\cube{0}{.4}{0}{1}{1};
\end{scope}
%
%
%
%
%
\begin{scope}[shift={(-\bigShift-2,\bigShift,0)}]
  \foreach \x in {0,...,2}
    \foreach \y in {0,...,2} 
    	\foreach \z in {0,...,2}
    		\cube{0}{\opacityDefault}{\x}{\y}{\z};
\cube{0}{.4}{2}{1}{2};  
\cube{0}{.4}{2}{0}{2};  
\cube{0}{.4}{1}{2}{1};	
\cube{0}{.4}{1}{0}{1};
\cube{0}{.4}{0}{2}{0};	
\cube{0}{.4}{0}{1}{0};
\end{scope}
%
%
%
%
\begin{scope}[shift={(-\bigShift-2,0,-\bigShift)}]
  \foreach \x in {0,...,2}
    \foreach \y in {0,...,2} 
    	\foreach \z in {0,...,2}
    		\cube{0}{\opacityDefault}{\x}{\y}{\z};
\cube{0}{.4}{2}{2}{1};  
\cube{0}{.4}{2}{2}{0};  
\cube{0}{.4}{1}{1}{2};	
\cube{0}{.4}{1}{1}{0};
\cube{0}{.4}{0}{0}{2};	
\cube{0}{.4}{0}{0}{1};
\end{scope}
%
%
%
%
\begin{scope}[shift={(-\bigShift-2,\bigShift,-\bigShift)}]
  \foreach \x in {0,...,2}
    \foreach \y in {0,...,2} 
    	\foreach \z in {0,...,2}
    		\cube{0}{\opacityDefault}{\x}{\y}{\z};
\cube{0}{.9}{0}{0}{0};
\cube{0}{.9}{1}{1}{1};
\cube{0}{.9}{2}{2}{2};
\end{scope}

\end{scope}

\end{tikzpicture}
\caption{A tensor $M\in R^{\textbf{F}}$ from Example~\ref{example_free_structure}, corresponding to the uniform distribution on the set of edges of $\K_3$. The opacity of a cell is proportional to the value of the corresponding entry:\hspace{.5cm} 
{\protect\begin{tikzpicture}[baseline=-1.2ex] \begin{scope}[3d view={110}{15},local bounding box=C,scale=.4]\protect\cube{0}{.9}{0}{0}{0};\end{scope}\end{tikzpicture}} = $\frac{1}{3}$,\;\;
{\protect\begin{tikzpicture}[baseline=-1.2ex] \begin{scope}[3d view={110}{15},local bounding box=C,scale=.4]\protect\cube{0}{.4}{0}{0}{0};\end{scope}\end{tikzpicture}} = $\frac{1}{6}$,\;\;
{\protect\begin{tikzpicture}[baseline=-1.2ex] \begin{scope}[3d view={110}{15},local bounding box=C,scale=.4]\protect\cube{0}{.05}{0}{0}{0};\end{scope}\end{tikzpicture}} = $0$.
}
\label{fig_hypermatrix}
\end{wrapfigure}

Equation~\eqref{eqn_consistency} -- which shall be called the \emph{consistency equation} -- is the multilinear translation of the requirement (SA2) defining the $k$-th level of Sherali-Adams. Observe that, for $k=1$, the equation is vacuous, since in this case $\Pi_\bi$ is the identity matrix of order $n$ and $\bx_\bi=\bx$. As $k$ increases, it produces a progressively richer system of symmetries that must be satisfied by $\xi$, which corresponds -- by virtue of Theorem~\ref{thm:main} -- to a progressively stronger relaxation. 
Concretely, we shall often use the consistency equation as a tool to check that a map preserves $R_k$ and, thus, can be a homomorphism.

One consequence of Theorem~\ref{thm:main}, discussed in  Section~\ref{sec_characterisation_acceptance}, is that known \emph{algorithmic} facts about the Sherali-Adams hierarchy can be revisited as \emph{geometric} properties of the tensorisation construction. In more detail, Propositions~\ref{prop_f_q_a_k} and~\ref{prop_sherali_adams_smaller_level} provide a geometric view of the fact that the Sherali-Adams hierarchy becomes more powerful as the levels increase, Proposition~\ref{lem_partial_homo} corresponds to the fact that Sherali-Adams solutions yield partial homomorphisms, while Proposition~\ref{prop_sherali_adams_exact} reflects the fact that the $k$-th level of Sherali-Adams correctly classifies instances of size $k$ or less.

The tensorisation construction also yields a method to \emph{count} the homomorphisms between two structures $\X$ and $\A$: Proposition~\ref{cor_number_homomorphisms_from_tensors} shows that the number of such homomorphisms can be read off on the image of a specific tuple under a homomorphism from $\X^\tensor{k}$ to $\mathbb{F}_{\Qconv}(\A^\tensor{k})$ having maximum support, for a suitable $k$. Finally, Proposition~\ref{prop_SA_and_automorphisms} allows to lift the symmetries of $\A$ to symmetries of the solutions of $\SA^k(\X,\A)$; more specifically, it implies that there exists a solution that is invariant under the group of automorphisms of $\A$.

\medskip

The approximate graph colouring problem is believed to be NP-hard~\cite{GJ76}; i.e., for every constant $3\leq c\leq d$, it is believed that $\PCSP(\K_c,\K_d)$ is NP-hard. Since this result seems out of reach of current techniques, we continue the line of work described briefly in Section~\ref{sec:intro} on unconditional lower bounds on small LPs and investigate the question of ruling out the Sherali-Adams hierarchy as an algorithmic tool to solve approximate graph colouring.
The second main goal of this work is to use the multilinear framework outlined above to establish the following result. 

\begin{thm}
\label{main_theorem_approximate_colouring}
No constant level of Sherali-Adams solves the approximate graph colouring problem;
i.e., for any fixed $3\leq c\leq d$, there is no constant $k$ such that the $k$-th level of Sherali-Adams solves $\PCSP(\K_c,\K_d)$.
\end{thm}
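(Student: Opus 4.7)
The overall strategy combines a multilinear construction establishing $\SA^k$-acceptance with target $\K_k$ and the line digraph construction to transfer this to arbitrary targets. I would first prove the key lemma that $\SA^k(\X, \K_k)$ accepts for every loopless, $k$-enhanced digraph $\X$. By Theorem~\ref{thm:main}, this reduces to constructing a homomorphism $\xi \colon \X^{\tensor{k}} \to \bF_{\Qconv}(\K_k^{\tensor{k}})$. For each $\bx = (x_1, \ldots, x_k) \in X^k$, I set $\xi(\bx)$ to be the uniform probability distribution (viewed as a tensor in $\cT^{k;k}(\Q)$) on the set of tuples $\ba \in [k]^k$ satisfying $a_i = a_j$ if and only if $x_i = x_j$---i.e., on the tuples arising from injective colourings of the distinct entries of $\bx$. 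Since at most $k$ distinct vertices appear in $\bx$, this set is nonempty, so $\xi(\bx)$ is a well-defined stochastic tensor.

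Two verifications are then required. Preservation of the relation $R_k$ reduces, via Proposition~\ref{lem_a_symmetry}, to the consistency equation $\xi(\bx_\bi) = \Pi_\bi \ast \xi(\bx)$; both sides, evaluated at a target $\ba$, vanish unless $\ba$ matches the equality pattern of $\bx_\bi$, and otherwise both equal $(k-m')!/k!$, where $m' = |\{x_{i_1}, \ldots, x_{i_k}\}|$, since each injective colouring of $\{x_1, \ldots, x_k\}$ that restricts to $\ba$ on coordinates $\bi$ is obtained by freely choosing values on the remaining $m - m'$ vertices from the $k - m'$ unused colours. For edge preservation, given an edge $\ba = (x, y) \in R^\X$, I would witness membership of the block tensor $[\xi(\ba_\bi)]_{\bi \in [2]^k}$ in $R^{\bF_{\Qconv}(\K_k^{\tensor{k}})}$ by taking $q \in \Qconv^{(k(k-1))}$ to be the uniform distribution on $R^{\K_k}$: for each $\bi \in [2]^k$, both $q_{/\pi_\bi}$ and $\xi(\ba_\bi)$ are uniform on the same set of image tuples of $\pi_\bi$.

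To conclude, I would combine the key lemma with the fact, to be established separately, that the line digraph functor $\delta$ preserves $\SA^k$-acceptance, and with the Poljak--R\"odl-type bound $\chi(\delta(G)) = \lceil \log_2 \chi(G) \rceil$, which guarantees $\delta^t(\K_k) \to \K_c$ for some constant $t = t(k, c)$. Fix $3 \leq c \leq d$ and a constant $k$, and use Erd\H{o}s's probabilistic construction to select a loopless digraph $\Y_0$ whose chromatic number exceeds the $t$-fold iterated exponential of $d$. The key lemma yields acceptance of $\SA^k(\Y_0, \K_k)$; iterated $\delta$-preservation transports this to acceptance of $\SA^k(\delta^t(\Y_0), \delta^t(\K_k))$; and composing with $\delta^t(\K_k) \to \K_c$ gives acceptance of $\SA^k(\delta^t(\Y_0), \K_c)$. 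Simultaneously, $\chi(\delta^t(\Y_0)) > d$ by the chromatic-number bound under $\delta$, so $\delta^t(\Y_0) \not\to \K_d$, and this instance refutes the solvability of $\PCSP(\K_c, \K_d)$ by $\SA^k$.

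The main obstacle is verifying $\delta$-preservation of $\SA^k$-acceptance at arbitrary levels: via Theorem~\ref{thm:main} this amounts to lifting a homomorphism $\X^{\tensor{k}} \to \bF_{\Qconv}(\A^{\tensor{k}})$ to one $\delta(\X)^{\tensor{k}} \to \bF_{\Qconv}(\delta(\A)^{\tensor{k}})$. Since vertices of $\delta(\X)$ are arcs of $\X$, a $k$-tuple of $\delta(\X)$-vertices induces two $k$-tuples of $\X$-vertices (the sources and the targets), and the natural candidate lift assigns to such a $\delta(\X)$-tuple a tensor built from the original homomorphism's values on these endpoint $k$-tuples. The technical work lies in checking that the consistency equation and the edge relations of $\delta(\A)^{\tensor{k}}$ are respected under this translation, which appears cleanest using the tensorisation machinery of Section~\ref{sec_tensorisation}.
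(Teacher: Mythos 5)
Your overall strategy matches the paper's: prove that $\SA^k(\X,\K_k)$ always accepts via the uniform distribution on injective colourings of the distinct entries of each tuple, show that the line digraph construction preserves Sherali--Adams acceptance, and then iterate the line digraph construction to transfer the result to arbitrary targets $\K_c$. The construction of the homomorphism witnessing $\SA^k(\X,\K_k)$-acceptance is exactly Proposition~4 of the paper, and your verification sketch is correct.

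However, there is a genuine gap in the final bookkeeping. You assert that $\delta$ preserves $\SA^k$-acceptance at the \emph{same level} $k$, and then deduce that $\SA^k(\Y_0,\K_k)$ accepting implies $\SA^k(\delta^t\Y_0,\delta^t\K_k)$ accepts. But this is not what the line digraph lemma gives: since each vertex of $\delta\X$ is an arc (i.e., a pair of vertices) of $\X$, a $k$-tuple of $\delta\X$-vertices encodes a $2k$-tuple of $\X$-vertices, so the natural lift of a Sherali--Adams solution requires the original structure to satisfy level $2k$, not level $k$. The correct statement (Proposition~5 in the paper) is that $\SA^{2k}(\X,\A)$ accepting implies $\SA^k(\delta\X,\delta\A)$ accepting --- each application of $\delta$ \emph{halves} the level. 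Iterating $t$ times from $\SA^k(\Y_0,\K_k)$ would therefore only give $\SA^{k/2^t}$-acceptance, which is vacuous as soon as $t>\log_2 k$. And you cannot simply start from $\SA^{k\cdot 2^t}(\Y_0,\K_k)$ either, since the key lemma only gives acceptance at level $p$ for the target $\K_p$ (and at lower levels by monotonicity), so a level higher than $k$ for target $\K_k$ is not available.

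The fix --- and this is the paper's crucial observation --- is to choose the \emph{starting clique} much larger than $\K_k$, namely $\K_{b^{(i)}(c)}$ where $b(p)=\binom{p}{\lfloor p/2\rfloor}$ and $i$ is picked so that $b^{(i)}(c)\geq k\cdot 2^i$. This is possible because $b^{(i)}(c)$ grows as a tower (doubly exponentially in $i$) while $k\cdot 2^i$ grows only singly exponentially. Then $\SA^{k\cdot 2^i}(\X,\K_{b^{(i)}(c)})$ accepts by the key lemma plus monotonicity in the level, $i$ applications of the halving lemma give $\SA^k(\delta^{(i)}\X,\delta^{(i)}\K_{b^{(i)}(c)})$ accepts, and $\delta^{(i)}\K_{b^{(i)}(c)}\to\K_c$ by the chromatic bound. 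Two minor further points: the paper simply takes $\X=\K_{a^{(i)}(d)+1}$ rather than invoking Erd\H{o}s's probabilistic construction, which is cleaner; and the Harner--Entringer bounds only give the two one-sided implications of Theorem~13, not an exact formula $\chi(\delta G)=\lceil\log_2\chi(G)\rceil$ (the upper bound goes through $\binom{p}{\lfloor p/2\rfloor}$, which is slightly less than $2^p$).
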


The crucial tool to prove Theorem~\ref{main_theorem_approximate_colouring} is the next proposition, which shows that the $k$-th level of the hierarchy applied to the clique of order $k$ accepts for any loopless input digraph $\X$. 
\begin{prop}
\label{prop_SA_cliques}
Let $\X$ be a loopless digraph and let $k\in\N$, $k\geq 2$. Then $\SA^k(\X,\K_k)$ accepts.
\end{prop}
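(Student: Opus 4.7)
The plan is to apply Theorem~\ref{thm:main}, which—after $k$-enhancing both structures, a harmless modification as noted just before that theorem—reduces the claim to exhibiting a homomorphism $\xi\colon\X^{\tensor{k}}\to\bF_{\Qconv}(\K_k^{\tensor{k}})$. For $\bx\in X^k$, let $t=|\{\bx\}|\leq k$ and take
\[
\xi(\bx)\;=\;\frac{(k-t)!}{k!}\sum_{\iota}E_{(\iota(x_1),\ldots,\iota(x_k))},
\]
where the sum ranges over injections $\iota\colon\{\bx\}\to[k]$. Equivalently, $\xi(\bx)_{\ba}=(k-t)!/k!$ whenever the prescription $x_j\mapsto a_j$ defines an injection $\{\bx\}\to[k]$, and $0$ otherwise; in particular, $\xi(\bx)$ is a stochastic tensor. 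Morally, $\xi(\bx)$ is the distribution of a uniformly random injective $k$-colouring of $\{\bx\}$.

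The first thing to verify is preservation of $R_k$, which by Proposition~\ref{lem_a_symmetry} amounts to the consistency equation $\xi(\bx_{\bi})=\Pi_{\bi}\ast\xi(\bx)$ for every $\bi\in[k]^k$. Evaluated at $\ba\in[k]^k$, the right-hand side sums $\xi(\bx)_{\ba'}$ over all $\ba'$ with $\ba'_{\bi}=\ba$, which equals $(k-t)!/k!$ times the number of injections $\iota\colon\{\bx\}\to[k]$ satisfying $\iota(x_{i_j})=a_j$. If the prescription $x_{i_j}\mapsto a_j$ fails to define an injection on $\{\bx_{\bi}\}$ the count vanishes, matching $\xi(\bx_{\bi})_{\ba}=0$; otherwise, writing $t'=|\{\bx_{\bi}\}|$, the count equals $(k-t')!/(k-t)!$, yielding $(k-t')!/k!=\xi(\bx_{\bi})_{\ba}$, as required.

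The second thing is preservation of the edge relation. For $\bx=(x_1,x_2)\in R^{\X}$, looplessness gives $x_1\neq x_2$, so $|\{\bx_{\bi}\}|\in\{1,2\}$ for every $\bi\in[2]^k$. As the certificate $q\in\Qconv^{(k(k-1))}$ I would use the uniform distribution on $R^{\K_k}$; unfolding~\eqref{defn_tensor_Pi}, $(P_{\bi}\ast q)_{\ba}$ equals $1/(k(k-1))$ times the number of $(a_1',a_2')\in R^{\K_k}$ with $a'_{i_j}=a_j$ for all $j$. A short case analysis on whether $\bi$ is constant matches this against $\xi(\bx_{\bi})_{\ba}$: for non-constant $\bi$ both sides give $1/(k(k-1))$ on exactly those $\ba$ for which $x_{i_j}\mapsto a_j$ is a well-defined injection of $\{x_1,x_2\}$ into $[k]$, and for constant $\bi$ both sides place weight $1/k$ uniformly on the constant tuples of $[k]^k$. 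The real content of the argument is identifying the right $\xi$; once it is in hand, everything reduces to elementary enumeration of injections and of pairs in $R^{\K_k}$, the only thing to track being the compatibility of the normalisation factors appearing in $\xi$, in $\Pi_{\bi}\ast\xi(\bx)$, and in $P_{\bi}\ast q$.
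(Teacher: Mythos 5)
Your proof is correct and takes essentially the same approach as the paper: you construct the same map (your $\xi$ is the paper's $g$, with your ``injective prescription'' condition coinciding with the paper's $\ba\sim\bx$), verify $R_k$ via the consistency equation of Proposition~\ref{lem_a_symmetry}, and verify the edge relation using the uniform distribution on $R^{\K_k}$ as certificate. The only cosmetic difference is that you establish the key counting identity (the paper's Lemma~\ref{lem_involved_1511_1718}) by a direct bijection with extensions of partial injections rather than the paper's induction on $k-|\{\bx\}|$, which is if anything a mild simplification.
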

This result follows naturally from the geometry of the tensors in the free structure generated by the $k$-th tensor power of $\K_k$. Essentially, the symmetries satisfied by these tensors 
are not strong enough to prevent the existence of a homomorphism from any input.
The proof proceeds by exhibiting a map $g:X^k\to\mathcal{T}^{k,k}(\Q)$ and then checking that it preserves the edge relation and it satisfies the consistency equation~\eqref{eqn_consistency}, so that it yields a homomorphism from $\X^\tensor{k}$ to $\mathbb{F}_{\Qconv}(\K_k^\tensor{k})$.\footnote{In fact, $\X$ and $\K_k$ need to be $k$-enhanced first. This is discussed in the full proof of Proposition~\ref{prop_SA_cliques} in Section~\ref{sec:approx}.}
The map $g$ assigns to any element $\bx\in X^k$ a tensor whose nonzero pattern corresponds to the set of injections from $\{\bx\}$ to $[k]$. As an example, for $k=3$ and $x_1,x_2,x_3$ distinct vertices of $\X$, we get\footnote{
Notice that, unlike the other two, the rightmost tensor in~\eqref{blocks_1408_2003} does not appear as a block in an element of $R^{\mathbb{F}_{\Qconv}(\K_3^\tensor{3})}$ (cf.~Example~\ref{example_free_structure} and Figure~\ref{fig_hypermatrix}). This does not prevent $g$ from being a homomorphism, as the tuple $(x_1,x_2,x_3)$ does not appear as an entry of an element of $R^{\X^\tensor{3}}$ (cf.~Example~\ref{example_tensorisation_1406}).
}
\begin{align}
\label{blocks_1408_2003}
\mbox{
$(x_1,x_1,x_1)\mapsto$
\begin{tikzpicture}[baseline=-4ex]
\begin{scope}[3d view={110}{15},local bounding box=C,scale=.45]
\begin{scope}[shift={(0,0,0)}]
  \foreach \x in {0,...,2}
    \foreach \y in {0,...,2} 
    	\foreach \z in {0,...,2}
    		\cube{0}{\opacityDefault}{\x}{\y}{\z};
\cube{0}{.9}{0}{0}{0};
\cube{0}{.9}{1}{1}{1};
\cube{0}{.9}{2}{2}{2};
\end{scope}
\end{scope}
\end{tikzpicture}\;,
\hspace{.5cm}
$(x_3,x_3,x_2)\mapsto$
\begin{tikzpicture}[baseline=-4ex]
\begin{scope}[3d view={110}{15},local bounding box=C,scale=.45]
\begin{scope}[shift={(0,0,0)}]
  \foreach \x in {0,...,2}
    \foreach \y in {0,...,2} 
    	\foreach \z in {0,...,2}
    		\cube{0}{\opacityDefault}{\x}{\y}{\z};
\cube{0}{.4}{2}{2}{1};  
\cube{0}{.4}{2}{2}{0};  
\cube{0}{.4}{1}{1}{2};	
\cube{0}{.4}{1}{1}{0};
\cube{0}{.4}{0}{0}{2};	
\cube{0}{.4}{0}{0}{1};
\end{scope}
\end{scope}
\end{tikzpicture}\;,
\hspace{.5cm}
$(x_1,x_2,x_3)\mapsto$
\begin{tikzpicture}[baseline=-4ex]
\begin{scope}[3d view={110}{15},local bounding box=C,scale=.45]
\begin{scope}[shift={(0,0,0)}]
  \foreach \x in {0,...,2}
    \foreach \y in {0,...,2} 
    	\foreach \z in {0,...,2}
    		\cube{0}{\opacityDefault}{\x}{\y}{\z};
\cube{0}{.4}{2}{1}{0};  
\cube{0}{.4}{2}{0}{1};  
\cube{0}{.4}{1}{2}{0};	
\cube{0}{.4}{1}{0}{2};
\cube{0}{.4}{0}{2}{1};	
\cube{0}{.4}{0}{1}{2};
\end{scope}
\end{scope}
\end{tikzpicture}\;}.
\end{align}

Taking $\X=\K_{d+1}$ in the statement of Proposition~\ref{prop_SA_cliques}, we see that $\SA^k(\K_{d+1},\K_k)$ accepts but, clearly, $\K_{d+1}\not\to\K_d$. This means that $\SA^k$ does not solve $\PCSP(\K_k,\K_d)$. By slightly modifying this argument (cf.~Remark~\ref{cor_SAk_no_solves_approx_graph_color}), one immediately derives from Proposition~\ref{prop_SA_cliques} that the $k$-th level of Sherali-Adams does not solve $\PCSP(\K_c,\K_d)$ whenever $k\leq c$.

In fact, the scope of Proposition~\ref{prop_SA_cliques} is much wider. To see its full power -- and thus to prove Theorem~\ref{main_theorem_approximate_colouring} for any choice of $c$, $d$, and $k$ -- we shall make use of the \emph{line digraph} construction. The line digraph of a digraph $\X$ is the digraph $\delta\X$ whose vertices are the arcs in $\X$ and whose arcs are pairs of consecutive arcs in $\X$. A remarkable property of this construction is that it decreases the chromatic number in a controlled way, roughly logarithmically: It was proved in~\cite{HarnerE72} that, for any natural $p$, $\delta\X\to\K_p$ implies $\X\to\K_{2^p}$, while $\X\to\K_{{p\choose \floor{p/2}}}$ implies $\delta\X\to \K_p$ (we recall that a digraph mapping homomorphically to $\K_p$ is equivalent to the chromatic number of the digraph being less than or equal to $p$). Another interesting feature of the line digraph is that it preserves acceptance by Sherali-Adams, at the only cost of halving the level.
\begin{prop}
\label{prop_reduction_line_digraph}
Let $k\in\N$ with $k\geq 2$, let $\X,\A$ be digraphs, and suppose that $\SA^{2k}(\X,\A)$ accepts. Then $\SA^k(\delta\X,\delta\A)$ accepts.
\end{prop}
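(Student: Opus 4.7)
The plan is to translate both sides of the implication into the tensorised homomorphism language of Theorem~\ref{thm:main}. After $k$-enhancement of all structures (which does not affect Sherali-Adams acceptance), the hypothesis yields a homomorphism $\xi \colon \X^{\tensor{2k}} \to \bF_{\Qconv}(\A^{\tensor{2k}})$, and the goal is to build a homomorphism $\eta \colon (\delta\X)^{\tensor{k}} \to \bF_{\Qconv}((\delta\A)^{\tensor{k}})$. Since a vertex of $\delta\X$ is an arc of $\X$, a $k$-tuple $\bar{e} = ((u_1,v_1),\dots,(u_k,v_k)) \in V(\delta\X)^k$ corresponds naturally to the $2k$-tuple $\bx = (u_1,v_1,\dots,u_k,v_k) \in V(\X)^{2k}$. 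I define $\eta(\bar{e}) \in \cT^{k;|E^\A|}(\Q)$ by declaring its entry at $((\alpha_1,\beta_1),\dots,(\alpha_k,\beta_k)) \in (E^\A)^k$ to equal the $(\alpha_1,\beta_1,\dots,\alpha_k,\beta_k)$-th entry of $\xi(\bx)$; heuristically, $\eta(\bar{e})$ is $\xi(\bx)$ restricted to the ``arc positions'' of its index set.

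The easy verifications rely on combining edge preservation of $\xi$ with the consistency equation~\eqref{eqn_consistency}. To see that $\eta(\bar{e}) \in \Qconv^{(|E^\A|^k)}$, I argue that, for each $j$, edge preservation of $\xi$ at $(u_j,v_j) \in E^\X$ forces $\xi((u_j,v_j,u_j,v_j,\dots))$ to be supported on repeating tuples $(\alpha,\beta,\alpha,\beta,\dots)$ with $(\alpha,\beta) \in E^\A$; via~\eqref{eqn_consistency} this tensor is a projection of $\xi(\bx)$, whence the support of $\xi(\bx)$ itself is contained in the set of $2k$-tuples whose consecutive pairs $(2t-1,2t)$ lie in $E^\A$. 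Preservation of $R_k$ by $\eta$ reduces, via Proposition~\ref{lem_a_symmetry}, to a consistency equation for $\eta$ itself, which transfers from that of $\xi$ under the index translation $\bi \in [k]^k \mapsto \bj \in [2k]^{2k}$ defined by $\bj_{2t-1} = 2i_t - 1$ and $\bj_{2t} = 2i_t$.

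The crux -- and the main obstacle -- is edge preservation of $\eta$. Given $(e,e') \in E^{\delta\X}$ with $e = (u,v)$ and $e' = (v,w)$, I produce the required witness $\tilde{q} \in \Qconv^{(|E^{\delta\A}|)}$ by evaluating $\xi$ on the auxiliary tuple $\bx^* = (u,v,v,w,u,v,v,w,\dots) \in V(\X)^{2k}$. Two support analyses conspire here: (a) edge preservation of $\xi$ at both $(u,v)$ and $(v,w)$ confines the support of $\xi(\bx^*)$ to tuples whose pairs $(2t-1,2t)$ lie in $E^\A$; (b) applying~\eqref{eqn_consistency} with projections like $\bj = (1,1,\dots,1)$ or $\bj = (2,3,2,3,\dots)$, which send $\bx^*$ to constant tuples $(y,y,\dots,y)$ (with $y \in \{u,v,w\}$), combined with the elementary observation that $\xi$ of a constant tuple is supported only on constant tensors, forces all coordinates of $\bx^*$ carrying the same vertex to agree in the support. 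Together these two families of constraints parametrise the support of $\xi(\bx^*)$ precisely by length-$2$ paths $(\alpha,\beta,\gamma)$ in $\A$ -- that is, by elements of $E^{\delta\A}$ -- and reindexing this support yields the desired stochastic tensor $\tilde{q}$. Finally, the projection identity $\eta((e,e')_\bi) = \tilde{q}_{/\tilde{\pi}_\bi}$ for each $\bi \in [2]^k$ follows from consistency of $\xi$ once one observes that the $2k$-tuple corresponding to $(e,e')_\bi$ is itself a projection of $\bx^*$. Applying Theorem~\ref{thm:main} in the reverse direction to $\eta$ then concludes the proof.
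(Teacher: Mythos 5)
Your proof is correct and the overall plan coincides with the paper's: define $\eta$ (the paper's $\vartheta$) by reindexing $\xi$ of the unfolded vertex-tuple, check it lands in $\Qconv^{(m^k)}$ with $m=|R^{\A}|$, check arc-preservation, and check $R_k$-preservation via the consistency equation. The one genuine variation is the witness for arc-preservation. You evaluate $\xi$ on the auxiliary $2k$-tuple $\bx^*=(u,v,v,w,u,v,v,w,\dots)$, constrain the support of $\xi(\bx^*)\in\Qconv^{(n^{2k})}$ by a $\prec$-analysis plus arc-consistency, and read $\tilde q$ off that support; the paper instead adds a ternary relation $R_3$ with $R_3^{\X}=X^3$, $R_3^{\A}=A^3$, so that $(x,y,z)\in R_3^{\X}$ immediately supplies a compact witness $w\in\Qconv^{(n^3)}$, from which $q$ is defined by $\be_{((a,b),(b,c))}\ast q=\be_{(a,b,c)}\ast w$. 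Both routes carry the same information, but the paper's $w$ lives in dimension $n^3$ rather than $n^{2k}$, which makes the verification lighter; your route has the mild advantage of needing only $R_{2k}$-enhancement and no extra ternary relation. Two smaller remarks: the support facts you argue from scratch (constant tuples map to constant-supported tensors; $\xi(\bx)$ vanishes on arc-violating indices) are exactly Lemma~\ref{cor_vanishing_contractions} and Proposition~\ref{lem_partial_homo}, which you could cite to shorten the argument; and you write ``$k$-enhancement of all structures,'' but $\X,\A$ must in fact be $2k$-enhanced for Theorem~\ref{thm:main} to apply at level $2k$ (only $\delta\X,\delta\A$ need $k$-enhancement).
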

The key point here is that taking the line digraph impacts \emph{logarithmically} on the chromatic number, but \emph{linearly} on the Sherali-Adams level. This is the final ingredient we need to prove Theorem~\ref{main_theorem_approximate_colouring}. The idea of the proof (whose full details are presented in Section~\ref{sec:approx}) is the following: If $k\leq c$, the result is a direct consequence of Proposition~\ref{prop_SA_cliques}, as argued above; if $k>c$, we preprocess the input digraphs through the line digraph construction (iterating it, if necessary) until we obtain a situation that is suitable for Proposition~\ref{prop_SA_cliques}, we apply it, and we then use the properties of the tensorisation construction to conclude. 
We illustrate this procedure for the case $k=7$, $c=6$. Suppose, for the sake of contradiction, that $\SA^7$ solves $\PCSP(\K_6,\K_d)$. Consider the graph $\X=\K_{2^d+1}$. Proposition~\ref{prop_SA_cliques} implies that $\SA^{20}(\X,\K_{20})$ accepts, whence it follows that $\SA^{14}(\X,\K_{20})$ accepts (since higher levels of Sherali-Adams are tighter, cf.~Proposition~\ref{prop_sherali_adams_smaller_level}). Proposition~\ref{prop_reduction_line_digraph} then yields that $\SA^7(\delta\X,\delta\K_{20})$ accepts. Theorem~\ref{thm:main} translates\footnote{Technically, to apply Theorem~\ref{thm:main} we need that the structures be $7$-enhanced. We deal with this issue in the full proof of Theorem~\ref{main_theorem_approximate_colouring} in Section~\ref{sec:approx}.} 
 this fact into a homomorphism $(\delta\X)^\tensor{7}\to \mathbb{F}_{\Qconv}((\delta\K_{20})^\tensor{7})$. Observing that ${6\choose \floor{6/2}}=20$, we use the result from~\cite{HarnerE72} to find that $\delta\K_{20}\to\K_6$. The properties of the tensorisation construction then yield $\mathbb{F}_{\Qconv}((\delta\K_{20})^\tensor{7})\to \mathbb{F}_{\Qconv}(\K_6^\tensor{7})$ so that, composing the two homomorphisms, we obtain $(\delta\X)^\tensor{7}\to \mathbb{F}_{\Qconv}(\K_6^\tensor{7})$. Translating this back via Theorem~\ref{thm:main}, we find that $\SA^7(\delta\X,\K_6)$ accepts, which means that $\delta\X\to\K_d$ as we are supposing that $\SA^7$ solves $\PCSP(\K_6,\K_d)$. Again from the result in~\cite{HarnerE72}, it follows that $\X\to\K_{2^d}$, a contradiction.

\section{Tensorisation of a relational structure}
\label{sec_tensorisation}

In this section, we present various technical results on the tensorisation construction described in Section~\ref{sec_extended_abstract}. These results shall be used throughout the rest of the paper. 

We start with a simple identity satisfied by the tensor $P_\bi$ defined in~\eqref{defn_tensor_Pi}.

\begin{lem}
\label{lem_multiplication_rule_P}
Let $k\in\N$, let $\A$ be a $\sigma$-structure, let $R\in\sigma$ of arity $r$, and consider the tuples $\ba\in {A}^k$ and $\bi\in [r]^k$. Then
\begin{align*}
E_\ba\ast P_\bi=\sum_{\substack{\bb\in R^\A\\ \bb_\bi=\ba}}\be_\bb.
\end{align*}
\end{lem}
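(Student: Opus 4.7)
The claimed identity is an equation between two objects in $\cT^{1;m}(\R)$, where $m=|R^\A|$: both sides are vectors indexed by the tuples of $R^\A$. Indeed, the left-hand side is obtained by contracting the $k$ dimensions of the tensor $E_\ba\in\cT^{k;n}(\R)$ against the first $k$ dimensions of $P_\bi\in\cT^{n\times\cdots\times n\times m}(\{0,1\})$, leaving a single $m$-dimensional axis; the right-hand side is manifestly a vector in $\cT^{1;m}(\R)$. My plan, therefore, is simply to check that the two sides agree entry by entry.

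To extract the $\ba'$-th entry of either side I would contract against $\be_{\ba'}$. For the left-hand side, associativity of tensor contraction gives $(E_\ba\ast P_\bi)\ast\be_{\ba'}=E_\ba\ast P_\bi\ast\be_{\ba'}$, and the defining equation~\eqref{defn_tensor_Pi} of $P_\bi$ tells us that this quantity equals $1$ when $\ba'_\bi=\ba$ and $0$ otherwise. For the right-hand side, the $\ba'$-th entry of $\sum_{\bb\in R^\A,\,\bb_\bi=\ba}\be_\bb$ is, by orthogonality of the standard unit vectors, the number of $\bb\in R^\A$ with $\bb_\bi=\ba$ and $\bb=\ba'$; this number is $1$ if $\ba'\in R^\A$ satisfies $\ba'_\bi=\ba$ and $0$ otherwise. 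The two entries match, so the vectors are equal.

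There is essentially no obstacle: the statement is a direct unpacking of the definition~\eqref{defn_tensor_Pi} of $P_\bi$, reformulated as an identity of vectors rather than an equation for scalar entries. The only point requiring (a trivial amount of) care is the use of associativity of contraction in order to move parentheses around in the expression $E_\ba\ast P_\bi\ast\be_{\ba'}$, which is justified because each of $E_\ba$ and $\be_{\ba'}$ is contracted against a disjoint set of axes of $P_\bi$.
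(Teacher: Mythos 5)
Your proposal is correct and follows essentially the same route as the paper: both compute the $\ba'$-th entry of each side by contracting against $\be_{\ba'}$, unpack the definition~\eqref{defn_tensor_Pi} on the left and use orthogonality of the $\be_\bb$ on the right, and observe that both entries equal $1$ precisely when $\ba'_\bi=\ba$. The paper's proof is just a displayed one-line chain of equalities that does exactly this.
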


\begin{proof}
For any $\ba'\in R^\A$, we have 
\begin{align*}
\left(\sum_{\substack{\bb\in R^\A\\ \bb_\bi=\ba}}\be_\bb\right)\ast \be_{\ba'}
=
\sum_{\substack{\bb\in R^\A\\ \bb_\bi=\ba}}(\be_\bb\ast \be_{\ba'})
=
\sum_{\substack{\bb\in R^\A\\ \bb_\bi=\ba\\ \bb=\ba'}}1
=
\left\{
\begin{array}{cl}
1 & \mbox{if }\ba'_\bi=\ba\\
0 & \mbox{otherwise}
\end{array}
\right.
=
(E_\ba\ast P_\bi)\ast\be_{\ba'},
\end{align*}
from which the result follows.
\end{proof}

Given two sets $S,T$ and two tuples $\textbf{s}=(s_1,\dots,s_k)\in S^k$, $\textbf{t}=(t_1,\dots,t_k)\in T^k$, we write $\textbf{s}\prec\textbf{t}$ if, for any $\alpha,\beta\in [k]$, $s_\alpha=s_\beta$ implies $t_\alpha=t_\beta$. We write $\textbf{s}\sim\textbf{t}$ to indicate that $\textbf{s}\prec\textbf{t}$ and $\textbf{t}\prec\textbf{s}$, while $\textbf{s}\not\prec\textbf{t}$ (resp. $\textbf{s}\not\sim\textbf{t}$) shall mean the negation of $\textbf{s}\prec\textbf{t}$ (resp. $\textbf{s}\sim\textbf{t}$). 

The next lemma shows that certain entries of a tensor in the relation $R^{\mathbb{F}_{\Qconv}(\A^{\tensor{k}})}$ (i.e., the interpretation of $R$ in the free structure of $\Qconv$ generated by $\A^\tensor{k}$) need to be zero.
\begin{lem}
\label{lem_vanishing_free_structure_0510}
Let $k\in \N$, let $\A$ be a $\sigma$-structure, let $R\in \sigma$ of arity $r$, and suppose $M=[M_\bi]_{\bi\in [r]^k}\in R^{\mathbb{F}_{\Qconv}(\A^{\tensor{k}})}$. Then $E_\ba\ast M_\bi=0$ for any $\bi\in [r]^k$, $\ba\in {A}^k$ such that $\bi\not\prec\ba$.
\end{lem}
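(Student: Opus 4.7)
The plan is to unfold the definition of the relation $R^{\mathbb{F}_{\Qconv}(\A^\tensor{k})}$ and then apply Lemma~\ref{lem_multiplication_rule_P} to reduce the statement to a purely combinatorial observation about the relation $\prec$.

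First I would use the assumption $M\in R^{\mathbb{F}_{\Qconv}(\A^\tensor{k})}$ to obtain a stochastic vector $q\in\Qconv^{(m)}$ (with $m=|R^\A|$) such that $M_\bi=q_{/\pi_\bi}=P_\bi\ast q$ for every $\bi\in[r]^k$, as described immediately before equation~\eqref{defn_tensor_Pi}. Then, fixing $\bi\in[r]^k$ and $\ba\in A^k$ with $\bi\not\prec\ba$, I would compute
\begin{align*}
E_\ba\ast M_\bi\;=\;E_\ba\ast P_\bi\ast q\;=\;\biggl(\sum_{\substack{\bb\in R^\A\\ \bb_\bi=\ba}}\be_\bb\biggr)\ast q\;=\;\sum_{\substack{\bb\in R^\A\\ \bb_\bi=\ba}}q_\bb,
\end{align*}
where the middle equality is Lemma~\ref{lem_multiplication_rule_P}.

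The remaining (and only) step is to observe that the index set of this sum is empty. Indeed, if $\bb\in R^\A$ satisfied $\bb_\bi=\ba$, then for any $\alpha,\beta\in[k]$ with $i_\alpha=i_\beta$ we would have $a_\alpha=b_{i_\alpha}=b_{i_\beta}=a_\beta$, so $\bi\prec\ba$, contradicting the hypothesis. Hence the sum is empty and $E_\ba\ast M_\bi=0$, as required.

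The argument is essentially bookkeeping; the only slightly delicate point is lining up the definitions of the relation in the free structure, the tensor $P_\bi$, and the order $\prec$ on tuples, but none of these steps should present a real obstacle.
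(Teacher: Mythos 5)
Your proof is correct and follows exactly the same route as the paper's: unfold the free-structure definition to obtain $q$, apply Lemma~\ref{lem_multiplication_rule_P} to rewrite $E_\ba\ast M_\bi$ as a sum over $\bb\in R^\A$ with $\bb_\bi=\ba$, and observe that this index set is empty because $\bb_\bi=\ba$ would force $\bi\prec\ba$. No differences worth noting.
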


\begin{proof}
Observe that there exists $q\in\Qconv^{(|R^\A|)}$ such that $M_\bi=q_{/\pi_\bi}$ for each $\bi\in [r]^k$. Using Lemma~\ref{lem_multiplication_rule_P}, we obtain
\begin{align*}
E_\ba\ast M_\bi 
&=
E_\ba\ast q_{/\pi_\bi}
=
E_\ba\ast P_\bi\ast q
=
\sum_{\substack{\bb\in R^\A\\ \bb_\bi=\ba}}\be_\bb \ast q
=
0,
\end{align*}
where the last equality follows from the fact that $\bb_\bi=\ba$ implies $\bi\prec\ba$; indeed, in that case, $i_\alpha=i_\beta$ implies $a_\alpha=b_{i_\alpha}=b_{i_\beta}=a_\beta$.
\end{proof}

The next result follows directly from Lemma~\ref{lem_vanishing_free_structure_0510}.
\begin{lem}
\label{cor_vanishing_contractions}
Let $k\in\N$, let $\X,\A$ be two $k$-enhanced $\sigma$-structures, and let $\xi:\X^{\tensor{k}}\to\mathbb{F}_{\Qconv}(\A^{\tensor{k}})$ be a homomorphism. Then $E_\ba\ast \xi(\bx)=0$ for any $\bx\in X^k$, $\ba\in A^k$ such that $\bx\not\prec\ba$.  
\end{lem}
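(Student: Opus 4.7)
The plan is to apply Lemma~\ref{lem_vanishing_free_structure_0510} directly, using the $k$-enhancement of $\X$ to feed in any $\bx \in X^k$ via the relation $R_k$, and then choose a projection index $\bi$ that leaves $\bx$ invariant while witnessing $\bi \not\prec \ba$.

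First I would unpack the hypothesis $\bx \not\prec \ba$: by definition this yields indices $\alpha, \beta \in [k]$ with $x_\alpha = x_\beta$ but $a_\alpha \neq a_\beta$. Since $\X$ is $k$-enhanced, $R_k^{\X} = X^k$, so $\bx \in R_k^{\X}$, and hence $\bx^{\tensor{k}} \in R_k^{\X^{\tensor{k}}}$. Because $\xi$ is a homomorphism (and $R_k$ has arity $k^k$ in $\sigma^{\tensor{k}}$), applying $\xi$ entrywise to the tensor $\bx^{\tensor{k}}$ produces a tensor $M = [\xi(\bx_\bi)]_{\bi\in[k]^k}$ belonging to $R_k^{\mathbb{F}_{\Qconv}(\A^{\tensor{k}})}$.

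Next I would build the right index $\bi \in [k]^k$: set $i_\gamma = \gamma$ for every $\gamma \neq \beta$ and $i_\beta = \alpha$. On the one hand, $\bx_\bi$ agrees with $\bx$ in every coordinate, since the only altered coordinate is the $\beta$-th one, which becomes $x_{i_\beta} = x_\alpha = x_\beta$; thus $\bx_\bi = \bx$. On the other hand, $i_\alpha = i_\beta = \alpha$, yet $a_\alpha \neq a_\beta$, so $\bi \not\prec \ba$. Lemma~\ref{lem_vanishing_free_structure_0510} applied to $M$ at the block $\bi$ then gives $E_\ba \ast M_\bi = 0$, i.e. $E_\ba \ast \xi(\bx_\bi) = 0$, which is $E_\ba \ast \xi(\bx) = 0$, as desired.

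No real obstacle arises here: the result is an unfolding of Lemma~\ref{lem_vanishing_free_structure_0510} once one realises that $k$-enhancement lets us insert an arbitrary $\bx$ as the ``seed'' of a tuple in $R_k^{\X^{\tensor{k}}}$, while the repetition $x_\alpha = x_\beta$ guaranteed by $\bx \not\prec \ba$ is precisely what we need to manufacture a non-trivial projection index $\bi$ fixing $\bx$. (Equivalently, one could derive the same conclusion from the consistency equation of Proposition~\ref{lem_a_symmetry}: the same choice of $\bi$ yields $\xi(\bx) = \Pi_\bi \ast \xi(\bx)$, and $\Pi_\bi[\ba, \ba'] = 0$ for every $\ba'$, forcing the $\ba$-entry to vanish.)
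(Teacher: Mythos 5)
Your proof is correct and takes essentially the same route as the paper: from $k$-enhancement you get $\xi(\bx^{\tensor{k}})\in R_k^{\mathbb{F}_{\Qconv}(\A^{\tensor{k}})}$, you apply Lemma~\ref{lem_vanishing_free_structure_0510} to the block indexed by $\bi = (1,\dots,\alpha,\dots,\alpha,\dots,k)$ (the $\beta$-th entry replaced by $\alpha$), and you observe $\bx_\bi=\bx$ while $\bi\not\prec\ba$. The parenthetical alternative via Proposition~\ref{lem_a_symmetry} is also valid but would be a forward reference in the paper's ordering; the main argument matches the paper's proof step for step.
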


\begin{proof}
From $\bx\in X^k=R_k^\X$, we derive $\bx^\tensor{k}\in R_k^{\X^\tensor{k}}$; since $\xi$ is a homomorphism, this yields $\xi(\bx^\tensor{k})\in R_k^{\mathbb{F}_{\Qconv}(\A^{\tensor{k}})}$. Writing $\xi(\bx^\tensor{k})$ in block form as $\xi(\bx^\tensor{k})=[\xi(\bx_\bi)]_{\bi\in [k]^k}$ and applying Lemma~\ref{lem_vanishing_free_structure_0510}, we obtain $E_\ba\ast\xi(\bx_\bi)=0$ for any $\bi\in [k]^k$ such that $\bi\not\prec\ba$. Write $\bx=(x_1,\dots,x_k)$ and $\ba=(a_1,\dots,a_k)$. Since $\bx\not\prec\ba$, there exist $\alpha,\beta\in [k]$ such that $x_\alpha=x_\beta$ and $a_\alpha\neq a_\beta$. Let $\bi'\in [k]^k$ be the tuple obtained from $(1,\dots,k)$ by replacing the $\beta$-th entry with $\alpha$. Observe that $\bx_{\bi'}=\bx$ and $\bi'\not\prec\ba$. Hence,
\begin{align*}
0=E_\ba\ast\xi(\bx_{\bi'})=E_\ba\ast\xi(\bx),
\end{align*}
as required.
\end{proof}

Using Lemma~\ref{cor_vanishing_contractions}, we can obtain some more information on the image of a homomorphism from $\X^{\tensor{k}}$ to $\mathbb{F}_{\Qconv}(\A^{\tensor{k}})$. 

\begin{lem}
\label{lem_ea_ast_Q}
Let $k\in\N$ with $k\geq 2$, let $\X,\A$ be two $k$-enhanced $\sigma$-structures, and let $\xi:\X^{\tensor{k}}\to\mathbb{F}_{\Qconv}(\A^{\tensor{k}})$ be a homomorphism. For $R\in\sigma$ of arity $r$, let $\bx\in R^\X$ and $\ba\in R^\A$ be such that $\bx\not\prec\ba$. Let $q\in\Qconv^{(|R^\A|)}$ be such that $q_{/\pi_\bi}=\xi(\bx_\bi)$ for each $\bi\in [r]^k$. Then $\be_\ba\ast q=0$.
\end{lem}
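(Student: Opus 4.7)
The plan is to exploit two ingredients: the nonnegativity of $q$, which is stochastic, and Lemma~\ref{cor_vanishing_contractions}. If I can pinpoint a sum of entries of $q$ that vanishes and whose index set includes $\ba$, then, since all the summands are nonnegative, each one must be zero individually and the conclusion follows.

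The first step is to convert the failure of $\prec$ at the level of the $r$-tuples $\bx,\ba$ into a failure of $\prec$ at the level of $k$-projections. Since $\bx\not\prec\ba$, there exist $\alpha,\beta\in[r]$ with $x_\alpha=x_\beta$ but $a_\alpha\neq a_\beta$. Using $k\geq 2$, I set $\bi=(\alpha,\beta,\alpha,\ldots,\alpha)\in[r]^k$ (any tuple containing both $\alpha$ and $\beta$ works); then
\begin{align*}
(\bx_\bi)_1=x_\alpha=x_\beta=(\bx_\bi)_2,\qquad (\ba_\bi)_1=a_\alpha\neq a_\beta=(\ba_\bi)_2,
\end{align*}
so $\bx_\bi\not\prec\ba_\bi$. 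This is the only place where the hypothesis $k\geq 2$ is used, and it is the main (albeit small) obstacle of the proof — everything else is an algebraic unpacking.

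Next, since $\X$ is $k$-enhanced we have $\bx_\bi\in X^k=R_k^\X$, so Lemma~\ref{cor_vanishing_contractions} applies and gives $E_{\ba_\bi}\ast\xi(\bx_\bi)=0$. Using the hypothesis $\xi(\bx_\bi)=q_{/\pi_\bi}=P_\bi\ast q$ together with Lemma~\ref{lem_multiplication_rule_P}, I rewrite this identity as
\begin{align*}
0\;=\;E_{\ba_\bi}\ast P_\bi\ast q\;=\;\sum_{\substack{\bb\in R^\A\\ \bb_\bi=\ba_\bi}}\be_\bb\ast q.
\end{align*}

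To finish, I observe that each term $\be_\bb\ast q$ is an entry of the stochastic vector $q$ and is therefore nonnegative, so every summand on the right must vanish. Since $\ba$ trivially satisfies $\ba_\bi=\ba_\bi$, the tuple $\ba$ appears in the index set of the sum, and hence $\be_\ba\ast q=0$, as required.
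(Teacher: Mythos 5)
Your proof is correct and follows essentially the same route as the paper's: choose a tuple $\bi\in[r]^k$ that includes both $\alpha$ and $\beta$ (so that $\bx_\bi\not\prec\ba_\bi$, exploiting $k\geq 2$), apply Lemma~\ref{cor_vanishing_contractions}, and finish with nonnegativity of $q$. The only cosmetic difference is the final step: the paper observes the entrywise inequality $E_{\ba_\bi}\ast P_\bi\geq\be_\ba$ directly, while you expand the contraction via Lemma~\ref{lem_multiplication_rule_P} into an explicit sum of nonnegative entries that contains $\be_\ba\ast q$ — these are equivalent.
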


\begin{proof}
Write $\bx=(x_1,\dots,x_r)$ and $\ba=(a_1,\dots,a_r)$. Since $\bx\not\prec\ba$, there exist $\alpha,\beta\in [r]$ such that $x_\alpha=x_\beta$ and $a_\alpha\neq a_\beta$. Take the tuple $\bi=(\alpha,\alpha,\dots,\alpha,\beta)\in [r]^k$, and consider $\bx_\bi=(x_\alpha,x_\alpha,\dots,x_\alpha,x_\beta)\in X^k$, $\ba_\bi=(a_\alpha,a_\alpha,\dots,a_\alpha,a_\beta)\in A^k$. Notice that $\bx_\bi\not\prec\ba_\bi$ since $k\geq 2$. Using Lemma~\ref{cor_vanishing_contractions}, we find 
\begin{align*}
0=E_{\ba_\bi}\ast\xi(\bx_\bi)=E_{\ba_\bi}\ast q_{/\pi_\bi}
=
E_{\ba_\bi}\ast P_\bi \ast q.
\end{align*}
Observe that $E_{\ba_\bi}\ast P_\bi\ast \be_\ba=1$, so $E_{\ba_\bi}\ast P_\bi\geq \be_\ba$ entrywise. Since $q$ is nonnegative, it follows that $E_{\ba_\bi}\ast P_\bi \ast q\geq \be_\ba\ast q$. Therefore, we obtain $\be_\ba\ast q=0$.

\end{proof}

Next, we present a simple identity satisfied by the tensor $\Pi_\bi$ defined in~\eqref{defn_tensor_Pi_i} (which should be compared to the one in Lemma~\ref{lem_multiplication_rule_P} concerning $P_\bi$). 
\begin{lem}
\label{lem_multiplication_rule}
For any $\ba\in {A}^k$ and $\bi\in [k]^k$, the following identity holds:
\begin{align*}
E_\ba\ast \Pi_\bi=\sum_{\substack{\bb\in {A}^k\\\bb_\bi=\ba}}E_\bb.
\end{align*}
\end{lem}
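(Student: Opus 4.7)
The plan is to mimic the proof of Lemma~\ref{lem_multiplication_rule_P} essentially verbatim, since the structural setup is the same: $\Pi_\bi$ is defined by its entries against pairs $(E_\ba, E_{\ba'})$ (equation~\eqref{defn_tensor_Pi_i}), and so to verify an identity whose left-hand side is $E_\ba \ast \Pi_\bi$, it suffices to check that both sides agree after an additional contraction with $E_{\ba'}$ for every $\ba' \in A^k$. Both sides live in $\cT^{k;n}(\R)$, so contracting all entries against the standard basis $\{E_{\ba'}\}_{\ba'\in A^k}$ determines the tensor completely.

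Concretely, I would compute
\[
\left(\sum_{\substack{\bb\in A^k\\ \bb_\bi=\ba}} E_\bb\right)\ast E_{\ba'}
= \sum_{\substack{\bb\in A^k\\ \bb_\bi=\ba}} E_\bb \ast E_{\ba'},
\]
observe that $E_\bb \ast E_{\ba'}$ equals $1$ if $\bb = \ba'$ and $0$ otherwise, and conclude that the sum equals $1$ exactly when $\ba'_\bi=\ba$ and $0$ otherwise. By the defining equation~\eqref{defn_tensor_Pi_i}, this is precisely $E_\ba \ast \Pi_\bi \ast E_{\ba'}$. Since this equality holds for every $\ba' \in A^k$, the identity $E_\ba \ast \Pi_\bi = \sum_{\bb_\bi=\ba} E_\bb$ follows.

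There is no real obstacle here; the only point to be mindful of is keeping straight which factor of the contraction is which: $\Pi_\bi \in \cT^{2k;n}(\{0,1\})$, so $E_\ba \ast \Pi_\bi$ contracts the first $k$ indices of $\Pi_\bi$ against $E_\ba$, leaving a tensor in $\cT^{k;n}(\R)$ whose remaining $k$ indices are the ones being tested by $E_{\ba'}$. Once the index bookkeeping is in place, the argument is a one-line entrywise check.
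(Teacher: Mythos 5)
Your proposal is correct and follows exactly the paper's proof: contract both sides against an arbitrary basis tensor $E_{\ba'}$, reduce the sum to a single indicator via $E_\bb\ast E_{\ba'}=\delta_{\bb,\ba'}$, and match the result with the defining equation~\eqref{defn_tensor_Pi_i} of $\Pi_\bi$. The paper likewise treats this as the $\Pi_\bi$-analogue of Lemma~\ref{lem_multiplication_rule_P}, so nothing is missing.
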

\begin{proof}
For any $\ba'\in {A}^k$, we have
\begin{align*}
\left(\sum_{\substack{\bb\in {A}^k\\\bb_\bi=\ba}}E_\bb\right)\ast E_{\ba'}=
\sum_{\substack{\bb\in {A}^k\\\bb_\bi=\ba}}(E_\bb\ast E_{\ba'})
=
\sum_{\substack{\bb\in {A}^k\\\bb_\bi=\ba\\ \bb=\ba'}}1=
\left\{
\begin{array}{ll}
1 & \mbox{if }\ba'_\bi=\ba\\
0 & \mbox{otherwise}
\end{array}
\right.
=
\left(E_\ba\ast \Pi_\bi\right)\ast E_{\ba'},
\end{align*}
from which the result follows.
\end{proof}

We now have the necessary tools to prove Proposition~\ref{lem_a_symmetry}.

\begin{prop*}[Proposition~\ref{lem_a_symmetry} restated]
Let $k\in\N$, let $\X,\A$ be two $k$-enhanced $\sigma$-structures, and let $\xi:X^k\to\Qconv^{(n^k)}$ be a map. Then $\xi$ preserves $R_k$ if and only if 
\begin{align*}
\xi(\bx_\bi)=\Pi_\bi\ast \xi(\bx)
\end{align*}
for any $\bx\in X^k$, $\bi\in [k]^k$.
\end{prop*}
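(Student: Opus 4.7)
The plan is to unpack both sides of the equivalence directly through the definition of the free structure $\mathbb{F}_{\Qconv}(\A^\tensor{k})$. The key preliminary observation is that, for the particular relation $R_k$, the tensor $P_\bi$ of~\eqref{defn_tensor_Pi} coincides with $\Pi_\bi$ of~\eqref{defn_tensor_Pi_i} under the natural identification of $\Qconv^{(n^k)}$ with the set of stochastic tensors in $\cT^{k;n}(\Q)$. Indeed, since $\A$ is $k$-enhanced we have $R_k^\A = A^k$, so the last mode of $P_\bi$ (of size $|R_k^\A|=n^k$) corresponds precisely to the joint index set of the last $k$ modes of $\Pi_\bi$; and the entries defined in~\eqref{defn_tensor_Pi} and~\eqref{defn_tensor_Pi_i} are both the indicator of the condition $\ba'_\bi = \ba$. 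Consequently, $P_\bi \ast q = \Pi_\bi \ast q$ for every $q \in \Qconv^{(n^k)}$, once $q$ is reshaped as a tensor in $\cT^{k;n}(\Q)$. Moreover, when $\bi = (1,2,\dots,k)$, both $P_\bi$ and $\Pi_\bi$ act as the identity and $\bx_\bi = \bx$, so the consistency equation is trivially satisfied for this choice of $\bi$.

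For the forward direction, I would assume that $\xi$ preserves $R_k$ and fix $\bx \in X^k = R_k^\X$. Then $\bx^\tensor{k} \in R_k^{\X^\tensor{k}}$, and hence the block tensor whose $\bi$-th block is $\xi(\bx_\bi)$ lies in $R_k^{\mathbb{F}_{\Qconv}(\A^\tensor{k})}$. By the definition of the free structure (cf.~Example~\ref{ex_free_structure}), there exists $q \in \Qconv^{(n^k)}$ such that $\xi(\bx_\bi) = q_{/\pi_\bi} = P_\bi \ast q$ for every $\bi \in [k]^k$. Specialising to $\bi = (1,\dots,k)$ and invoking the above identification yields $q = \xi(\bx)$. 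Plugging this back into the general identity gives $\xi(\bx_\bi) = P_\bi \ast \xi(\bx) = \Pi_\bi \ast \xi(\bx)$ for every $\bi$, which is exactly~\eqref{eqn_consistency}.

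For the reverse direction, I would assume that~\eqref{eqn_consistency} holds, fix $\bx \in X^k$, and set $q := \xi(\bx) \in \Qconv^{(n^k)}$. For each $\bi \in [k]^k$, one then has $\xi(\bx_\bi) = \Pi_\bi \ast \xi(\bx) = P_\bi \ast q = q_{/\pi_\bi}$, so this choice of $q$ witnesses membership of $\xi(\bx^\tensor{k})$ in $R_k^{\mathbb{F}_{\Qconv}(\A^\tensor{k})}$; since $R_k^{\X^\tensor{k}}$ consists exactly of the tensors $\bx^\tensor{k}$ for $\bx \in X^k$, this shows that $\xi$ preserves $R_k$. The only potential source of friction in the write-up is the bookkeeping between the vector shape of $\Qconv^{(n^k)}$ (on which $P_\bi$ acts) and the tensor shape of $\cT^{k;n}(\Q)$ (on which $\Pi_\bi$ acts), but this is harmless once the identification is stated cleanly at the outset.
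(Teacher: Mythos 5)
Your proof is correct and takes essentially the same approach as the paper's: both directions go through the witness $q$ from the definition of $R_k^{\mathbb{F}_{\Qconv}(\A^\tensor{k})}$, specialise to $\bi=(1,\dots,k)$ to identify $q$ with $\xi(\bx)$, and use the coincidence of $P_\bi$ with $\Pi_\bi$ under the vector/tensor reshaping of $\Qconv^{(n^k)}$. The paper merely makes that last identification explicit by introducing an auxiliary reshaped tensor $Q$ and verifying $P_\bi\ast q=\Pi_\bi\ast Q$ entrywise via the two multiplication-rule lemmas, whereas you justify it directly by comparing the defining equations of $P_\bi$ and $\Pi_\bi$ when $R_k^\A=A^k$ — both are sound.
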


\begin{proof}
Suppose that $\xi$ preserves $R_k$. Observe that $|R_k^\A|=|A^k|=n^k$.
Take $\bx\in X^k=R_k^\X$, so $\bx^\tensor{k}\in R_k^{\X^\tensor{k}}$; since $\xi$ preserves $R_k$, this yields $\xi(\bx^\tensor{k})\in R_k^{\mathbb{F}_{\Qconv}(\A^{\tensor{k}})}$. Therefore, $\exists q\in \Qconv^{(n^k)}$ such that $\xi(\bx_\bi)=q_{/\pi_\bi}=P_\bi\ast q$ $\forall \bi\in [k]^k$. Consider now the tensor ${Q}\in\cT^{k;n}(\Q)$ defined by $E_\ba\ast{Q}=\be_\ba\ast q$ for each $\ba\in A^k$. In other words, $q$ and ${Q}$ contain the same elements; however, in the former they are arranged as a long vector, while in the latter they form a cubical tensor. We claim that $\xi(\bx_\bi)=\Pi_\bi\ast{Q}$ for each $\bi\in [k]^k$. Indeed, for any $\ba\in A^k$, we find
\begin{align*}
E_\ba\ast\Pi_\bi\ast{Q}
&=
\sum_{\substack{\bb\in A^k\\ \bb_\bi=\ba}}E_\bb\ast{Q}
=
\sum_{\substack{\bb\in A^k\\ \bb_\bi=\ba}}\be_\bb\ast q
=
E_\ba\ast P_\bi\ast q
=
E_\ba\ast \xi(\bx_\bi),
\end{align*}
where the first and third equalities come from Lemma~\ref{lem_multiplication_rule} and Lemma~\ref{lem_multiplication_rule_P}, respectively. Hence, the claim is proved. Consider now the tuple $\bi'=(1,\dots,k)\in [k]^k$, and observe that $\bx_{\bi'}=\bx$. Noticing that the contraction by $\Pi_{\bi'}$ acts on $\cT^{k;n}(\Q)$ as the identity, we conclude that 
\begin{align*}
\xi(\bx)=\xi(\bx_{\bi'})=\Pi_{\bi'}\ast{Q}={Q},
\end{align*}
whence the result follows. Conversely, suppose that $\xi(\bx_\bi)=\Pi_\bi\ast \xi(\bx)$
for any $\bx\in X^k$, $\bi\in [k]^k$. Take $\bx\in X^k=R_k^\X$, so $\bx^\tensor{k}\in R_k^{\X^\tensor{k}}$. We need to show that $\xi(\bx^\tensor{k})\in R_k^{\mathbb{F}_{\Qconv}(\A^\tensor{k})}$. Consider the tuple $q\in\Qconv^{(|R_k^\A|)}=\Qconv^{(n^k)}$ defined by $\be_\ba\ast q= E_\ba\ast\xi(\bx)$ for $\ba\in A^k$. Given $\bi\in [k]^k$ and $\ba\in A^k$, using Lemma~\ref{lem_multiplication_rule_P} and Lemma~\ref{lem_multiplication_rule}, we find
\begin{align*}
E_\ba\ast q_{/\pi_\bi}
&=
E_\ba\ast P_\bi\ast q
=
\sum_{\substack{\bb\in A^k\\\bb_\bi=\ba}}\be_\bb\ast q
=
\sum_{\substack{\bb\in A^k\\\bb_\bi=\ba}}E_\bb\ast\xi(\bx)
=
E_\ba\ast\Pi_\bi\ast\xi(\bx)
=
E_\ba\ast\xi(\bx_\bi),
\end{align*}
which shows that $q_{/\pi_\bi}=\xi(\bx_\bi)$. It follows that $\xi(\bx^\tensor{k})\in R_k^{\mathbb{F}_{\Qconv}(\A^\tensor{k})}$ and, hence, $\xi$ preserves $R_k$.
\end{proof}

Building on Proposition~\ref{lem_a_symmetry}, the next result contains an invariance property of homomorphisms from $\X^{\tensor{k}}$ to $\mathbb{F}_{\Qconv}(\A^{\tensor{k}})$.
\begin{lem}
\label{lem_relabelling_is_fine}
Let $k\in\N$, let $\X,\A$ be two $k$-enhanced $\sigma$-structures, and let $\xi:\X^{\tensor{k}}\to\mathbb{F}_{\Qconv}(\A^{\tensor{k}})$ be a homomorphism. Then
\begin{align*}
E_{\ba_\bi}\ast \xi(\bx_\bi)=E_\ba\ast\xi(\bx)
\end{align*}
for any $\ba\in {A}^k$, $\bx\in X^k$, $\bi\in [k]^k$ such that $\bx\prec\ba$ and $\{\bx_\bi\}=\{\bx\}$.
\end{lem}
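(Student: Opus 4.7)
The plan is to combine the consistency equation from Proposition~\ref{lem_a_symmetry} with the vanishing result of Lemma~\ref{cor_vanishing_contractions}, using the expansion rule for $\Pi_\bi$ given by Lemma~\ref{lem_multiplication_rule}.

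Concretely, since $\xi$ is a homomorphism of $k$-enhanced structures, Proposition~\ref{lem_a_symmetry} applies and gives $\xi(\bx_\bi)=\Pi_\bi\ast\xi(\bx)$. Contracting with $E_{\ba_\bi}$ on the left and using Lemma~\ref{lem_multiplication_rule} yields
\begin{align*}
E_{\ba_\bi}\ast\xi(\bx_\bi)
=E_{\ba_\bi}\ast\Pi_\bi\ast\xi(\bx)
=\sum_{\substack{\bb\in A^k\\ \bb_\bi=\ba_\bi}}E_\bb\ast\xi(\bx).
\end{align*}
The task then reduces to showing that, in this sum, every term with $\bb\neq\ba$ vanishes, so that only the contribution $E_\ba\ast\xi(\bx)$ survives (note that $\bb=\ba$ is always one of the indices, since $\ba_\bi=\ba_\bi$ trivially).

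To kill the remaining terms, I would invoke Lemma~\ref{cor_vanishing_contractions}: it suffices to prove that any $\bb\in A^k$ satisfying both $\bb_\bi=\ba_\bi$ and $\bx\prec\bb$ must in fact equal $\ba$. This is the step where the two hypotheses $\bx\prec\ba$ and $\{\bx_\bi\}=\{\bx\}$ come into play, and it is the only genuine content of the argument. Fix such a $\bb$ and pick any $\alpha\in [k]$. By $\{\bx_\bi\}=\{\bx\}$, there exists $\beta\in [k]$ with $x_{i_\beta}=x_\alpha$. Applying $\bx\prec\bb$ gives $b_\alpha=b_{i_\beta}=(\bb_\bi)_\beta=(\ba_\bi)_\beta=a_{i_\beta}$, while applying $\bx\prec\ba$ gives $a_\alpha=a_{i_\beta}$; comparing, $b_\alpha=a_\alpha$. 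Hence $\bb=\ba$, as desired.

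Putting the pieces together, the displayed sum collapses to the single term $E_\ba\ast\xi(\bx)$, proving the identity. The only nontrivial point is the combinatorial implication in the previous paragraph, which is where the hypothesis $\{\bx_\bi\}=\{\bx\}$ is indispensable: without it, there would be coordinates of $\bb$ unconstrained by $\ba_\bi$, and the collapse of the sum would fail.
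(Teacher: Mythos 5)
Your proof is correct and follows essentially the same route as the paper's: expand $E_{\ba_\bi}\ast\xi(\bx_\bi)$ via Proposition~\ref{lem_a_symmetry} and Lemma~\ref{lem_multiplication_rule}, discard non-$\bx$-compatible terms via Lemma~\ref{cor_vanishing_contractions}, and show combinatorially that the only surviving index is $\ba$. The only difference is presentational—you argue directly in tuple notation that any surviving $\bb$ equals $\ba$, whereas the paper runs the equivalent argument by contradiction in function notation.
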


\begin{proof}
Using Proposition~\ref{lem_a_symmetry} and Lemma~\ref{lem_multiplication_rule}, we find
\begin{align*}
E_{\ba_\bi}\ast \xi(\bx_\bi)&=E_{\ba_\bi}\ast \Pi_\bi\ast \xi(\bx)
=
\sum_{\substack{\bb\in {A}^k\\\bb_\bi=\ba_\bi}}E_\bb\ast \xi(\bx)
=
E_\ba\ast \xi(\bx)+
\sum_{\substack{\bb\in {A}^k\\\bb_\bi=\ba_\bi\\\bb\neq\ba}}E_\bb\ast \xi(\bx).
\end{align*}
Applying Lemma~\ref{cor_vanishing_contractions} allows to further refine the expression above, thus giving
\begin{align*}
E_{\ba_\bi}\ast \xi(\bx_\bi)=
E_\ba\ast \xi(\bx)+
\sum_{\substack{\bb\in {A}^k\\\bb_\bi=\ba_\bi\\\bb\neq\ba\\\bx\prec \bb}}E_\bb\ast \xi(\bx).
\end{align*}
The result follows once we show that the second term of the right-hand side in the identity above is zero. For the sake of contradiction, suppose that there exists $\bb\in {A}^k$ such that $\bb_\bi=\ba_\bi$, $\bb\neq \ba$, and $\bx\prec\bb$. Henceforth in this proof, it shall be convenient to consider tuples as functions: We write $\ba,\bb:[k]\to {A}$, $\bi:[k]\to [k]$, and $\bx:[k]\to X$. In this notation, the assumptions yield $\Imago(\bx\circ\bi)=\Imago(\bx)$ and $\bb\circ\bi=\ba\circ\bi$ (where $\Imago$ denotes the image of the function). From $\bb\neq\ba$, we deduce that $\bb(\alpha)\neq \ba(\alpha)$ for some $\alpha\in [k]$. Since $\bx(\alpha)\in\Imago(\bx)=\Imago(\bx\circ\bi)$, there exists $\beta\in [k]$ such that $\bx(\alpha)=\bx(\bi(\beta))=\bx(\gamma)$, where we have set $\gamma\coloneqq\bi(\beta)$. Since $\bx\prec\ba$ and $\bx\prec\bb$, this implies $\ba(\alpha)=\ba(\gamma)$ and $\bb(\alpha)=\bb(\gamma)$. On the other hand, $\bb(\gamma)=(\bb\circ \bi)(\beta)=(\ba\circ \bi)(\beta)=\ba(\gamma)$. We conclude that $\ba(\alpha)=\bb(\alpha)$, which contradicts our hypothesis.
\end{proof}

The next lemma shows that the tensorisation construction does not alter whether two structures are homomorphic or not. We let $\Hom(\A,\B)$ denote the set of homomorphisms from $\A$ to $\B$.

\begin{lem}
\label{prop_correspondence_morphisms_tensor_structures}
Let $k\in \N$ and let $\A,\B$ be two $\sigma$-structures. Then
\begin{enumerate}
\item[$(i)$]
 $\A\to\B$ if and only if $\A^\tensor{k}\to\B^\tensor{k}$;
\item[$(ii)$]
if $\A$ is $k$-enhanced, there is a bijection $\rho:\Hom(\A,\B)\to\Hom(\A^\tensor{k},\B^\tensor{k})$.
\end{enumerate}
\end{lem}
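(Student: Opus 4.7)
The natural candidate map is the coordinate-wise extension $h \mapsto h^{\otimes k}$, defined by $h^{\otimes k}(a_1, \ldots, a_k) := (h(a_1), \ldots, h(a_k))$. I plan to establish three facts: (a) $h^{\otimes k}$ is always a homomorphism $\A^\tensor{k} \to \B^\tensor{k}$; (b) from any homomorphism $g: \A^\tensor{k} \to \B^\tensor{k}$ one can recover a homomorphism $\A \to \B$ by reading off the first-coordinate action on the diagonal; (c) under $k$-enhancement, every homomorphism $g: \A^\tensor{k} \to \B^\tensor{k}$ must in fact coincide with some such $h^{\otimes k}$, so that $\rho(h) := h^{\otimes k}$ is a bijection.

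Parts (a) and (b) together give (i). For (a), applying $h^{\otimes k}$ entrywise to $\ba^\tensor{k}$ produces the tensor whose $(i_1, \ldots, i_k)$-th entry is $(h(a_{i_1}), \ldots, h(a_{i_k}))$, which is precisely $(h(\ba))^\tensor{k}$; this lies in $R^{\B^\tensor{k}}$ because $h(\ba) \in R^\B$. For (b), given $g$, I would define $h: A \to B$ by letting $h(a)$ be the first coordinate of $g(a, a, \ldots, a)$. To check $h$ is a homomorphism, take $\ba \in R^\A$: then $\ba^\tensor{k} \in R^{\A^\tensor{k}}$ forces $g(\ba^\tensor{k}) = \bc^\tensor{k}$ for some $\bc \in R^\B$, and reading the diagonal $(j, j, \ldots, j)$-th entries (for $j \in [r]$) gives $h(a_j) = c_j$, so $h(\ba) = \bc \in R^\B$.

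For (ii), the map $\rho(h) := h^{\otimes k}$ is well-defined by (a). Injectivity is immediate by evaluating $\rho(h_1)$ and $\rho(h_2)$ on a diagonal tuple $(a, \ldots, a)$. Surjectivity is the crux and is where $k$-enhancement gets used: given $g$, take $h$ as in (b); for every $(a_1, \ldots, a_k) \in A^k = R_k^\A$, the tensor $(a_1, \ldots, a_k)^\tensor{k}$ lies in $R_k^{\A^\tensor{k}}$, so $g$ must send it to some $\bb^\tensor{k} \in R_k^{\B^\tensor{k}}$; the diagonal entries pin down $b_j = h(a_j)$, and the $(1, 2, \ldots, k)$-th entry then yields $g(a_1, \ldots, a_k) = (h(a_1), \ldots, h(a_k)) = h^{\otimes k}(a_1, \ldots, a_k)$, so $g = \rho(h)$.

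\textbf{Main obstacle.} The delicate step is surjectivity in (ii). Without $k$-enhancement, nothing in principle constrains a homomorphism $g: \A^\tensor{k} \to \B^\tensor{k}$ to act coordinate-wise on tuples of $A^k$ that never appear inside a structured tensor $\ba^\tensor{k}$ for some $\ba$ in a relation of $\A$. The role of the universal relation $R_k$ is precisely to force every element of $A^k$ to appear inside such a structured tensor, thereby rigidifying $g$ into the form $h^{\otimes k}$ and locking down the bijection.
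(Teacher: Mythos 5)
Your proposal is correct and takes essentially the same route as the paper: both directions construct the coordinate-wise map $h \mapsto h^{\otimes k}$ and recover $h$ from $g$ by reading the first coordinate of $g$ on the diagonal $(a,\dots,a)$, and the surjectivity argument under $k$-enhancement (reading the $(1,\dots,k)$-th entry of $g(\ba^\tensor{k})$ for $\ba \in R_k^\A$) is the same computation the paper carries out via $\be_1^T g((a,\dots,a))$.
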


\begin{proof}
Let ${f}:\A\to\B$ be a homomorphism, and consider the function $f^\ast:A^k\to B^k$ defined by $f^\ast((a_1,\dots,a_k))\coloneqq ({f}(a_1),\dots,{f}(a_k))$. Take $R\in\sigma$ of arity $r$, and consider $\ba^\tensor{k}\in R^{\A^\tensor{k}}$, where $\ba\in R^\A$. Since ${f}$ is a homomorphism, $\bb\coloneqq {f}(\ba)\in R^\B$, so $\bb^\tensor{k}\in R^{\B^\tensor{k}}$. For any $\bi\in [r]^k$, we have 
\begin{align*}
E_\bi\ast f^\ast\left(\ba^\tensor{k}\right)&=f^\ast\left(E_\bi\ast\ba^\tensor{k}\right)=f^\ast(\ba_\bi)={f}(\ba_\bi)=({f}(\ba))_\bi=\bb_\bi=E_\bi\ast\bb^\tensor{k},
\end{align*}
which yields $f^\ast(\ba^\tensor{k})=\bb^\tensor{k}\in R^{\B^\tensor{k}}$. Hence, $f^\ast:\A^\tensor{k}\to\B^\tensor{k}$ is a homomorphism.

Conversely, let $g:\A^\tensor{k}\to\B^\tensor{k}$ be a homomorphism. We define the function $g_\ast:A\to B$ by setting $g_\ast(a)\coloneqq\be_1^Tg((a,\dots,a))$ for each $a\in A$. Take $R\in\sigma$ of arity $r$, and consider a tuple $\ba=(a_1,\dots,a_r)\in R^\A$. Since $\ba^\tensor{k}\in R^{\A^\tensor{k}}$ and $g$ is a homomorphism, we have that $g(\ba^\tensor{k})\in R^{\B^\tensor{k}}$. Therefore, $g(\ba^\tensor{k})=\bb^\tensor{k}$ for some $\bb=(b_1,\dots,b_r)\in R^\B$. For each $j\in [r]$, consider the tuple $\bi=(j,\dots,j)\in [r]^k$ and observe that
\begin{align*}
g((a_j,\dots,a_j))=g(\ba_\bi)=g\left(E_\bi\ast\ba^\tensor{k}\right)=E_\bi\ast g(\ba^\tensor{k})
=
E_\bi\ast\bb^\tensor{k}
=\bb_\bi
=
(b_j,\dots,b_j).
\end{align*}
Hence, we find
\begin{align*}
g_\ast(\ba)=\left(\be_1^Tg((a_1,\dots,a_1)),\dots,\be_1^Tg((a_r,\dots,a_r))\right)
=
\left(b_1,\dots,b_r\right)=\bb\in R^\B.
\end{align*}
Therefore, $g_\ast:\A\to\B$ is a homomorphism. This concludes the proof of $(i)$.

To prove $(ii)$, observe first that, if $\A\not\to\B$, then $\Hom(\A,\B)=\Hom(\A^\tensor{k},\B^\tensor{k})=\emptyset$, so there is a trivial bijection in this case. If $\A\to\B$, consider the map $\rho:\Hom(\A,\B)\to\Hom(\A^\tensor{k},\B^\tensor{k})$ defined by $f\mapsto f^\ast$ and the map $\rho':\Hom(\A^\tensor{k},\B^\tensor{k})\to\Hom(\A,\B)$ defined by $g\mapsto g_\ast$. For $f:\A\to\B$ and $a\in A$, we have 
\begin{align*}
(f^\ast)_\ast(a)
=
\be_1^Tf^\ast((a,\dots,a))
=
\be_1^T(f(a),\dots,f(a))=f(a)
\end{align*}
so that $\rho'\circ\rho=\id_{\Hom(\A,\B)}$. Consider now $g:\A^\tensor{k}\to\B^\tensor{k}$, and take $\ba=(a_1,\dots,a_k)\in A^k$. Using the assumption that $\A$ is $k$-enhanced, we have $\ba\in R_k^\A$, which implies $\ba^\tensor{k}\in R_k^{\A^\tensor{k}}$. Hence, $g(\ba^\tensor{k})\in R_k^{\B^\tensor{k}}$, so $g(\ba^\tensor{k})=\bb^\tensor{k}$ for some $\bb=(b_1,\dots,b_k)\in R_k^\B\subseteq B^k$. For $j\in [k]$ and $\bi=(j,\dots,j)\in [k]^k$, we have
\begin{align*}
g((a_j,\dots,a_j))
=
g(\ba_\bi)
=
g\left(E_\bi\ast\ba^\tensor{k}\right)
=
E_\bi\ast g\left(\ba^\tensor{k}\right)
=
E_\bi\ast\bb^\tensor{k}
=
\bb_\bi
=
(b_j,\dots,b_j).
\end{align*}
Letting $\bi'=(1,\dots,k)\in [k]^k$, we obtain
\begin{align*}
(g_\ast)^\ast(\ba)
&=
(g_\ast(a_1),\dots,g_\ast(a_k))
=
\left(\be_1^Tg((a_1,\dots,a_1)),\dots,\be_1^Tg((a_k,\dots,a_k))\right)
=
(b_1,\dots,b_k)\\
&=
\bb
=
\bb_{\bi'}
=
E_{\bi'}\ast\bb^\tensor{k}
=
E_{\bi'}\ast g\left(\ba^\tensor{k}\right)
=
g\left(E_{\bi'}\ast\ba^\tensor{k}\right)
=
g(\ba_{\bi'})
=
g(\ba),
\end{align*}
so that $\rho\circ\rho'=\id_{\Hom(\A^\tensor{k},\B^\tensor{k})}$, which concludes the proof of $(ii)$. 
\end{proof}

\begin{rem}
Part $(ii)$ of Lemma~\ref{prop_correspondence_morphisms_tensor_structures} does not hold in general if we relax the requirement that $\A$ be $k$-enhanced. More precisely, in this case, the function $\rho:\Hom(\A,\B)\to\Hom(\A^\tensor{k},\B^\tensor{k})$ defined in the proof of Lemma~\ref{prop_correspondence_morphisms_tensor_structures} still needs to be injective, but may not be surjective. Therefore, we have $\left |\Hom(\A,\B)\right |\leq\left |\Hom(\A^\tensor{k},\B^\tensor{k})\right |$, and the inequality may be strict.

For example, consider the Boolean structure $\A$ having a unique unary relation $R_1^\A=A=\{0,1\}$. So, $\A$ is $1$-enhanced but not $2$-enhanced. Observe that $|\Hom(\A,\A)|=4$. The tensorised structure $\A^\tensor{2}$ has domain $\{0,1\}^2=\{(0,0),(0,1),\allowbreak(1,0),(1,1)\}$, and its (unary) relation is $R_1^{\A^\tensor{2}}=\{0^\tensor{2},1^\tensor{2}\}=\{(0,0),(1,1)\}$. Therefore, each map $f:\{0,1\}^2\to\{0,1\}^2$ such that $f((0,0))\in\{(0,0),(1,1)\}$ and $f((1,1))\in\{(0,0),\allowbreak(1,1)\}$ yields a proper homomorphism $\A^\tensor{2}\to\A^\tensor{2}$. Hence, $\left|\Hom(\A^\tensor{2},\A^\tensor{2})\right|=64$, so $\Hom(\A,\A)$ and $\Hom(\A^\tensor{2},\A^\tensor{2})$ are not in bijection.
\end{rem}

\section{Sherali-Adams acceptance}
\label{sec_characterisation_acceptance}
In this section, we connect the tensorisation construction to the Sherali-Adams hierarchy: We characterise the instances $\X$ for which the $k$-th level of Sherali-Adams accepts as those
for which $\X^\tensor{k}$ is homomorphic to the free structure of the minion $\Qconv$ generated by $\A^\tensor{k}$, as stated in the following main result. 
\begin{thm*}[Theorem~\ref{thm:main} restated]
Let $k\in\N$ with $k\geq 2$, and let $\X,\A$ be two $k$-enhanced $\sigma$-structures.
Then $\SA^k(\X,\A)$ accepts if and only if
$\X^{\tensor{k}}\to\bF_{\Qconv}(\A^{\tensor{k}})$.
\end{thm*}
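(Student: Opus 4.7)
The plan is to translate directly between the SA variables $(\lambda_V(f), \lambda_{R,\bx}(f))$ and the values $E_\ba\ast\xi(\bx)$ of a candidate map $\xi:X^k\to\Qconv^{(n^k)}$. The dictionary is the bijection, for each fixed $\bx\in X^k$, between functions $f:\{\bx\}\to A$ and tuples $\ba\in A^k$ satisfying $\bx\prec\ba$ via $f(x_i)=a_i$; each SA value $\lambda_{\{\bx\}}(f)$ will match a single entry of $\xi(\bx)$, while the variables $\lambda_{R,\bx}(f)$ attached to a constraint $\bx\in R^\X$ will encode the vector $q\in\Qconv^{(|R^\A|)}$ witnessing $\xi(\bx^\tensor{k})\in R^{\bF_{\Qconv}(\A^\tensor{k})}$.

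For the $\Rightarrow$ direction, given a feasible SA solution, I would set $E_\ba\ast\xi(\bx)\coloneqq\lambda_{\{\bx\}}(f_{\bx,\ba})$ if $\bx\prec\ba$ and $0$ otherwise, where $f_{\bx,\ba}(x_i)=a_i$. Stochasticity of $\xi(\bx)$ is immediate from (SA1) with $V=\{\bx\}$ (which satisfies $1\leq|V|\leq k$). By Proposition~\ref{lem_a_symmetry}, preservation of $R_k$ reduces to the consistency equation $\xi(\bx_\bi)=\Pi_\bi\ast\xi(\bx)$; expanding the right-hand side via Lemma~\ref{lem_multiplication_rule} produces a sum of $\lambda_{\{\bx\}}$-values over extensions of $f_{\bx_\bi,\ba}$, which collapses to $\lambda_{\{\bx_\bi\}}(f_{\bx_\bi,\ba})$ by (SA2) applied with $U=\{\bx_\bi\}\subseteq\{\bx\}=V$. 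For each $R\in\sigma$ of arity $r$ and each $\bx\in R^\X$, I would construct $q$ by $\be_\ba\ast q\coloneqq\lambda_{R,\bx}(f_{\bx,\ba})$ when $\bx\prec\ba$ and $0$ otherwise; stochasticity of $q$ follows by chaining (SA3) for a singleton $U$ with (SA1), and the required identity $\xi(\bx_\bi)=P_\bi\ast q$ for every $\bi\in[r]^k$ is verified by Lemma~\ref{lem_multiplication_rule_P} combined with (SA3) and (SA4).

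For the $\Leftarrow$ direction, given a homomorphism $\xi$, I would set $\lambda_V(f)\coloneqq E_\ba\ast\xi(\bx)$ for any $\bx\in X^k$ with $\{\bx\}=V$ (which exists since $|V|\leq k$) and the $\ba$ encoding $f$ along $\bx$; independence from the choice of $\bx$ is exactly Lemma~\ref{lem_relabelling_is_fine}. For $\bx\in R^\X$, the homomorphism condition yields a $q$ with $\xi(\bx_\bi)=P_\bi\ast q$ for every $\bi\in[r]^k$; I then set $\lambda_{R,\bx}(f)\coloneqq\be_\ba\ast q$ when the encoded $\ba$ lies in $R^\A$, and $0$ otherwise. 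Verification of (SA1) uses stochasticity of $\xi(\bx)$ together with Lemma~\ref{cor_vanishing_contractions} to drop the vanishing entries outside $\{\ba:\bx\prec\ba\}$; (SA2) follows by reversing the calculation from the forward direction via Proposition~\ref{lem_a_symmetry} and Lemma~\ref{lem_multiplication_rule}; (SA3) is analogous via Lemma~\ref{lem_multiplication_rule_P}; and (SA4) is built into the definition.

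The main obstacle is the combinatorial bookkeeping around repeated entries of $\bx$: one must carefully convert summations indexed by functions $g:\{\bx\}\to A$ into summations indexed by tuples $\bb$ with $\bx\prec\bb$, and ensure that entries of $\xi(\bx)$ or of $q$ lying outside the admissible support vanish. This relies on Lemmas~\ref{lem_vanishing_free_structure_0510}, \ref{cor_vanishing_contractions}, and~\ref{lem_ea_ast_Q}; in particular, the hypothesis $k\geq 2$ enters precisely through Lemma~\ref{lem_ea_ast_Q}, which guarantees that the $q$ produced by $\xi$ has no mass on tuples $\bb\in R^\A$ with $\bx\not\prec\bb$, so that the SA variables extracted from $q$ are mutually consistent across the various marginalisation constraints.
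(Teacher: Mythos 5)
Your proposal is correct and follows essentially the same route as the paper, which splits the theorem into Propositions~\ref{prop_SAkaccepts_hom_tensor} and~\ref{prop_first_implication_homo_SAk} and uses precisely the dictionary $\lambda_{\{\bx\}}(f_{\bx,\ba})\leftrightarrow E_\ba\ast\xi(\bx)$ and $\lambda_{R,\bx}(f_{\bx,\ba})\leftrightarrow\be_\ba\ast q$, with Lemma~\ref{lem_relabelling_is_fine} for well-definedness, Lemma~\ref{cor_vanishing_contractions} for the support bookkeeping, and Lemma~\ref{lem_ea_ast_Q} (the source of the $k\geq 2$ hypothesis) for (SA3). The only cosmetic deviation is that in the forward (SA-to-homomorphism) direction you verify preservation of $R_k$ separately via Proposition~\ref{lem_a_symmetry} and (SA2), whereas the paper treats $R_k$ on a par with the other symbols using (SA3)/(SA4) and so dispenses with (SA2) and even with $k$-enhancement in that direction; this is a harmless redundancy, not a gap.
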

The two implications of Theorem~\ref{thm:main} are proved separately, in Propositions~\ref{prop_first_implication_homo_SAk} and~\ref{prop_SAkaccepts_hom_tensor}. The ``only if'' implication, established in Proposition~\ref{prop_SAkaccepts_hom_tensor}, actually works for $k\geq 1$ and does not require $k$-enhancement of the structures $\X$ and $\A$. In any case, assuming $k$-enhancement results in no loss of generality, in the sense that, as noted in Section~\ref{sec_extended_abstract}, any $\PCSP$ template is equivalent to the $\PCSP$ template in which both structures are $k$-enhanced. 
Subsequently, we show how certain known facts and new simple results on the Sherali-Adams hierarchy naturally follow from the geometry of $\bF_{\Qconv}(\A^{\tensor{k}})$ through this characterisation.

Given two sets $S,T$, an integer $p\in\N$, and two tuples $\textbf{s}=(s_1,\dots,s_p)\in S^p, \textbf{t}=(t_1,\dots,t_p)\in T^p$ such that $\textbf{s}\prec\textbf{t}$, we shall consider the function $f_{\textbf{s},\textbf{t}}:\{\textbf{s}\}\to T$ defined by $f_{\textbf{s},\textbf{t}}(s_\alpha)=t_\alpha$ for each $\alpha\in [p]$.

\begin{prop}
\label{prop_first_implication_homo_SAk}
Let $k\in\N$ with $k\geq 2$, let $\X,\A$ be two $k$-enhanced $\sigma$-structures, and let $\xi:\X^{\tensor{k}}\to\mathbb{F}_{\Qconv}(\A^{\tensor{k}})$ be a homomorphism. Then $\SA^k(\X,\A)$ accepts.
\end{prop}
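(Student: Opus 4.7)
The plan is to use the homomorphism $\xi$ to manufacture a feasible solution $\{\lambda_V(f),\lambda_{R,\bx}(f)\}$ to the $\SA^k(\X,\A)$ system. The guiding intuition is that for any $\bx\in X^k$ with $\{\bx\}=V$, the tensor $\xi(\bx)\in\cT^{k;n}(\Q)$ should be read as a joint probability distribution over functions $V\to A$: Corollary~\ref{cor_vanishing_contractions} forces the support of $\xi(\bx)$ to sit on tuples $\ba$ with $\bx\prec\ba$, and each such $\ba$ encodes exactly one $f_{\bx,\ba}:\{\bx\}\to A$.

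Concretely, for every $V\subseteq X$ with $1\leq|V|\leq k$, fix an enumeration $V=\{v_1,\dots,v_m\}$ and set $\bx_V\coloneqq(v_1,\dots,v_m,v_m,\dots,v_m)\in X^k$; for each $f:V\to A$ define
\[
\lambda_V(f)\coloneqq E_\ba\ast\xi(\bx_V),\quad\text{where }\ba=(f(v_1),\dots,f(v_m),f(v_m),\dots,f(v_m)).
\]
Lemma~\ref{lem_relabelling_is_fine} guarantees independence of this value from the chosen enumeration/padding. For $R\in\sigma$ of arity $r$ and $\bx\in R^\X$, use that $\xi(\bx^\tensor{k})\in R^{\bF_{\Qconv}(\A^\tensor{k})}$ to pick a witness $q\in\Qconv^{(|R^\A|)}$ with $\xi(\bx_\bi)=q_{/\pi_\bi}$ for all $\bi\in[r]^k$, and set $\lambda_{R,\bx}(f)\coloneqq\be_\ba\ast q$ when $\ba=(f(x_1),\dots,f(x_r))\in R^\A$ and $\lambda_{R,\bx}(f)\coloneqq 0$ otherwise. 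Nonnegativity and the $[0,1]$-bound follow from stochasticity of $\xi(\bx_V)$ and of $q$, while (SA4) is immediate from the definition.

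To check (SA1) for $V$, sum $E_\ba\ast\xi(\bx_V)$ over all $f:V\to A$: the resulting tuples $\ba$ are precisely those with $\bx_V\prec\ba$, and by Corollary~\ref{cor_vanishing_contractions} the remaining indices contribute nothing, so the total equals $\bone\ast\xi(\bx_V)=1$. For (SA2), given $U\subseteq V$ pick $\bi\in[k]^k$ so that $\bx_\bi$ enumerates $U$ (padding again with its final entry); Proposition~\ref{lem_a_symmetry} gives $\xi(\bx_\bi)=\Pi_\bi\ast\xi(\bx_V)$, and Lemma~\ref{lem_multiplication_rule} rewrites $E_\ba\ast\Pi_\bi$ as the sum of $E_\bb$ over all $\bb$ with $\bb_\bi=\ba$. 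Restricting this sum to $\bb$ with $\bx_V\prec\bb$ (using Corollary~\ref{cor_vanishing_contractions}) sets up a bijection between such $\bb$ and functions $g:V\to A$ extending $f$, yielding $\lambda_U(f)=\sum_{g|_U=f}\lambda_V(g)$. The verification of (SA3) is entirely parallel, with $\Pi_\bi$ replaced by $P_\bi$ and Lemma~\ref{lem_multiplication_rule} replaced by Lemma~\ref{lem_multiplication_rule_P}: from $\xi(\bx_\bi)=P_\bi\ast q$ one obtains $\lambda_U(f)=\sum_{\bc\in R^\A:\bc_\bi=\ba}\be_\bc\ast q$, and the bijection between such $\bc$ and functions $g:\{\bx\}\to A$ with $g|_U=f$ and $g(\bx)\in R^\A$ (all other $g$ contributing $0$ by construction) closes the identity.

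The conceptually delicate point — and the main thing that actually uses the full strength of the hypothesis — is well-definedness of $\lambda_V$ together with the consistency (SA2): both rely on the joint content of the consistency equation from Proposition~\ref{lem_a_symmetry} and of Lemma~\ref{lem_relabelling_is_fine}, which is exactly where the assumption that $\X$ and $\A$ are $k$-enhanced enters (so that $\xi$ preserves $R_k$ and the tensor symmetries $\xi(\bx_\bi)=\Pi_\bi\ast\xi(\bx)$ are available). Once those symmetries are in hand, (SA1)–(SA4) reduce to bookkeeping with the identities of Section~\ref{sec_tensorisation}.
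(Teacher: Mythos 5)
Your construction is the same as the paper's: interpret $\xi(\bx)$ as a distribution over tuples $\ba$ with $\bx\prec\ba$, use Lemma~\ref{lem_relabelling_is_fine} for well-definedness, push (SA1) through Lemma~\ref{cor_vanishing_contractions}, and push (SA2) through Proposition~\ref{lem_a_symmetry} and Lemma~\ref{lem_multiplication_rule}. All of that is correct.

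The gap is in (SA3). You write the left-hand side as $\lambda_U(f)=\sum_{\bc\in R^\A:\bc_\bi=\ba}\be_\bc\ast q$ and then claim a bijection between $\{\bc\in R^\A:\bc_\bi=\ba\}$ and $\{g:\{\bx\}\to A:g|_U=f,\ g(\bx)\in R^\A\}$. That bijection does not exist: a tuple $\bc\in R^\A$ with $\bc_\bi=\ba$ need not satisfy $\bx\prec\bc$, and when $\bx\not\prec\bc$ the tuple $\bc$ encodes no function on $\{\bx\}$ at all. (Concretely, if $\bx$ has two equal coordinates $x_\alpha=x_\beta$ but $c_\alpha\neq c_\beta$, there is no $g$ with $g(x_\alpha)=c_\alpha$ and $g(x_\beta)=c_\beta$.) In (SA2) the analogous extraneous terms are killed by Lemma~\ref{cor_vanishing_contractions}, which is about $\xi$ evaluated on $k$-tuples; but here the sum ranges over the witness $q\in\Qconv^{(|R^\A|)}$, indexed by $r$-tuples in $R^\A$, so that lemma does not apply. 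This is not ``entirely parallel'' -- it is exactly the step the paper handles with Lemma~\ref{lem_ea_ast_Q}, which shows $\be_\bc\ast q=0$ whenever $\bx\not\prec\bc$ (and whose proof is precisely where the hypothesis $k\geq 2$ is used). Once you insert that lemma to restrict the sum to $\bc$ with $\bx\prec\bc$, your bijection argument goes through and the proof closes.
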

\begin{proof}
Given $\bx=(x_1,\dots,x_k)\in X^k$ and $\ba=(a_1,\dots,a_k)\in A^k$ such that $\bx\prec\ba$, we define 
\begin{align}
\label{eqn_variables_sherali_adams_from_homo}
\lambda_{\{\bx\}}(f_{\bx,\ba})\coloneqq E_\ba\ast \xi(\bx).
\end{align}
Clearly, any pair $(V,f)$ with $V\subseteq X$, $1\leq |V|\leq k$, and $f:V\to A$ can be written as $(\{\bx\},f_{\bx,\ba})$ for some such tuples $\bx$ and $\ba$. We claim that the assignment~\eqref{eqn_variables_sherali_adams_from_homo} is consistent; i.e., that $E_\ba\ast\xi(\bx)=E_{\ba'}\ast\xi(\bx')$ whenever $\bx'\in X^k$ and $\ba'=(a'_1,\dots,a'_k)\in A^k$ are such that $\bx'\prec\ba'$, $\{\bx'\}=\{\bx\}$, and $f_{\bx',\ba'}=f_{\bx,\ba}$. From $\{\bx'\}=\{\bx\}$, it follows that $\bx'=\bx_\bi$ for some $\bi=(i_1,\dots,i_k)\in [k]^k$. From $f_{\bx',\ba'}=f_{\bx,\ba}$, it follows that, for any $\alpha\in [k]$,
\begin{align*}
a_{i_\alpha}=f_{\bx,\ba}(x_{i_\alpha})=f_{\bx',\ba'}(x_{i_\alpha})
=
f_{\bx_\bi,\ba'}(x_{i_\alpha})
=a'_\alpha,
\end{align*}
so that $\ba'=\ba_\bi$. Then, the claim can be rewritten as $E_\ba\ast\xi(\bx)=E_{\ba_\bi}\ast\xi(\bx_\bi)$, which directly follows from Lemma~\ref{lem_relabelling_is_fine}. 

Consider now $R\in\sigma$ of arity $r$, $\bx\in R^\X$, $\ba\in A^r$ such that $\bx\prec \ba$. Let $m=|R^\A|$. Since $\bx\in R^\X$, we have $\bx^\tensor{k}\in R^{\X^\tensor{k}}$; being $\xi$ a homomorphism, this implies that $\xi(\bx^\tensor{k})\in R^{\mathbb{F}_{\Qconv}(\A^{\tensor{k}})}$ -- i.e., there exists $q^\bx\in \Qconv^{(m)}$ such that $\xi(\bx_\bi)=\xi(E_\bi\ast\bx^\tensor{k})={q^\bx}_{/\pi_\bi}$ for each $\bi\in [r]^k$. We define 
\begin{align}
\label{eqn_constraints_sherali_adams_from_homo}
\lambda_{R,\bx}(f_{\bx,\ba})\coloneqq 
\left\{
\begin{array}{cl}
\be_\ba\ast q^\bx & \mbox{if } \ba\in R^\A\\
0 & \mbox{otherwise}.
\end{array}
\right.
\end{align}
Clearly, any function $f:\{\bx\}\to A$ can be (uniquely) written as $f_{\bx,\ba}$ for some tuple $\ba$. We claim that~\eqref{eqn_variables_sherali_adams_from_homo} and~\eqref{eqn_constraints_sherali_adams_from_homo} constitute a proper assignment for $\SA^k(\X,\A)$ as defined in Section~\ref{sec:prelims}.

To check (SA1), let $V=\{x_1,\dots,x_k\}\subseteq X$, and let $\bx=(x_1,\dots,x_k)$.
\begin{align*}
\sum_{f:V\to A} \lambda_V(f)&=\sum_{\substack{\ba\in A^k\\\bx\prec\ba}}\lambda_{\{\bx\}}(f_{\bx,\ba})=
\sum_{\substack{\ba\in A^k\\\bx\prec\ba}}E_\ba\ast\xi(\bx)
=
\sum_{\ba\in A^k}E_\ba\ast\xi(\bx)=1,
\end{align*}
where the third equality follows from Lemma~\ref{cor_vanishing_contractions}.

To check (SA2), consider $U\subseteq V\subseteq X$ such that $1\leq |V|\leq k$ and $f:U\to A$. We can write $V=\{\bx\}$ for some $\bx\in X^k$, $U=\{\bx_\bi\}$ for some $\bi\in [k]^k$, and $f=f_{\bx_\bi,\ba}$ for some $\ba\in A^k$ such that $\bx_\bi\prec\ba$. We obtain
\begin{align*}
\sum_{\substack{g:V\to A\\ g|_U=f}}\lambda_V(g)
&=\sum_{\substack{\bb\in A^k\\ \bx\prec\bb\\ \bb_\bi=\ba}}\lambda_{\{\bx\}}(f_{\bx,\bb})
=
\sum_{\substack{\bb\in A^k\\ \bx\prec\bb\\ \bb_\bi=\ba}}E_\bb\ast\xi(\bx)
=
\sum_{\substack{\bb\in A^k\\ \bb_\bi=\ba}}E_\bb\ast\xi(\bx)
=
E_\ba\ast\Pi_\bi\ast\xi(\bx)
=
E_\ba\ast \xi(\bx_\bi)
\\
&=
\lambda_{\{\bx_\bi\}}(f_{\bx_\bi,\ba})
=
\lambda_U(f),
\end{align*}
where the third, fourth, and fifth equalities come from Lemma~\ref{cor_vanishing_contractions}, Lemma~\ref{lem_multiplication_rule}, and Proposition~\ref{lem_a_symmetry}, respectively.

To check (SA3), consider $R\in\sigma$ of arity $r$, $\bx\in R^\X$, $U\subseteq \{\bx\}$ such that $1\leq |U|\leq k$, and $f:U\to A$. Write $U=\{\bx_\bi\}$ for some $\bi\in [r]^k$ and $f=f_{\bx_\bi,\ba}$ for some $\ba\in A^k$ such that $\bx_\bi\prec\ba$.
\begin{align*}
\sum_{\substack{g:\{\bx\}\to A\\ g|_U=f}}\lambda_{R,\bx}(g)
&=
\sum_{\substack{\bb\in A^r\\ \bx\prec\bb\\ \bb_\bi=\ba}}\lambda_{R,\bx}(f_{\bx,\bb})
=
\sum_{\substack{\bb\in R^\A\\ \bx\prec\bb\\ \bb_\bi=\ba}}
\be_\bb\ast q^\bx
=
\sum_{\substack{\bb\in R^\A\\ \bb_\bi=\ba}}
\be_\bb\ast q^\bx
=
E_\ba\ast P_\bi \ast q^\bx
=
E_\ba\ast (q^\bx_{/\pi_\bi})\\
&=
E_\ba\ast \xi(\bx_\bi)
=
\lambda_{\{\bx_\bi\}}(f_{\bx_\bi,\ba})
=
\lambda_U(f),
\end{align*}
where the third and fourth equalities come from Lemma~\ref{lem_ea_ast_Q} and Lemma~\ref{lem_multiplication_rule_P}, respectively.
Finally, (SA4) directly follows from~\eqref{eqn_constraints_sherali_adams_from_homo}.

By the definition of the minion $\Qconv$, both~\eqref{eqn_variables_sherali_adams_from_homo} and~\eqref{eqn_constraints_sherali_adams_from_homo} assign rational values in the interval $[0,1]$ to the variables. We conclude that $\SA^k(\X,\A)$ accepts.  
\end{proof}

\begin{prop}
\label{prop_SAkaccepts_hom_tensor}
Let $k\in\N$, let $\X,\A$ be two $\sigma$-structures, and suppose that $\SA^k(\X,\A)$ accepts. Then $\X^\tensor{k}\to\mathbb{F}_{\Qconv}(\A^{\tensor{k}})$.
\end{prop}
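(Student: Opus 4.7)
The plan is to explicitly construct the homomorphism $\xi:X^k\to\Qconv^{(n^k)}$ from a feasible solution $\{\lambda_V(f),\lambda_{R,\bx}(f)\}$ of $\SA^k(\X,\A)$, effectively reversing the construction used in the proof of Proposition~\ref{prop_first_implication_homo_SAk}. Given $\bx\in X^k$, define the tensor $\xi(\bx)\in\cT^{k;n}(\Q)$ by setting
\begin{align*}
E_\ba\ast\xi(\bx)\coloneqq
\begin{cases}
\lambda_{\{\bx\}}(f_{\bx,\ba})&\text{if }\bx\prec\ba,\\
0&\text{otherwise,}
\end{cases}
\end{align*}
for every $\ba\in A^k$. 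Since $|\{\bx\}|\leq k$, the variable $\lambda_{\{\bx\}}(\cdot)$ belongs to the $\SA^k$ system. Nonnegativity of $\xi(\bx)$ is immediate, and its entries sum to $\sum_{f:\{\bx\}\to A}\lambda_{\{\bx\}}(f)=1$ by (SA1), since $\ba\mapsto f_{\bx,\ba}$ is a bijection from $\{\ba\in A^k:\bx\prec\ba\}$ onto the set of functions $\{\bx\}\to A$. Hence $\xi(\bx)\in\Qconv^{(n^k)}$.

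Next, I would verify that $\xi$ preserves every $R\in\sigma$. Given $\bx\in R^\X$ of arity $r$, the image $\xi(\bx^\tensor{k})$ has the block decomposition $[\xi(\bx_\bi)]_{\bi\in[r]^k}$, so it suffices to exhibit $q\in\Qconv^{(|R^\A|)}$ satisfying $\xi(\bx_\bi)=q_{/\pi_\bi}=P_\bi\ast q$ for every $\bi\in[r]^k$. The natural candidate is
\begin{align*}
\be_\ba\ast q\coloneqq
\begin{cases}
\lambda_{R,\bx}(f_{\bx,\ba})&\text{if }\bx\prec\ba,\\
0&\text{otherwise,}
\end{cases}
\end{align*}
for $\ba\in R^\A$. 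Nonnegativity is clear, while stochasticity of $q$ follows by picking any single element $x\in\{\bx\}$, applying (SA3) with $U=\{x\}$ to rewrite $\lambda_{\{x\}}(\cdot)$ as a sum of $\lambda_{R,\bx}$-values, and combining with (SA1) applied to the singleton $\{x\}$; constraint (SA4) ensures the sum collapses to $\ba\in R^\A$.

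To verify $\xi(\bx_\bi)=P_\bi\ast q$, I would compare the $\ba$-th entry of both sides for $\ba\in A^k$ and $\bi\in[r]^k$. By Lemma~\ref{lem_multiplication_rule_P}, the right-hand side equals $\sum_{\bb\in R^\A,\,\bb_\bi=\ba}\be_\bb\ast q$. If $\bx_\bi\not\prec\ba$, both sides vanish: the left by construction, and the right because $\bx\prec\bb$ together with $\bb_\bi=\ba$ would force $\bx_\bi\prec\bb_\bi=\ba$, so no $\bb$ contributes. If $\bx_\bi\prec\ba$, I would apply (SA3) with $U=\{\bx_\bi\}\subseteq\{\bx\}$ (which has at most $k$ elements since $\bi\in[r]^k$) and $f=f_{\bx_\bi,\ba}$ to express $\lambda_{\{\bx_\bi\}}(f_{\bx_\bi,\ba})$ as a sum over functions $g:\{\bx\}\to A$ extending $f_{\bx_\bi,\ba}$; parameterising such $g$ as $f_{\bx,\bb}$ with $\bx\prec\bb$ and $\bb_\bi=\ba$, and invoking (SA4) to enforce $\bb\in R^\A$, one obtains exactly the right-hand side.

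The main obstacle is the bookkeeping needed to translate between the Sherali-Adams variables, which are indexed by functions $f:V\to A$, and the tensor entries, which are indexed by tuples in $A^k$ or $A^r$. Since $\bx$ may have repeated coordinates, the correspondence $\bb\leftrightarrow f_{\bx,\bb}$ is only well-defined under the side condition $\bx\prec\bb$, which is what forces the case analysis based on the relation $\prec$ and the careful use of (SA4) to absorb residual terms where $f(\bx)\notin R^\A$.
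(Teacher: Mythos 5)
Your proposal is correct and takes essentially the same route as the paper's proof: you build $\xi$ from the variables $\lambda_{\{\bx\}}(f_{\bx,\ba})$ exactly as the paper does, exhibit the same candidate $q$ from the constraint weights $\lambda_{R,\bx}$, and verify stochasticity and the block identity $\xi(\bx_\bi)=q_{/\pi_\bi}$ by the same chain of applications of (SA1), (SA3), (SA4) and Lemma~\ref{lem_multiplication_rule_P}, including the same case split on whether $\bx_\bi\prec\ba$. The only differences are stylistic (e.g., you phrase the stochasticity of $q$ via an arbitrary $x\in\{\bx\}$ rather than $x_1$), and you correctly note, as the paper's proof implicitly does, that neither $k\geq 2$ nor $k$-enhancement is needed for this direction.
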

\begin{proof}
Consider a rational solution of $\SA^k(\X,\A)$ expressed in the notation of Section~\ref{sec:prelims}, and let $\xi:X^k\to\cT^{k;n}(\Q)$ be the function defined by
\begin{align}
\label{eqn_defn_xi_1437}
E_\ba\ast\xi(\bx)=
\left\{
\begin{array}{ll}
\lambda_{\{\bx\}}(f_{\bx,\ba}) & \mbox{if }\bx\prec\ba\\
0 & \mbox{otherwise}
\end{array}
\right.
\hspace{1cm}
\forall \bx\in X^k, \forall \ba\in {A}^k.
\end{align}
Observe that $\xi(\bx)\geq 0$ entrywise, and
\begin{align*}
\sum_{\ba\in {A}^k}E_\ba\ast\xi(\bx)
=
\sum_{\substack{\ba\in {A}^k\\ \bx\prec\ba}}\lambda_{\{\bx\}}(f_{\bx,\ba})=\sum_{f:\{\bx\}\to A}\lambda_{\{\bx\}}(f)=1
\end{align*}
by (SA1). Therefore, $\xi(\bx)\in\Qconv^{(n^k)}$. Take $R\in\sigma$ of arity $r$, and let $\bx=(x_1,\dots,x_r)\in R^\X$. We need to show that $\xi(\bx^\tensor{k})\in R^{\mathbb{F}_{\Qconv}(\A^{\tensor{k}})}$. Let $m=|R^\A|$, and take $q\in \cT^{1;m}(\Q)$ defined by
\begin{align*}
\be_\ba\ast q=
\left\{
\begin{array}{ll}
\lambda_{R,\bx}(f_{\bx,\ba}) & \mbox{if }\bx\prec\ba\\
0 & \mbox{otherwise}
\end{array}
\right.
\hspace{1cm}
\forall \ba\in R^\A.
\end{align*}
Observe that $q\geq 0$ entrywise, and 
\begin{align*}
\sum_{\ba\in R^\A}\be_\ba\ast q
&=
\sum_{\substack{\ba\in R^\A\\ \bx\prec\ba}}\lambda_{R,\bx}(f_{\bx,\ba})
=
\sum_{\substack{\ba\in {A}^r\\ \bx\prec\ba}}\lambda_{R,\bx}(f_{\bx,\ba})
=
\sum_{g:\{\bx\}\to A}\lambda_{R,\bx}(g)
=
\sum_{a\in A}\sum_{\substack{g:\{\bx\}\to A\\ g(x_1)=a}}\lambda_{R,\bx}(g)\\
&=
\sum_{a\in A}\lambda_{\{x_1\}}(f_{x_1,a})
=
1,
\end{align*}
where the second, fifth, and sixth equalities come from (SA4), (SA3), and (SA1), respectively. It follows that $q\in \Qconv^{(m)}$. We claim that $q_{/\pi_\bi}=\xi(\bx_\bi)$ for each $\bi\in [r]^k$. For $\ba\in {A}^k$, we find
\begin{align}
\label{eqn_1438_0510}
E_\ba\ast q_{/\pi_\bi}=E_\ba\ast P_\bi\ast q 
=
\sum_{\substack{\bb\in R^\A\\ \bb_\bi=\ba}}\be_\bb\ast q
=
\sum_{\substack{\bb\in R^\A\\ \bb_\bi=\ba\\ \bx\prec\bb}}\lambda_{R,\bx}(f_{\bx,\bb}).
\end{align}
For each $\bb\in R^\A$ satisfying $\bx\prec\bb$, we have $\bx_\bi\prec\bb_\bi$. Hence, if $\bx_\bi\not\prec\ba$, then $E_\ba\ast q_{/\pi_\bi}=0$. From~\eqref{eqn_defn_xi_1437}, in this case, we also have that $E_\ba\ast\xi(\bx_\bi)=0$. Suppose now that $\bx_\bi\prec\ba$. Then,~\eqref{eqn_1438_0510} yields
\begin{align*}
E_\ba\ast q_{/\pi_\bi}
&=
\sum_{\substack{\bb\in R^\A\\ \bb_\bi=\ba\\ \bx\prec\bb}}\lambda_{R,\bx}(f_{\bx,\bb})
=
\sum_{\substack{\bb\in {A}^r\\ \bb_\bi=\ba\\ \bx\prec\bb}}\lambda_{R,\bx}(f_{\bx,\bb})
=
\sum_{\substack{g:\{\bx\}\to A\\ g|_{\{\bx_\bi\}}=f_{\bx_\bi,\ba}}}\lambda_{R,\bx}(g)
=
\lambda_{\{\bx_\bi\}}(f_{\bx_\bi,\ba})
=
E_\ba\ast\xi(\bx_\bi),
\end{align*}
where the second and fourth equalities come from (SA4) and (SA3), respectively. We conclude that $E_\ba\ast q_{/\pi_\bi}=E_\ba\ast\xi(\bx_\bi)$ in all cases, so that $q_{/\pi_\bi}=\xi(\bx_\bi)$, as claimed. It follows that $\xi(\bx^\tensor{k})\in R^{\mathbb{F}_{\Qconv}(\A^{\tensor{k}})}$, and hence $\xi:\X^\tensor{k}\to\mathbb{F}_{\Qconv}(\A^{\tensor{k}})$ is a homomorphism.
\end{proof}

\begin{rem}
\label{rem_local_consistency_1}
The characterisation of Sherali-Adams acceptance expressed in Theorem~\ref{thm:main} can be straightforwardly adapted to yield a characterisation of acceptance for the (less powerful) $k$-consistency algorithm. Following~\cite{BBKO21}, the $k$-consistency algorithm applied to two $\sigma$-structures $\X$ and $\A$ accepts if and only if there exists a nonempty collection $\mathcal{F}$ of partial homomorphisms from $\X$ to $\A$ with at most $k$-element domains such that $(i)$ $\mathcal{F}$ is closed under restrictions, i.e., for every $f\in \mathcal{F}$ and every $V\subseteq\dom(f)$, $f|_{V}\in \mathcal{F}$, and $(ii)$ $\mathcal{F}$ has the extension property up to $k$, i.e., for every $f\in \mathcal{F}$ and every $V\subseteq X$ with $|V|\leq k$ and $\dom(f)\subseteq V$, there exists $g\in \mathcal{F}$ such that $g$ extends $f$ and $\dom(g)=V$. The role of the minion $\Qconv$ is now taken by the minion $\mathscr{H}=\Pol(\operatorname{HORN-3-SAT})$, whose $L$-ary elements are all nonempty subsets of $[L]$ (cf.~\cite{BBKO21}) and whose free structure is the so-called power structure introduced in~\cite{Feder98:monotone} (see also~\cite{ChenDG13}). 
The minion $\mathscr{H}$ is closely related to $\Qconv$: For each $L\in\N$, we can view $\mathscr{H}^{(L)}$ as the set $\{\supp(q):q\in\Qconv^{(L)}\}$; given $h=\supp(q)\in\mathscr{H}^{(L)}$ and $\pi:[L]\to[L']$, $h_{/\pi}=\supp(q_{/\pi})$. It follows that the results in Section~\ref{sec_tensorisation} can be suitably modified to fit the minion $\mathscr{H}$. As a result, one easily adapts the proof of Proposition~\ref{prop_first_implication_homo_SAk} to show that, if $\xi:\X^\tensor{k}\to\mathbb{F}_{\mathscr{H}}(\A^\tensor{k})$ is a homomorphism,
the collection $\mathcal{F}=\{f_{\bx,\ba}:\bx\in X^k,\ba\in\supp(\xi(\bx))\}$ witnesses that the $k$-consistency algorithm accepts. Conversely, again using $\mathscr{H}$-analogues of the results in Section~\ref{sec_tensorisation}, the proof of Proposition~\ref{prop_SAkaccepts_hom_tensor} can be adapted to show that, given a collection $\mathcal{F}$ of partial homomorphisms witnessing that the $k$-consistency algorithm accepts, the map $\xi:X^k\to\mathcal{T}^{k;n}(\{0,1\})$ defined by setting, for $\bx\in X^k$ and $\ba\in A^k$, $E_\ba\ast\xi(\bx)=1$ if $\bx\prec\ba$ and $f_{\bx,\ba}\in \mathcal{F}$, $E_\ba\ast\xi(\bx)=0$ otherwise yields a homomorphism from $\X^\tensor{k}$ to $\mathbb{F}_{\mathscr{H}}(\A^\tensor{k})$. 
We obtain the following fact: 

\vspace{.1cm}
\noindent\emph{Let $\X,\A$ be two $k$-enhanced $\sigma$-structures and let $k\in\N$, $k\geq \max(2,\max_{R\in\sigma}\ar(R))$. 
Then, the $k$-consistency algorithm applied to $\X$ and $\A$ accepts if and only if $\X^\tensor{k}\to\mathbb{F}_{\mathscr{H}}(\A^\tensor{k})$.}\footnote{Unlike in Theorem~\ref{thm:main}, here we require that $k$ is at least the maximum arity. This is an unsubstantial technicality due to this specific definition of $k$-consistency. Essentially, the reason is that any constraint whose scope has more than $k$ distinct variables does not appear among the constraints of the partial homomorphisms witnessing $k$-consistency, while it does appear in the requirement (SA3) of Sherali-Adams.}
\end{rem}

The results presented in the remaining part of this section show how the multilinear nature of $\mathbb{F}_{\Qconv}(\A^\tensor{k})$ helps describing the functioning of the Sherali-Adams hierarchy.  
\begin{prop}
\label{prop_f_q_a_k}
Let $k\in\N$ and let $\A$ be a $\sigma$-structure. Then $\mathbb{F}_{\Qconv}(\A^\tensor{k})\to[\mathbb{F}_{\Qconv}(\A)]^\tensor{k}$.
\end{prop}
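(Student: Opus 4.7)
The plan is to construct an explicit homomorphism by marginalisation. Given a stochastic tensor $M \in \Qconv^{(n^k)}$, viewed as an element of $\cT^{k;n}(\Q)$, I will define $\xi(M) = (M^{(1)}, \dots, M^{(k)}) \in (\Qconv^{(n)})^k$, where $M^{(j)}$ is the $j$-th marginal: $\be_a \ast M^{(j)} = \sum_{\ba \in A^k,\, a_j = a} E_\ba \ast M$. A short computation shows each $M^{(j)}$ is indeed stochastic, so $\xi$ is a well-defined map between the appropriate domains.

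Next, I need to show that $\xi$ preserves each relation $R \in \sigma$ (of arity $r$ in $\sigma$ and $r^k$ in $\sigma^\tensor{k}$). So suppose $M = [M_\bi]_{\bi \in [r]^k} \in R^{\bF_{\Qconv}(\A^\tensor{k})}$, witnessed by some $q \in \Qconv^{(|R^\A|)}$ with $M_\bi = q_{/\pi_\bi} = P_\bi \ast q$ for every $\bi \in [r]^k$. The natural candidate for a witness on the right-hand side is to reuse the same $q$: set $v_i \coloneqq q_{/\pi_i}$ for each $i \in [r]$, where $\pi_i : R^\A \to A$ sends $\ba \mapsto a_i$. Then $\bv \coloneqq (v_1, \dots, v_r) \in R^{\bF_{\Qconv}(\A)}$ by construction, so $\bv^\tensor{k} \in R^{[\bF_{\Qconv}(\A)]^\tensor{k}}$.

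The heart of the argument is then to verify that $\xi(M_\bi) = \bv_\bi = (v_{i_1}, \dots, v_{i_k})$ for each $\bi \in [r]^k$, which would give $\xi(M) = \bv^\tensor{k}$ and complete the proof that $\xi$ preserves $R$. This is where Lemma~\ref{lem_multiplication_rule_P} comes in: using it to expand $E_\ba \ast P_\bi$ as a sum of unit vectors indexed by $\{\ba' \in R^\A : \ba'_\bi = \ba\}$, the $j$-th marginal of $M_\bi$ evaluated at $a \in A$ becomes
\begin{align*}
\be_a \ast M_\bi^{(j)} \;=\; \sum_{\ba \in A^k,\, a_j = a} E_\ba \ast P_\bi \ast q \;=\; \sum_{\ba' \in R^\A,\, a'_{i_j} = a} \be_{\ba'} \ast q,
\end{align*}
which is precisely $\be_a \ast v_{i_j}$ by the definition of $v_{i_j} = q_{/\pi_{i_j}}$. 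Matching entries across $a \in A$ gives $M_\bi^{(j)} = v_{i_j}$, hence $\xi(M_\bi) = \bv_\bi$.

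I do not anticipate a serious obstacle: the key idea is that marginalising a stochastic tensor commutes with the contraction by $P_\bi$ in the sense made precise above, and once the right witness is identified (the same $q$), everything falls into place. The main bookkeeping is just keeping straight the two different roles played by $q$ — indexing blocks via $\pi_\bi : R^\A \to A^k$ on the left and indexing coordinates via $\pi_i : R^\A \to A$ on the right — which the identity from Lemma~\ref{lem_multiplication_rule_P} links in one line.
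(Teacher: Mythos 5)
Your proposal is correct and follows essentially the same approach as the paper's proof: the paper's map $\psi(T)=(J\ast\Pi_{\bell^{(1)}}\ast T,\dots,J\ast\Pi_{\bell^{(k)}}\ast T)$ is precisely the marginalisation map you define (written abstractly via the all-one tensor $J$ and the projection tensors $\Pi_{\bell^{(j)}}$, but it computes entrywise to $\be_a\ast M^{(j)}=\sum_{\ba:a_j=a}E_\ba\ast M$), the witness tuple $\bv=(q_{/\pi_1},\dots,q_{/\pi_r})$ is identical, and the central verification that $M_\bi^{(j)}=v_{i_j}$ via Lemma~\ref{lem_multiplication_rule_P} mirrors the paper's identity $J\ast\Pi_{\bell^{(j)}}\cont{k}P_\bi=P_{i_j}$. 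A minor benefit of your direct formulation is that it covers $k=1$ without a separate case (the paper must set $k=1$ aside because $J\in\cT^{k-1;n}(\Q)$ degenerates there).
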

\begin{proof}
If $k=1$ the result is trivial (cf.~Footnote~\ref{footnote_RA_RAk}), so we assume $k\geq 2$. Let $J\in\cT^{k-1;n}(\Q)$ be the all-one tensor, and consider, for each $j\in [k]$, some tuple $\bell^{(j)}\in [k]^k$ whose $k$-th entry is $j$.
Recalling the definition in~\eqref{defn_tensor_Pi_i}, observe that the map
\begin{align*}
\psi:\Qconv^{(n^k)}&\to [\Qconv^{(n)}]^k\\
T&\mapsto (J\ast\Pi_{\bell^{(1)}}\ast T,\dots,J\ast\Pi_{\bell^{(k)}}\ast T)
\end{align*}
is well defined since, for each $j\in [k]$, 
\begin{align*}
\sum_{a\in {A}}\be_a\ast(J\ast\Pi_{\bell^{(j)}}\ast T)&=\sum_{\ba\in {A}^k}E_\ba\ast \Pi_{\bell^{(j)}}\ast T
=
\sum_{\ba,\ba'\in {A}^k}(E_\ba\ast \Pi_{\bell^{(j)}}\ast E_{\ba'})(E_{\ba'}\ast T)\\
&=\sum_{\ba'\in {A}^k}E_{\ba'}\ast T=1.
\end{align*}
We claim that $\psi$ is a homomorphism from $\mathbb{F}_{\Qconv}(\A^\tensor{k})$ to $[\mathbb{F}_{\Qconv}(\A)]^\tensor{k}$. To prove the claim, consider a relation symbol $R\in\sigma$ of arity $r$, and take $M=[M_\bi]_{\bi\in [r]^k}\in R^{\mathbb{F}_{\Qconv}(\A^\tensor{k})}$. By the definition of free structure, we know that there exists a tuple $q\in\Qconv^{(|R^\A|)}$ such that $M_\bi=q_{/\pi_\bi}=P_\bi\ast q$ for each $\bi\in [r]^k$, where $P_\bi$ is the tensor defined by~\eqref{defn_tensor_Pi}. Consider now the tuple $\bv=(P_1\ast q,\dots,P_r\ast q)$, where, for $j\in [r]$, $P_j$ is the $|A|\times |R^\A|$ matrix defined by 
\begin{align*}
\be_a\ast P_j\ast \be_\bb=
\left\{
\begin{array}{ll}
1&\mbox{if }b_j=a\\
0&\mbox{otherwise}
\end{array}
\right.
\hspace{2cm}
\forall a\in {A},\forall\bb=(b_1,\dots,b_r)\in R^\A.
\end{align*}
Again by the definition of free structure (cf.~Example~\ref{ex_free_structure}), we see that $\bv\in R^{\mathbb{F}_{\Qconv}(\A)}$; hence, $\bv^\tensor{k}\in R^{[\mathbb{F}_{\Qconv}(\A)]^\tensor{k}}$. The claim is proved if we show that $\psi(M)=\bv^\tensor{k}$ -- or, equivalently, that $\psi(M_\bi)=\bv_\bi$ for each $\bi=(i_1,\dots,i_k)\in [r]^k$. First, notice that 
\begin{align*}
\psi(M_\bi)&=\psi(P_\bi\ast q)=
(J\ast\Pi_{\bell^{(1)}}\ast (P_\bi\ast q),\dots,J\ast\Pi_{\bell^{(k)}}\ast (P_\bi\ast q))\\
&=
((J\ast\Pi_{\bell^{(1)}}\cont{k} P_\bi)\ast q,\dots,(J\ast\Pi_{\bell^{(k)}}\cont{k} P_\bi)\ast q),
\\
\bv_\bi&=(P_{i_1}\ast q,\dots, P_{i_k}\ast q).
\end{align*}
Therefore, to prove the claim, it suffices to argue that $J\ast\Pi_{\bell^{(j)}}\cont{k} P_\bi = P_{i_j}$ for each $j\in [k]$. For $a\in {A}$ and $\bb=(b_1,\dots,b_r)\in R^\A$, we find 
\begin{align*}
\be_a\ast(J\ast\Pi_{\bell^{(j)}}\cont{k} P_\bi)\ast \be_\bb
&=
\sum_{\substack{\ba\in {A}^k \\ a_k=a}}E_\ba\ast \Pi_{\bell^{(j)}}\ast (P_\bi\ast \be_\bb)
=
\sum_{\substack{\ba\in {A}^k \\ a_k=a}}\sum_{\ba'\in {A}^k}(E_\ba\ast \Pi_{\bell^{(j)}}\ast E_{\ba'})(E_{\ba'}\ast P_\bi\ast \be_\bb)\\
&=
\sum_{\substack{\ba\in {A}^k \\ a_k=a}}(E_\ba\ast \Pi_{\bell^{(j)}}\ast E_{\bb_\bi})
=
\left\{
\begin{array}{ll}
1&\mbox{if }b_{i_j}=a\\
0&\mbox{otherwise}
\end{array}
\right.
=
\be_a\ast P_{i_j} \ast \be_\bb,
\end{align*}
where the second-to-last equality is due to the fact that the $k$-th entry of the tuple $\bell^{(j)}$ is $j$. Hence, the claim is true. 
\end{proof}

Combining Proposition~\ref{prop_f_q_a_k} with Proposition~\ref{prop_SAkaccepts_hom_tensor}, one finds that, if $\SA^k(\X,\A)$ accepts, then 
\begin{align*}
\X^\tensor{k}\to \mathbb{F}_{\Qconv}(\A^\tensor{k})\to[\mathbb{F}_{\Qconv}(\A)]^\tensor{k}.
\end{align*}
By Lemma~\ref{prop_correspondence_morphisms_tensor_structures}, this yields $\X\to\mathbb{F}_{\Qconv}(\A)$, which implies that $\BLP(\X,\A)$ accepts (cf.~Appendix~\ref{app:BLP}). Hence, Proposition~\ref{prop_f_q_a_k} reflects the well-known fact that any level of Sherali-Adams is at least as powerful as $\BLP$ (see also Remark~\ref{remark_rank_one_segre}). In a similar way, the next proposition is the geometric counterpart of the algorithmic fact that higher levels of Sherali-Adams are tighter than lower levels.  

\begin{prop}
\label{prop_sherali_adams_smaller_level}
Let $k\geq p\in\N$, let $\X,\A$ be $\sigma$-structures, and suppose that $\X^\tensor{k}\to \mathbb{F}_{\Qconv}(\A^\tensor{k})$. Then $\X^\tensor{p}\to \mathbb{F}_{\Qconv}(\A^\tensor{p})$. 
\end{prop}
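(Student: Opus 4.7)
The plan is to construct a homomorphism $\eta:\X^{\tensor{p}}\to\mathbb{F}_{\Qconv}(\A^{\tensor{p}})$ out of a given homomorphism $\xi:\X^{\tensor{k}}\to\mathbb{F}_{\Qconv}(\A^{\tensor{k}})$ via a marginalisation procedure that simultaneously pads $p$-tuples over $X$ to $k$-tuples and sums out the extra coordinates on the image side. Explicitly, for $\by=(y_1,\dots,y_p)\in X^p$ let $\by^+\coloneqq(y_1,\dots,y_p,y_p,\dots,y_p)\in X^k$ be the tuple obtained by appending $k-p$ extra copies of $y_p$, let $J\in\cT^{k-p;n}(\Q)$ be the all-one tensor, and set
$$\eta(\by)\coloneqq\xi(\by^+)\ast J\in\cT^{p;n}(\Q),$$
equivalently $E_\ba\ast\eta(\by)=\sum_{\bc\in A^{k-p}}E_{(\ba,\bc)}\ast\xi(\by^+)$ for every $\ba\in A^p$.

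Each $\eta(\by)$ lies in $\Qconv^{(n^p)}$, since its entries are nonnegative partial sums of the entries of the stochastic tensor $\xi(\by^+)$, and they aggregate the total mass $1$. To check that $\eta$ preserves relations, fix $R\in\sigma$ of arity $r$ and $\bx\in R^\X$. Since $\bx^{\tensor{k}}\in R^{\X^{\tensor{k}}}$, by assumption there is a witness $q\in\Qconv^{(|R^\A|)}$ with $\xi(\bx_\bi)=P_\bi\ast q$ for every $\bi\in [r]^k$. My claim is that the same $q$ witnesses $\eta(\bx^{\tensor{p}})\in R^{\mathbb{F}_{\Qconv}(\A^{\tensor{p}})}$, i.e., that $\eta(\bx_\bj)=\tilde P_\bj\ast q$ for every $\bj\in [r]^p$, where $\tilde P_\bj$ denotes the level-$p$ analogue of the tensor defined by~\eqref{defn_tensor_Pi}.

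The crucial observation enabling this is that $(\bx_\bj)^+=\bx_{\bj'}$, where $\bj'\coloneqq(j_1,\dots,j_p,j_p,\dots,j_p)\in [r]^k$, so that for any $\bb\in R^\A$, $\bb_{\bj'}=(\bb_\bj,b_{j_p},\dots,b_{j_p})$. Using Lemma~\ref{lem_multiplication_rule_P} at level $k$ and then at level $p$, a short computation gives
$$E_\ba\ast\eta(\bx_\bj)=\sum_{\bc\in A^{k-p}}E_{(\ba,\bc)}\ast P_{\bj'}\ast q=\sum_{\substack{\bb\in R^\A\\ \bb_\bj=\ba}}\be_\bb\ast q=E_\ba\ast\tilde P_\bj\ast q$$
for every $\ba\in A^p$, establishing the claim and hence the homomorphism property.

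The main obstacle -- more book-keeping than conceptual -- is verifying that the collapse over $\bc$ in the middle equality introduces no spurious terms and no double counting. This is guaranteed by the fact that $\bj'$ repeats $j_p$ in its last $k-p$ slots: for each $\bb\in R^\A$ with $\bb_\bj=\ba$, the appended block $\bc=(b_{j_p},\dots,b_{j_p})$ is uniquely determined by $\bb$, making the correspondence $\bb\leftrightarrow(\bb,\bc)$ bijective. With this point in place, $\eta$ is the desired homomorphism from $\X^{\tensor{p}}$ to $\mathbb{F}_{\Qconv}(\A^{\tensor{p}})$.
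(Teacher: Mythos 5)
Your proof is correct and follows essentially the same argument as the paper: pad a $p$-tuple to a $k$-tuple by repeating an existing coordinate, apply $\xi$, and marginalise the trailing $k-p$ modes by contracting with the all-one tensor, then check that the same witness $q$ works at level $p$. The only (cosmetic) difference is that you pad by repeating $y_p$, whereas the paper repeats $y_1$.
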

\begin{proof}
Fix a homomorphism $f:\X^\tensor{k}\to\mathbb{F}_{\Qconv}(\A^\tensor{k})$, and consider the function $g:X^p\to\cT^{p;n}(\Q)$ defined by $(x_1,\dots,x_p)\mapsto f((x_1,\dots,x_p,x_1,x_1,\dots,x_1))\ast J$, where $J\in\cT^{k-p;n}(\Q)$ is the all-one tensor. Observe that
\begin{align*}
\sum_{\ba\in {A}^p}E_\ba\ast g((x_1,\dots,x_p))
  &=\sum_{\ba\in {A}^p}E_\ba\ast f((x_1,\dots,x_p,x_1,\dots,x_1))\ast J\\
  &=
\sum_{\ba\in {A}^p}\sum_{\bb\in {A}^{k-p}}E_\ba\ast f((x_1,\dots,x_p,x_1,\dots,x_1))\ast E_\bb\\
&=
\sum_{\bc\in {A}^k}E_\bc\ast f((x_1,\dots,x_p,x_1,\dots,x_1))=1.
\end{align*}
We now show that $g$ yields a homomorphism from $\X^\tensor{p}$ to $\mathbb{F}_{\Qconv}(\A^\tensor{p})$. Take $R\in\sigma$ of arity $r$, and let $\bx=(x_1,\dots,x_r)\in R^\X$, so $\bx^\tensor{p}\in R^{\X^\tensor{p}}$. We need to show that $g(\bx^\tensor{p})\in R^{\mathbb{F}_{\Qconv}(\A^\tensor{p})}$. Observe that $\bx^\tensor{k}\in R^{\X^\tensor{k}}$; since $f$ is a homomorphism, we have $ f(\bx^\tensor{k})\in R^{\mathbb{F}_{\Qconv}(\A^\tensor{k})}$. Therefore, there exists $q\in\Qconv^{(|R^\A|)}$ such that $f(\bx_\bi)=q_{/\pi_\bi}$ for each $\bi\in [r]^k$. Take $\bell=(\ell_1,\dots,\ell_p)\in [r]^p$. We claim that $g(\bx_{\bell})=q_{/\pi_{\bell}}$. Consider $\bj\coloneqq(\ell_1,\dots,\ell_p,\ell_1,\ell_1,\dots,\ell_1)\in [r]^k$. For $\ba\in {A}^p$, we have 
\begin{align*}
E_\ba\ast g(\bx_{\bell})
&=
E_\ba\ast f((x_{\ell_1},\dots,x_{\ell_p},x_{\ell_1},\dots,x_{\ell_1}))\ast J
=
E_\ba\ast f(\bx_\bj)\ast J\\
&=
E_\ba\ast q_{/\pi_\bj}\ast J
=
\sum_{\bb\in {A}^{k-p}} E_\ba\ast q_{/\pi_\bj}\ast E_\bb
=
\sum_{\bb\in {A}^{k-p}} E_{(\ba,\bb)}\ast q_{/\pi_\bj}\\
&=
\sum_{\bb\in {A}^{k-p}} E_{(\ba,\bb)}\ast P_\bj\ast q
=
\sum_{\bb\in {A}^{k-p}}\sum_{\ba'\in R^\A} (E_{(\ba,\bb)}\ast P_\bj\ast \be_{\ba'})(\be_{\ba'}\ast q)\\
&=
\sum_{\ba'\in R^\A}(\be_{\ba'}\ast q)\sum_{\bb\in {A}^{k-p}}(E_{(\ba,\bb)}\ast P_\bj\ast \be_{\ba'})
=
\sum_{\substack{\ba'\in R^\A\\\ba'_{\bell}=\ba}}\be_{\ba'}\ast q\\
&=
\sum_{\ba'\in R^\A}(E_\ba\ast P_{\bell}\ast \be_{\ba'})(\be_{\ba'}\ast q)
=
E_\ba\ast P_{\bell}\ast q
=
E_\ba\ast q_{/\pi_{\bell}}.
\end{align*}
This shows that the claim is true, thus concluding the proof of the proposition.
\end{proof}

The following result will be used in Section~\ref{sec:approx} (specifically, in the proof of Proposition~\ref{prop_reduction_line_digraph}). It reflects the fact that Sherali-Adams solutions only give a nonzero weight to those assignments that yield partial homomorphisms from $\X$ to $\A$.

\begin{prop}
\label{lem_partial_homo}
Let $k\in\N$ with $k\geq 2$, let $\X,\A$ be two $k$-enhanced $\sigma$-structures, and let $\xi:\X^{\tensor{k}}\to\mathbb{F}_{\Qconv}(\A^{\tensor{k}})$ be a homomorphism. Let $R\in\sigma$ have arity $r$, and take $\bx\in X^k$, $\ba\in A^k$, and $\bi\in [k]^r$. If $\bx_\bi\in R^\X$ and $\ba_\bi\not\in R^\A$, then $E_\ba\ast\xi(\bx)=0$.
\end{prop}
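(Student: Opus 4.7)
The plan is to associate with the constraint $\bx_\bi \in R^\X$ a stochastic vector $q \in \Qconv^{(|R^\A|)}$ supplied by the fact that $\xi$ preserves $R$, argue that the vanishing of $q$ at forbidden assignments propagates to $\xi(\bx)$ through the consistency equation of Proposition~\ref{lem_a_symmetry}, and engineer an auxiliary projection tuple $\bj\in[r]^k$ to make the two quantities line up geometrically.

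First I would dispose of the case $\bx\not\prec\ba$, in which $E_\ba\ast\xi(\bx)=0$ holds directly by Lemma~\ref{cor_vanishing_contractions}. Assuming $\bx\prec\ba$ (so also $\bx_\bi\prec\ba_\bi$), the fact that $\xi$ preserves $R$ and $\bx_\bi\in R^\X$ yields some $q\in\Qconv^{(|R^\A|)}$ with $\xi((\bx_\bi)_\bj)=P_\bj\ast q$ for every $\bj\in[r]^k$; Lemma~\ref{lem_ea_ast_Q} applied to $\bx_\bi$ ensures that $q$ is supported on $\{\bd\in R^\A:\bx_\bi\prec\bd\}$. The key step is to choose $\bj\in[r]^k$ whose image under $\bi$ is all of $\Imago(\bi)\subseteq[k]$ -- possible because $|\Imago(\bi)|\leq k$, by taking one representative from each fiber of $\bi$ and padding by repetition. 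Setting $\bi''\coloneqq\bi\circ\bj\in[k]^k$, one checks $(\bx_\bi)_\bj=\bx_{\bi''}$, so combining with the consistency equation gives
\begin{align*}
\Pi_{\bi''}\ast\xi(\bx)=\xi(\bx_{\bi''})=\xi((\bx_\bi)_\bj)=P_\bj\ast q.
\end{align*}
Contracting both sides with $E_{\ba_{\bi''}}$ and expanding via Lemmas~\ref{lem_multiplication_rule} and~\ref{lem_multiplication_rule_P} yields the identity
\begin{align*}
\sum_{\substack{\bb\in A^k\\\bb_{\bi''}=\ba_{\bi''}}}E_\bb\ast\xi(\bx)=\sum_{\substack{\bd\in R^\A\\\bd_\bj=\ba_{\bi''}}}\be_\bd\ast q
\end{align*}
between sums of non-negative terms, the left-hand side containing $E_\ba\ast\xi(\bx)$ as a summand (since $\ba_{\bi''}=\ba_{\bi''}$ trivially).

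The main obstacle is to show that the right-hand side vanishes. Using the support restriction on $q$, one may restrict the sum to those $\bd\in R^\A$ additionally satisfying $\bx_\bi\prec\bd$, and I would claim that any such $\bd$ is forced to equal $\ba_\bi$: for each $t\in[r]$, the choice of $\bj$ furnishes $s\in[k]$ with $i_{j_s}=i_t$, so that $d_{j_s}=a_{i_{j_s}}=a_{i_t}$; combining the identity $x_{i_t}=x_{i_{j_s}}$ with $\bx_\bi\prec\bd$ then forces $d_t=d_{j_s}=a_{i_t}$. Since $\ba_\bi\notin R^\A$ by hypothesis, no such $\bd$ exists, the right-hand sum is empty, and therefore $E_\ba\ast\xi(\bx)=0$ as one of the non-negative summands on the left-hand side.
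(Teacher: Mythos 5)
Correct, and essentially the paper's own argument: you produce $q$ from the constraint $\bx_\bi\in R^\X$, pass it through the consistency equation via a carefully chosen $\bj\in[r]^k$ (one representative per fiber of $\bi$, as in the paper's $j_\alpha=\min S_\alpha$), and kill the resulting non-negative sum by showing any $\bd$ in the support of $q$ with $\bd_\bj=\ba_{\bi''}$ is forced to equal $\ba_\bi\notin R^\A$. The opening case split on $\bx\prec\ba$ is harmless but superfluous, since the subsequent argument never uses $\bx_\bi\prec\ba_\bi$.
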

\begin{proof}
From $\bx_\bi\in R^\X$ we have $\bx_\bi^\tensor{k}\in R^{\X^\tensor{k}}$ and, thus, $\xi(\bx_\bi^\tensor{k})\in R^{\mathbb{F}_{\Qconv}(\A^{\tensor{k}})}$. It follows that there exists $q\in\Qconv^{(|R^\A|)}$ such that $\xi(\bx_{\bi_\bj})=q_{/\pi_\bj}$ for each $\bj\in [r]^k$. Proposition~\ref{lem_a_symmetry} then yields
\begin{align}
\label{eqn_1038_2702}
\Pi_{\bi_\bj}\ast\xi(\bx)
&=
\xi(\bx_{\bi_\bj})
=
q_{/\pi_{\bj}}
=
P_\bj\ast q.
\end{align}
Consider, for each $\alpha\in [k]$, the set $S_\alpha=\{\beta\in [r]:i_\beta=\alpha\}$, and fix an element $\hat{\beta}\in [r]$. The tuple $\bj\in [r]^k$ defined by setting $j_\alpha=\min S_\alpha$ if $S_\alpha\neq\emptyset$, $j_\alpha=\hat{\beta}$ otherwise satisfies $\bi_{\bj_\bi}=\bi$. We obtain
\begin{align}
\label{eqn_2903_1640}
E_{\ba_{\bi_\bj}}\ast P_\bj\ast q
&=
\sum_{\substack{\bb\in R^\A\\\bb_\bj=\ba_{\bi_\bj}}}\be_\bb\ast q
=
\sum_{\substack{\bb\in R^\A\\ \bx_\bi\prec\bb \\\bb_\bj=\ba_{\bi_\bj}}}\be_\bb\ast q,
\end{align}
where the first equality comes from Lemma~\ref{lem_multiplication_rule_P} and the second from Lemma~\ref{lem_ea_ast_Q}. We claim that the sum on the right-hand side of~\eqref{eqn_2903_1640} is zero. Indeed, let $\bb\in A^r$ satisfy $\bx_\bi\prec\bb$ and $\bb_\bj=\ba_{\bi_\bj}$. For any $\alpha\in [r]$ we have $i_\alpha=i_{j_{i_\alpha}}$, so $x_{i_\alpha}=x_{i_{j_{i_\alpha}}}$ and, hence, $b_\alpha=b_{j_{i_\alpha}}$. It follows that $\bb=\bb_{\bj_\bi}=\ba_{\bi_{\bj_\bi}}=\ba_\bi\not\in R^\A$, which proves the claim. Combining this with~\eqref{eqn_1038_2702},~\eqref{eqn_2903_1640}, and Lemma~\ref{lem_multiplication_rule}, we find
\begin{align*}
0
&=
E_{\ba_{\bi_\bj}}\ast P_\bj\ast q
=
E_{\ba_{\bi_\bj}}\ast
\Pi_{\bi_\bj}\ast\xi(\bx)
=
\sum_{\substack{\bb\in A^k\\ \bb_{\bi_\bj}=\ba_{\bi_\bj}}}E_\bb\ast\xi(\bx)
\geq 
E_\ba\ast\xi(\bx),
\end{align*}
which concludes the proof.
\end{proof}

The next result -- in combination with Theorem~\ref{thm:main} -- yields a geometric explanation for the known fact that, for every $\PCSP$ template $(\A,\B)$, the $k$-th level of Sherali-Adams correctly classifies any instance $\X$ of $\PCSP(\A,\B)$ such that the domain of $\X$ contains $k$ (or fewer) elements.

\begin{prop}
\label{prop_sherali_adams_exact}
Let $k\in\N$ with $k\geq 2$, let $\X,\A$ be two $k$-enhanced $\sigma$-structures such that $X=[k]$, and let $g:\X^\tensor{k}\to\mathbb{F}_{\Qconv}(\A^\tensor{k})$ be a homomorphism. Given $\ba=(a_1,\dots,a_k)\in\supp(g((1,\dots,k)))$, the function $f:X\to A$ defined by $j\mapsto a_j$ yields a homomorphism from $\X$ to $\A$.
\end{prop}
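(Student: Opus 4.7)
The plan is to derive this as an almost immediate consequence of Proposition~\ref{lem_partial_homo}, exploiting the special structure $X=[k]$. The key observation is that, since $X=[k]$, the distinguished tuple $\bx_0 \coloneqq (1,2,\dots,k)\in X^k$ acts as a universal ``reservoir'' from which every tuple $\by\in X^r$ can be recovered as a projection: $(\bx_0)_\by=\by$ for any $\by\in X^r$ (viewed as an element of $[k]^r$).

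Concretely, the first step is to unpack the hypothesis $\ba\in\supp(g(\bx_0))$ as $E_\ba\ast g(\bx_0)\neq 0$. Then, to verify that $f$ is a homomorphism, I would fix an arbitrary relation symbol $R\in\sigma$ of arity $r$ and an arbitrary tuple $\by=(y_1,\dots,y_r)\in R^\X$, and aim to conclude that $f(\by)=(a_{y_1},\dots,a_{y_r})\in R^\A$. Treating $\by$ simultaneously as a tuple in $[k]^r$, one computes $(\bx_0)_\by=\by\in R^\X$ and $\ba_\by=(a_{y_1},\dots,a_{y_r})=f(\by)$.

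Now I would argue by contradiction: if $f(\by)=\ba_\by\notin R^\A$, then Proposition~\ref{lem_partial_homo} applied with $\bx=\bx_0$, $\bi=\by$ (noting $(\bx_0)_\by=\by\in R^\X$) forces $E_\ba\ast g(\bx_0)=0$, contradicting $\ba\in\supp(g(\bx_0))$. Hence $f(\by)\in R^\A$, so $f$ preserves every relation, as required.

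I do not foresee a serious obstacle: the whole argument is just Proposition~\ref{lem_partial_homo} specialised to the universal tuple $\bx_0=(1,\dots,k)$, made possible by the assumption $X=[k]$. The only mild bookkeeping point is ensuring that $\by\in X^r$ is legitimately reinterpretable as an element of $[k]^r$ so that Proposition~\ref{lem_partial_homo} applies, and that the requirement $k\geq 2$ (inherited from that proposition) is satisfied -- both of which are immediate from the hypotheses.
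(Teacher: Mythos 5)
Your proof is correct, and it takes a genuinely different route from the paper's. The paper argues directly: it fixes $\bx\in R^\X$, supposes $\ba_\bx\notin R^\A$, unpacks the homomorphism condition to obtain a stochastic vector $q$, and then derives a contradiction through a combinatorial analysis built on Lemma~\ref{lem_multiplication_rule_P}, Lemma~\ref{lem_ea_ast_Q}, Proposition~\ref{lem_a_symmetry}, and Lemma~\ref{lem_multiplication_rule} (choosing a suitable index tuple $\bi$ with $\{\bx\}=\{\bx_\bi\}$ to trace a disagreeing coordinate). You instead notice that, when $X=[k]$, the tuple $\bx_0=(1,\dots,k)$ is the identity under projection, so every $\by\in R^\X$ arises as $(\bx_0)_\by$, and then you apply Proposition~\ref{lem_partial_homo} (stated and proved earlier in the paper, so there is no circularity) with $\bx=\bx_0$, $\bi=\by$. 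This is more modular: the contradiction machinery the paper redevelops from scratch is exactly what Proposition~\ref{lem_partial_homo} already packages, so your version is shorter and makes the dependency structure clearer. The only thing both arguments share is the high-level shape (contradiction from $\ba\in\supp(g(\bx_0))$), and the hypotheses you need ($k\geq 2$, $k$-enhancement) are present. A tiny notational care worth keeping explicit, which you do, is that $\by$ plays a double role as an element of $X^r$ and as an index tuple in $[k]^r$, legitimate precisely because $X=[k]$.
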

\begin{proof}
Let $R\in\sigma$ be a relation symbol of arity $r$, and take a tuple $\bx=(x_1,\dots,x_r)\in R^\X$. Notice that $f(\bx)=(a_{x_1},\dots,a_{x_r})=\ba_\bx$. For the sake of contradiction, suppose that $\ba_\bx\not\in R^\A$. Since $\bx\in R^\X$, we have $\bx^\tensor{k}\in R^{\X^\tensor{k}}$; being $g$ a homomorphism, this implies that $g(\bx^\tensor{k})\in R^{\mathbb{F}_{\Qconv}(\A^\tensor{k})}$. Therefore, $\exists q\in\Qconv^{(|R^\A|)}$ such that $g(\bx_\bi)=q_{/\pi_\bi}$ for each $\bi\in [r]^k$. Set $\by\coloneqq\bx_\bi\in [k]^k$, and observe that
\begin{align*}
E_{\ba_\by}\ast g(\by)
&=
E_{\ba_\by}\ast g(\bx_\bi)
=
E_{\ba_\by}\ast q_{/\pi_\bi}
=
E_{\ba_\by}\ast P_\bi\ast q
=
\sum_{\substack{\bc\in R^\A \\ \bc_\bi=\ba_\by}}\be_\bc\ast q
=
\sum_{\substack{\bc\in R^\A \\ \bc_\bi=\ba_\by \\ \bx\prec\bc}}\be_\bc\ast q
,
\end{align*}
where the fourth and fifth equalities comes from Lemma~\ref{lem_multiplication_rule_P} and Lemma~\ref{lem_ea_ast_Q}, respectively. On the other hand, Proposition~\ref{lem_a_symmetry} and Lemma~\ref{lem_multiplication_rule} yield
\begin{align*}
E_{\ba_\by}\ast g(\by)
&=
E_{\ba_\by}\ast g((1,\dots,k)_\by)
=
E_{\ba_\by}\ast \Pi_\by\ast g((1,\dots,k))\\
&=
\sum_{\substack{\bb\in {A}^k \\ \bb_\by=\ba_\by}}E_\bb\ast g((1,\dots,k))
\geq E_\ba\ast g((1,\dots,k))>0.
\end{align*}
We deduce that, for each $\bi\in [r]^k$, there exists some $\bc=(c_1,\dots,c_r)\in R^\A$ such that $\bc_\bi=\ba_{\bx_\bi}$ and $\bx\prec\bc$. Choose $\bi$ so that $\{\bx\}=\{\bx_\bi\}$. Since $\bc\in R^\A$ and $\ba_\bx\not\in R^\A$, we have $\bc\neq\ba_\bx$; so, $c_p\neq a_{x_p}$ for some $p\in [r]$. From $x_p\in\{\bx\}=\{\bx_\bi\}$, we obtain $x_p=x_{i_t}$ for some $t\in [k]$. Since $\bx\prec\bc$, this yields $c_p=c_{i_t}$. Therefore, $c_{i_t}=c_p\neq a_{x_p}=a_{x_{i_t}}$, so $\bc_\bi\neq \ba_{\bx_\bi}$, a contradiction. We conclude that $f(\bx)=\ba_\bx\in R^\A$, so that $f$ is a homomorphism.
\end{proof}
\begin{rem}
\label{rem_prop_false_if_no_k_enhanced}
Proposition~\ref{prop_sherali_adams_exact} is not true if the requirement that $\X$ and $\A$ be $k$-enhanced is dropped. For example, let $k=3$, and suppose that $\X$ is a directed $3$-cycle on domain $[3]$ and $\A$ is $\K_2$. Consider the function $g:[3]^3\to \cT^{3;2}(\mathbb{Q})$ defined by
\[
\begin{array}{lllllll}
&g((1,1,1))=g((2,2,2))=g((3,3,3))&=
\left[\begin{array}{@{}cc|cc@{}}
\frac{1}{2}&0&0&0\\
0&0&0&\frac{1}{2}
\end{array}\right]\\[10pt]
&g((1,1,2))=g((2,2,1))=g((2,2,3))=g((3,3,2))=g((3,3,1))
=g((1,1,3))&=
\left[\begin{array}{@{}cc|cc@{}}
0&\frac{1}{2}&0&0\\
0&0&\frac{1}{2}&0
\end{array}\right]\\[10pt]
&g((1,2,2))=g((2,1,1))=g((2,3,3))=g((3,2,2))=g((3,1,1))
=g((1,3,3))&=
\left[\begin{array}{@{}cc|cc@{}}
0&0&\frac{1}{2}&0\\
0&\frac{1}{2}&0&0
\end{array}\right]\\[10pt]
&g((1,2,1))=g((2,1,2))=g((2,3,2))=g((3,2,3))=g((3,1,3))=
g((1,3,1))&=
\left[\begin{array}{@{}cc|cc@{}}
0&0&0&\frac{1}{2}\\
\frac{1}{2}&0&0&0
\end{array}\right]\\[10pt]
&g((1,2,3))=g((1,3,2))=g((2,1,3))=g((2,3,1))=g((3,1,2))
=g((3,2,1))&=
\left[\begin{array}{@{}cc|cc@{}}
\frac{1}{8}&\frac{1}{8}&\frac{1}{8}&\frac{1}{8}\\
\frac{1}{8}&\frac{1}{8}&\frac{1}{8}&\frac{1}{8}
\end{array}\right].
\end{array}
\]
One can check that $g$ yields a homomorphism from $\X^\tensor{3}$ to $\mathbb{F}_{\Qconv}(\A^\tensor{3})$. Observe that \linebreak$\supp(g((1,2,3)))=[2]^3$, so Proposition~\ref{prop_sherali_adams_exact} would imply that any map $[3]\to [2]$ yields a homomorphism from $\X$ to $\A$ -- which is certainly false, as $\X\not\to\A$.

Proposition~\ref{prop_sherali_adams_exact} is also not true if the requirement $k\geq 2$ is dropped. For example, consider the $1$-enhanced structures $\X=([1];R_1^\X=[1],R^\X=\{(1,1)\})$ and $\A=([2];R_1^\A=[2],R^\A=\{(1,2),(2,1)\})$. The assignment $1\mapsto (0.5,0.5)$ yields a homomorphism $g:\X\to \mathbb{F}_{\Qconv}(\A)$ with $\supp(g(1))=[2]$. However, $\X\not\to\A$.  
\end{rem}

We now present an interesting application of the tensorisation machinery to the problem of counting homomorphisms: The number of homomorphisms between two $\sigma$-structures $\X,\A$ can be read off on the image of a specific tuple under a maximum-support homomorphism from $\X^\tensor{k}$ to $\mathbb{F}_{\Qconv}(\A^\tensor{k})$ for suitable $k$.

Given two homomorphisms $g_A,g_B:\X\to\mathbb{F}_{\Qconv}(\A)$, one easily checks that the assignment $x\mapsto \frac{1}{2}(g_A(x)+g_B(x))$ yields a new homomorphism $g_C:\X\to\mathbb{F}_{\Qconv}(\A)$. As a consequence, if $\X\to\mathbb{F}_{\Qconv}(\A)$, there exists some homomorphism $g:\X\to\mathbb{F}_{\Qconv}(\A)$ having \emph{maximum support} -- i.e., such that $\supp(g(x))\supseteq\supp(g'(x))$ for any $g':\X\to\mathbb{F}_{\Qconv}(\A)$ and for any $x\in X$. Such homomorphism can be found in polynomial time (in the size of $\X$) by searching a relative-interior point in the polytope described by the constraints of $\BLP(\X,\A)$~\cite{schrijver1998theory}.
\begin{prop}
\label{cor_number_homomorphisms_from_tensors}
Let $k\in\N$ with $k\geq 2$, and let $\X,\A$ be two $k$-enhanced $\sigma$-structures such that $X=[k]$. Then
\begin{itemize}
\item
$|\Hom(\X,\A)|=0$\; if \; $\X^\tensor{k}\not\to\mathbb{F}_{\Qconv}(\A^\tensor{k})$;
\vspace{.2cm}
\item
$|\Hom(\X,\A)|=|\supp(g((1,\dots,k)))|$
   \; if \; $g:\X^\tensor{k}\to\mathbb{F}_{\Qconv}(\A^\tensor{k})$ has maximum support.
\end{itemize}
\end{prop}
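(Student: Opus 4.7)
The plan is to exhibit two injections
$\Hom(\X,\A)\hookrightarrow\supp(g((1,\dots,k)))$ and
$\supp(g((1,\dots,k)))\hookrightarrow\Hom(\X,\A)$, and use them to deduce both bullets simultaneously.

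First I would build a canonical tensorisation of a classical homomorphism. Given $f\in\Hom(\X,\A)$, define $g_f:X^k\to\cT^{k;n}(\Q)$ by $g_f(\bx)\coloneqq E_{f(\bx)}$, where $f(\bx)\coloneqq (f(x_1),\dots,f(x_k))\in A^k$. This is a stochastic tensor, so $g_f(\bx)\in\Qconv^{(n^k)}$. To see that $g_f$ yields a homomorphism from $\X^\tensor{k}$ to $\mathbb{F}_{\Qconv}(\A^\tensor{k})$, take any $R\in\sigma$ of arity $r$ and $\bx\in R^\X$, and choose $q=\be_{f(\bx)}\in\Qconv^{(|R^\A|)}$ (well defined since $f(\bx)\in R^\A$ because $f$ is a homomorphism). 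A direct application of Lemma~\ref{lem_multiplication_rule_P} shows that for each $\bi\in [r]^k$ and $\ba\in A^k$, $E_\ba\ast P_\bi\ast \be_{f(\bx)}=1$ iff $\ba=f(\bx)_\bi=f(\bx_\bi)$, so $q_{/\pi_\bi}=E_{f(\bx_\bi)}=g_f(\bx_\bi)$, confirming $g_f(\bx^\tensor{k})\in R^{\mathbb{F}_{\Qconv}(\A^\tensor{k})}$. The first bullet follows by contraposition: if some $f\in\Hom(\X,\A)$ existed, $g_f$ would witness $\X^\tensor{k}\to\mathbb{F}_{\Qconv}(\A^\tensor{k})$.

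For the second bullet, assume $g:\X^\tensor{k}\to\mathbb{F}_{\Qconv}(\A^\tensor{k})$ has maximum support. On the one hand, Proposition~\ref{prop_sherali_adams_exact} assigns to every $\ba\in\supp(g((1,\dots,k)))$ a homomorphism $f_\ba:\X\to\A$ given by $j\mapsto a_j$; the map $\ba\mapsto f_\ba$ is trivially injective since $X=[k]$, and so $|\supp(g((1,\dots,k)))|\leq|\Hom(\X,\A)|$. On the other hand, for each $f\in\Hom(\X,\A)$ the tensorisation $g_f$ constructed above is a homomorphism from $\X^\tensor{k}$ to $\mathbb{F}_{\Qconv}(\A^\tensor{k})$, so the maximum-support property of $g$ gives
\begin{align*}
\{(f(1),\dots,f(k))\}=\supp(g_f((1,\dots,k)))\subseteq\supp(g((1,\dots,k))).
\end{align*}
Hence $f\mapsto(f(1),\dots,f(k))$ is a well-defined injection $\Hom(\X,\A)\hookrightarrow\supp(g((1,\dots,k)))$, which gives the reverse inequality and proves equality.

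The only potentially delicate point is the existence of a maximum-support homomorphism $g$ appearing in the hypothesis of the second bullet: as noted just before the statement, averaging two homomorphisms into $\mathbb{F}_{\Qconv}(\A^\tensor{k})$ yields another such homomorphism, so a relative-interior point of the corresponding polytope (which exists as soon as $\X^\tensor{k}\to\mathbb{F}_{\Qconv}(\A^\tensor{k})$) is a maximum-support homomorphism; this is already recorded in the paragraph preceding the proposition and so requires no further work here.
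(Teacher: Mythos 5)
Your proof is correct and follows essentially the same route as the paper's: the injection from $\supp(g((1,\dots,k)))$ into $\Hom(\X,\A)$ via Proposition~\ref{prop_sherali_adams_exact}, and the reverse injection obtained by tensorising each classical homomorphism $f$ into $g_f(\bx)=E_{f(\bx)}$ and invoking maximum support, are exactly the two halves of the bijection $m$ constructed in the paper. The only (cosmetic) difference is in the first bullet: the paper derives $\Hom(\X,\A)=\emptyset$ by chaining through $\SA^k$ acceptance via Proposition~\ref{prop_SAkaccepts_hom_tensor}, while you instead contrapose the $g_f$ construction directly — a slightly more self-contained phrasing of the same fact, since you need $g_f$ for the second bullet anyway.
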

\begin{proof}
If $\X^\tensor{k}\not\to\mathbb{F}_{\Qconv}(\A^\tensor{k})$, by Proposition~\ref{prop_SAkaccepts_hom_tensor} $\SA^k(\X,\A)$ does not accept. Hence, $\X\not\to\A$, so that $\Hom(\X,\A)=\emptyset$. 

Suppose now that $\X^\tensor{k}\to\mathbb{F}_{\Qconv}(\A^\tensor{k})$ and $g$ is a homomorphism having maximum support. As a consequence of Proposition~\ref{prop_sherali_adams_exact}, we can build the map
\begin{align*}
m:\supp(g((1,\dots,k)))&\to \Hom(\X,\A)\\
\ba=(a_1,\dots,a_k)&\mapsto f_\ba
\end{align*}
where $f_\ba$ is defined by $j\mapsto a_j$. Clearly, $m$ is injective. Given a homomorphism $f:\X\to\A$, consider the map $g_f:X^k\to \cT^{k;n}(\Q)$ defined by $\bx\mapsto E_{f(\bx)}$. We claim that $g_f$ is a homomorphism from $\X^\tensor{k}$ to $\mathbb{F}_{\Qconv}(\A^\tensor{k})$. Indeed, take $R\in\sigma$ of arity $r$ and take $\bx\in R^\X$, so that $\bx^\tensor{k}\in R^{\X^\tensor{k}}$. Since $f$ is a homomorphism, $f(\bx)\in R^\A$; consider then $q\coloneqq\be_{f(\bx)}\in \Qconv^{(|R^\A|)}$. For any $\bi\in [r]^k$, $\ba'\in {A}^k$, observe that
\begin{align*}
E_{\ba'}\ast q_{/\pi_\bi}
&=
E_{\ba'}\ast P_\bi\ast \be_{f(\bx)}
=
\left\{
\begin{array}{ll}
1 & \mbox{if }f(\bx_\bi)=\ba'\\
0 & \mbox{otherwise}
\end{array}
\right.
=
E_{\ba'}\ast E_{f(\bx_\bi)}
=
E_{\ba'}\ast g_f(\bx_\bi),
\end{align*}
so that $g_f(\bx_\bi)=q_{/\pi_\bi}$ $\forall \bi\in [r]^k$, which means that $g_f(\bx^\tensor{k})\in R^{\mathbb{F}_{\Qconv}(\A^\tensor{k})}$. Therefore, the claim is true.

Let $\ba\coloneqq (f(1),\dots,f(k))\in {A}^k$, and observe that $\ba\in\supp(g_f((1,\dots,k)))\subseteq \supp(g((1,\dots,k)))$. For any $j\in X$, we have $[m(\ba)](j)=f_\ba(j)=f(j)$, so $f=m(\ba)$. We deduce that $m$ is surjective, too, whence the result follows.
\end{proof}

The next proposition establishes that symmetries of $\A$ can be lifted to symmetries of the solutions of $\SA^k(\X,\A)$. As a result, one can always find a solution of $\SA^k(\X,\A)$ that is, in some sense, at least as symmetric as $\A$. We let $\Aut(\A)$ denote the group of automorphisms of $\A$ (where an \emph{automorphism} is a bijective homomorphism from $\A$ to itself whose inverse is also a homomorphism).
\begin{prop}
\label{prop_SA_and_automorphisms}
Let $k\in\N$, let $\X,\A$ be $\sigma$-structures, and suppose that $\X^\tensor{k}\to\mathbb{F}_{\Qconv}(\A^\tensor{k})$. Then there exists a homomorphism $\bar{g}:\X^\tensor{k}\to\mathbb{F}_{\Qconv}(\A^\tensor{k})$ having the following property: $E_\ba\ast \bar{g}(\bx)=E_{\vartheta(\ba)}\ast \bar{g}(\bx)$ for each $\bx\in X^k, \ba\in A^k, \vartheta\in \Aut(\A)$.
\end{prop}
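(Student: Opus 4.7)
The plan is to produce $\bar g$ by symmetrising a given homomorphism $g:\X^\tensor{k}\to\mathbb{F}_{\Qconv}(\A^\tensor{k})$ (which exists by hypothesis) under the natural entry-permutation action of $\Aut(\A)$ on tensors. For each $\vartheta\in\Aut(\A)$, extend $\vartheta$ coordinatewise to $A^k$, and define an operator $T_\vartheta:\cT^{k;n}(\Q)\to\cT^{k;n}(\Q)$ by
\begin{align*}
E_\ba\ast T_\vartheta(M)\;\coloneqq\;E_{\vartheta^{-1}(\ba)}\ast M.
\end{align*}
Since $T_\vartheta$ only permutes entries, it preserves stochasticity, so $T_\vartheta(M)\in\Qconv^{(n^k)}$ whenever $M\in\Qconv^{(n^k)}$. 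Then set
\begin{align*}
\bar g(\bx)\;\coloneqq\;\frac{1}{|\Aut(\A)|}\sum_{\vartheta\in\Aut(\A)}T_\vartheta(g(\bx)),
\end{align*}
which lands in $\Qconv^{(n^k)}$ because it is a convex combination of stochastic tensors.

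The invariance claim would then follow by a re-indexing: for any $\vartheta_0\in\Aut(\A)$, the substitution $\psi=\vartheta_0^{-1}\vartheta$ is a bijection of $\Aut(\A)$, and so
\begin{align*}
E_{\vartheta_0(\ba)}\ast\bar g(\bx)
&=\frac{1}{|\Aut(\A)|}\sum_{\vartheta}E_{\vartheta^{-1}\vartheta_0(\ba)}\ast g(\bx)
=\frac{1}{|\Aut(\A)|}\sum_{\psi}E_{\psi^{-1}(\ba)}\ast g(\bx)
=E_\ba\ast\bar g(\bx),
\end{align*}
which is precisely the desired symmetry.

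The main task is to verify that $\bar g$ is a homomorphism. Because the minor operation $q\mapsto q_{/\pi_\bi}=P_\bi\ast q$ is linear in $q$ and $\Qconv$ is closed under convex combinations, convex combinations of homomorphisms into $\mathbb{F}_{\Qconv}(\A^\tensor{k})$ are homomorphisms; so it suffices to show that each $T_\vartheta\circ g$ is a homomorphism. Given $R\in\sigma$ of arity $r$ and $\bx\in R^\X$, fix a witness $q\in\Qconv^{(|R^\A|)}$ with $g(\bx_\bi)=P_\bi\ast q$ for all $\bi\in[r]^k$, and, using that $\vartheta$ permutes $R^\A$ coordinatewise, define $q'\in\Qconv^{(|R^\A|)}$ by $\be_\bb\ast q'\coloneqq\be_{\vartheta^{-1}(\bb)}\ast q$ for $\bb\in R^\A$. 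Using Lemma~\ref{lem_multiplication_rule_P}, I would then compute
\begin{align*}
E_\ba\ast P_\bi\ast q'
=\sum_{\substack{\bb\in R^\A\\ \bb_\bi=\ba}}\be_{\vartheta^{-1}(\bb)}\ast q
=\sum_{\substack{\bb'\in R^\A\\ \bb'_\bi=\vartheta^{-1}(\ba)}}\be_{\bb'}\ast q
=E_{\vartheta^{-1}(\ba)}\ast P_\bi\ast q
=E_\ba\ast T_\vartheta(g(\bx_\bi)),
\end{align*}
where the second equality comes from the substitution $\bb'=\vartheta^{-1}(\bb)$, which converts the constraint $\bb_\bi=\ba$ into $\bb'_\bi=\vartheta^{-1}(\ba)$ because the coordinatewise action of $\vartheta$ commutes with the projection $\bb\mapsto\bb_\bi$. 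Hence $q'_{/\pi_\bi}=T_\vartheta(g(\bx_\bi))$ for every $\bi$, certifying that $(T_\vartheta\circ g)(\bx^\tensor{k})\in R^{\mathbb{F}_{\Qconv}(\A^\tensor{k})}$.

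The only real obstacle is the bookkeeping in this last step, namely confirming that $\vartheta$ (as a coordinatewise map on $A^r$) restricts to a bijection of $R^\A$ and commutes with the projections $\bb\mapsto\bb_\bi$; once this is set up cleanly, the identification of the witness $q'$ is essentially forced and the rest assembles mechanically.
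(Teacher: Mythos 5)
Your proof is correct and follows essentially the same route as the paper's: you symmetrise a given homomorphism $g$ over $\Aut(\A)$ by permuting tensor entries via $T_\vartheta$ (the paper calls $T_\vartheta\circ g$ simply $g^\vartheta$), verify each $T_\vartheta\circ g$ is a homomorphism via the transported witness $q'$ with $\be_\bb\ast q'=\be_{\vartheta^{-1}(\bb)}\ast q$, and then average. The only cosmetic difference is your streamlined one-line re-indexing argument for the invariance of $\bar g$, where the paper spells out a chain of four equalities.
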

\begin{proof}
Choose a homomorphism $g:\X^\tensor{k}\to\mathbb{F}_{\Qconv}(\A^\tensor{k})$, and let $\tau$ be an automorphism of $\A$. Consider the map $g^\tau:X^k\to \cT^{k;n}(\Q)$ defined by $E_\ba\ast g^\tau(\bx)\coloneqq E_{\tau^{-1}(\ba)}\ast g(\bx)$ for each $\bx\in X^k,\ba\in A^k$. Observe that 
\begin{align*}
\sum_{\ba\in A^k}E_\ba\ast g^\tau(\bx)=\sum_{\ba\in A^k}
E_{\tau^{-1}(\ba)}\ast g(\bx)=
\sum_{\ba\in A^k}
E_\ba\ast g(\bx)
=
1
\end{align*}   
for any $\bx\in X^k$, so $g^\tau(\bx)\in\Qconv^{(n^k)}$. We claim that $g^\tau$ yields a homomorphism from $\X^\tensor{k}$ to $\mathbb{F}_{\Qconv}(\A^\tensor{k})$. Let $R\in\sigma$ be a relation symbol of arity $r$, and let $\bx\in R^\X$. Since $g$ is a homomorphism, $g(\bx^\tensor{k})\in R^{\mathbb{F}_{\Qconv}(\A^\tensor{k})}$, so there exists some $q\in\Qconv^{(|R^\A|)}$ such that $g(\bx_\bi)=q_{/\pi_\bi}$ for each $\bi\in [r]^k$. Using that $\tau$ is an automorphism of $\A$, we can define the tuple $q^\tau\in \Qconv^{(|R^\A|)}$ by setting $\be_\bb\ast q^\tau\coloneqq \be_{\tau^{-1}(\bb)}\ast q$ for each $\bb\in R^\A$. We claim that $g^\tau(\bx_\bi)=q^\tau_{/\pi_\bi}$ for each $\bi\in [r]^k$; this would imply that $g^\tau(\bx^\tensor{k})\in R^{\mathbb{F}_{\Qconv}(\A^\tensor{k})}$. Given $\ba\in A^k$ and using Lemma~\ref{lem_multiplication_rule_P}, we find
\begin{align*}
\begin{array}{lllllllllllllll}
&\displaystyle E_\ba\ast q^\tau_{/\pi_\bi}
&=&
\displaystyle E_\ba\ast P_\bi\ast q^\tau
&=&
\displaystyle \sum_{\substack{\bb\in R^\A\\\bb_\bi=\ba}}\be_\bb\ast q^\tau
&=&
\displaystyle \sum_{\substack{\bb\in R^\A\\\bb_\bi=\ba}}\be_{\tau^{-1}(\bb)}\ast q\\[30pt]
&&=&
\displaystyle \sum_{\substack{\bb\in R^\A\\ (\tau(\bb))_\bi=\ba}}\be_{\bb}\ast q
&=&
\displaystyle \sum_{\substack{\bb\in R^\A\\ \tau(\bb_\bi)=\ba}}\be_{\bb}\ast q
&=&
\displaystyle \sum_{\substack{\bb\in R^\A\\ \bb_\bi=\tau^{-1}(\ba)}}\be_{\bb}\ast q
\end{array} 
\intertext{and}
\begin{array}{lllllllllllllll}
&\displaystyle E_\ba\ast g^\tau(\bx_\bi)
&=&
\displaystyle E_{\tau^{-1}(\ba)}\ast g(\bx_\bi)
&=&
\displaystyle E_{\tau^{-1}(\ba)}\ast  q_{/\pi_\bi}
&=&
\displaystyle E_{\tau^{-1}(\ba)}\ast  P_\bi\ast q\\[10pt]
&&=&
\displaystyle \sum_{\substack{\bb\in R^\A\\ \bb_\bi=\tau^{-1}(\ba)}}\be_\bb\ast q,
\end{array}
\end{align*}
thus proving the claims. 

We have shown that $g^\tau$ is a homomorphism from $\X^\tensor{k}$ to $\mathbb{F}_{\Qconv}(\A^\tensor{k})$ for each $\tau\in\Aut(\A)$. Clearly, any convex combination of such homomorphisms is still a homomorphism (since LP solutions are closed under convex combinations). It follows that
\begin{align*}
\bar{g}\coloneqq \frac{1}{|\Aut(\A)|}\sum_{\tau\in\Aut(\A)}g^\tau
\end{align*}
is a homomorphism, too.
Given $\bx\in X^k$, $\ba\in A^k$, and $\vartheta\in \Aut(\A)$, we find
\begin{align*}
E_{\vartheta(\ba)}\ast \bar{g}(\bx)
&=
\frac{1}{|\Aut(\A)|}\sum_{\tau\in\Aut(\A)}E_{\vartheta(\ba)}\ast g^\tau(\bx)\\
&=
\frac{1}{|\Aut(\A)|}\sum_{\tau\in\Aut(\A)}E_{(\tau^{-1}\circ\vartheta)(\ba)}\ast g(\bx)\\
&=
\frac{1}{|\Aut(\A)|}\sum_{\tau\in\Aut(\A)}E_{\tau^{-1}(\ba)}\ast g(\bx)\\
&=
\frac{1}{|\Aut(\A)|}\sum_{\tau\in\Aut(\A)}E_{\ba}\ast g^\tau(\bx)
=
E_\ba\ast\bar{g}(\bx),
\end{align*}
which concludes the proof.
\end{proof}
We shall notice, in Section~\ref{sec:approx}, that the explicit homomorphism $\tilde \X^\tensor{k}\to \mathbb{F}_{\Qconv}(\tilde{\K}_k^\tensor{k})$ whose existence proves Proposition~\ref{prop_SA_cliques} is invariant under the automorphism group of the clique -- that is, under the symmetric group. In other words, that homomorphism could be indicated by $\bar{g}$ in the language of Proposition~\ref{prop_SA_and_automorphisms}.

\section{Sherali-Adams and graph colouring}\label{sec:approx}

In this section, we use the tensorisation machinery developed in Sections~\ref{sec_tensorisation} and~\ref{sec_characterisation_acceptance} to prove the following main result.
\begin{thm*}[Theorem~\ref{main_theorem_approximate_colouring} restated]
No constant level of Sherali-Adams solves the approximate graph colouring problem;
i.e., for any fixed $3\leq c\leq d$, there is no constant $k$ such that the $k$-th level of Sherali-Adams solves $\PCSP(\K_c,\K_d)$.
\end{thm*}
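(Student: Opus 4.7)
The plan is to argue by contradiction: assume that for some fixed $3\leq c\leq d$ and some $k\in\N$ the $k$-th level of Sherali-Adams solves $\PCSP(\K_c,\K_d)$, and construct a loopless digraph $\Y$ with $\SA^k(\Y,\K_c)$ accepting yet $\Y\not\to\K_d$. Before anything else I would augment the signature with the $k$-ary symbol $R_k$ interpreted as the universal relation on every structure, a harmless move (as noted in Section~\ref{sec_extended_abstract}) that makes Theorem~\ref{thm:main} applicable. The argument then splits according to whether $k\leq c$ or $k>c$.

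For the easier case $k\leq c$ (essentially Remark~\ref{cor_SAk_no_solves_approx_graph_color}) I would take $\Y=\K_{d+1}$. Proposition~\ref{prop_SA_cliques} gives that $\SA^k(\K_{d+1},\K_k)$ accepts, so by Theorem~\ref{thm:main} there is a homomorphism $\K_{d+1}^{\tensor{k}}\to\bF_{\Qconv}(\K_k^{\tensor{k}})$. Since $k\leq c$, the inclusion $\K_k\hookrightarrow\K_c$ lifts via Lemma~\ref{prop_correspondence_morphisms_tensor_structures} to $\K_k^{\tensor{k}}\to\K_c^{\tensor{k}}$, and the functoriality of the free-structure construction (any $\A\to\B$ induces $\bF_{\Qconv}(\A)\to\bF_{\Qconv}(\B)$ by post-composition of minor images) promotes this to $\bF_{\Qconv}(\K_k^{\tensor{k}})\to\bF_{\Qconv}(\K_c^{\tensor{k}})$. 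Composing and translating back through Theorem~\ref{thm:main} shows that $\SA^k(\K_{d+1},\K_c)$ accepts, forcing $\K_{d+1}\to\K_d$ under the solvability assumption, which is absurd.

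The main case $k>c$ follows the template of the worked example at the end of Section~\ref{sec_extended_abstract} with iterated line-digraph reductions. Let $f(p)=\binom{p}{\lfloor p/2\rfloor}$ and let $T(p,m)$ be the tower of twos defined by $T(p,0)=p$, $T(p,m+1)=2^{T(p,m)}$; iterating the two inequalities of~\cite{HarnerE72} one obtains that $\delta^{m}\X\to\K_d$ implies $\X\to\K_{T(d,m)}$, while $\X\to\K_{f^{(m)}(c)}$ implies $\delta^{m}\X\to\K_c$. I would pick $m$ satisfying $2^{m}k\leq f^{(m)}(c)$ and set $\X=\K_{T(d,m)+1}$, $\Y=\delta^{m}\X$. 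The chain is then: (i) Proposition~\ref{prop_SA_cliques} gives $\SA^{2^{m}k}(\X,\K_{2^{m}k})$ accepting; (ii) $m$ applications of Proposition~\ref{prop_reduction_line_digraph} yield $\SA^{k}(\Y,\delta^{m}\K_{2^{m}k})$ accepting; (iii) since $2^{m}k\leq f^{(m)}(c)$ gives $\delta^{m}\K_{2^{m}k}\to\K_c$, the same free-structure functoriality combined with Theorem~\ref{thm:main} upgrades this to $\SA^{k}(\Y,\K_c)$ accepting. The solvability assumption now forces $\Y\to\K_d$, hence $\K_{T(d,m)+1}=\X\to\K_{T(d,m)}$ by iterated Harner-Entringer, which is the desired contradiction.

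The main obstacle is the bookkeeping of parameters: the chosen $m$ must simultaneously make $2^{m}k\leq f^{(m)}(c)$ hold and keep the tower $T(d,m)$ large enough that $\X=\K_{T(d,m)+1}$ is genuinely non-$d$-colourable after iterated line-digraph. This is exactly where the asymmetry emphasised in Section~\ref{sec_extended_abstract} is crucial---$\delta$ halves the Sherali-Adams level but only takes logarithms of the chromatic number, so the tower-type growth of $f^{(m)}(c)$ outpaces the exponential growth of $2^{m}k$ for sufficiently large $m$. Everything else in the argument is routine assembly of the earlier sections: Theorem~\ref{thm:main} mediates between $\SA^k$ and tensor-power homomorphisms, Lemma~\ref{prop_correspondence_morphisms_tensor_structures} together with free-structure functoriality chains homomorphisms, and Propositions~\ref{prop_SA_cliques} and~\ref{prop_reduction_line_digraph} provide the two structural inputs.
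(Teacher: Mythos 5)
Your argument is essentially the paper's for the case $c\geq 4$, but it has a genuine gap at $c=3$. You assert that one can choose $m$ with $2^{m}k\leq f^{(m)}(c)$ because ``the tower-type growth of $f^{(m)}(c)$ outpaces the exponential growth of $2^{m}k$''; however $f(p)=\binom{p}{\lfloor p/2\rfloor}$ has $f(3)=3$, so $3$ is a fixed point and $f^{(m)}(3)=3$ for every $m$. Thus for $c=3$ and $k>3$ the inequality $2^{m}k\leq f^{(m)}(3)=3$ can never hold and your parameter choice is vacuous. (For $c\geq 4$ the iteration indeed escapes to infinity, e.g.\ $f(4)=6$, $f(6)=20$, $f(20)=184756,\dots$, so the argument goes through; note that the paper similarly requires $c\geq 4$ in the main branch of its proof.)

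The paper patches the $c=3$ case by a separate reduction: it invokes the fact, observed in~\cite[Lemma~4.19]{KrokhinOWZ20}, that $\delta^{(2)}\K_4\to\K_3$ (a statement strictly stronger than what Theorem~\ref{prop_chromatic_number_line_digraph} gives, since $\K_4\not\to\K_{b^{(2)}(3)}=\K_3$). Given a hypothetical $\SA^{k}$ solving $\PCSP(\K_3,\K_d)$, one then shows that $\SA^{4k}$ would solve $\PCSP(\K_4,\K_{a^{(2)}(d)})$: for any $\X$ with $\SA^{4k}(\X,\K_4)$ accepting, two applications of Proposition~\ref{prop_reduction_line_digraph} give $\SA^{k}(\delta^{(2)}\X,\delta^{(2)}\K_4)$ accepting, functoriality of $\mathbb{F}_{\Qconv}$ and $\delta^{(2)}\K_4\to\K_3$ upgrade this to $\SA^{k}(\delta^{(2)}\X,\K_3)$ accepting, the hypothesis gives $\delta^{(2)}\X\to\K_d$, and Theorem~\ref{prop_chromatic_number_line_digraph} gives $\X\to\K_{a^{(2)}(d)}$. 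This reduces $c=3$ to the already-settled $c=4$ case. Your proof needs this additional step (or some substitute bypassing the fixed point of $f$ at $3$); everything else matches the paper, modulo your explicit $k\leq c$ vs.\ $k>c$ split, which is fine but not strictly necessary since the line-digraph machinery with $i=0$ degenerates to the direct argument.
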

The key step towards the proof of Theorem~\ref{main_theorem_approximate_colouring} is Proposition~\ref{prop_SA_cliques} -- which, in turn, relies on the following combinatorial fact. Recall that, for two tuples $\textbf{s}$ and $\textbf{t}$ of equal length, the notation $\textbf{s}\sim\textbf{t}$ means that $\textbf{s}\prec\textbf{t}$ and $\textbf{t}\prec\textbf{s}$.

\begin{lem}
\label{lem_involved_1511_1718}
Let $c,k\in\N$, and consider the tuples $\ba,\bi\in [k]^k$, $\bx\in [c]^k$. Then
\begin{align*}
|\{\bb\in [k]^k: \bb_\bi=\ba\mbox{ and }\bb\sim\bx\}|=
\left\{
\begin{array}{cl}
\frac{(k-|\{\bx_\bi\}|)!}{(k-|\{\bx\}|)!} & \mbox{if } \ba\sim\bx_\bi\\
0 & \mbox{otherwise.}
\end{array}
\right.
\end{align*}
\end{lem}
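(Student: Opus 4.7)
My plan is to establish the formula by a direct bijective counting argument, after reformulating both the constraint $\bb_\bi=\ba$ and the relation $\bb\sim\bx$ in terms of partitions of $[k]$.

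First I would dispose of the ``otherwise'' case. If some $\bb\in[k]^k$ satisfies $\bb_\bi=\ba$ and $\bb\sim\bx$, then, directly from the definition of $\prec$, applying the projection on $\bi$ to both sides of $\bb\sim\bx$ yields $\bb_\bi\sim\bx_\bi$, i.e., $\ba\sim\bx_\bi$. Contrapositively, if $\ba\not\sim\bx_\bi$ the count is zero, as claimed.

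From now on assume $\ba\sim\bx_\bi$, and set $s\coloneqq|\{\bx\}|$ and $s'\coloneqq|\{\bx_\bi\}|$. View $\bx$ as a labelling of $[k]$ inducing a partition with $s$ blocks $B_1,\dots,B_s$. The key combinatorial observation I would record is that $s'$ equals the number of blocks $B_\ell$ that meet the image $\{\bi\}$: a value $v$ lies in $\{\bx_\bi\}$ precisely when some $j\in\{\bi\}$ has $x_j=v$. Call such blocks \emph{touched} and the remaining $s-s'$ blocks \emph{untouched}.

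Next I would show that the constraint $\bb_\bi=\ba$ together with $\bb\sim\bx$ pins $\bb$ down on every touched block and leaves the untouched blocks free, subject to distinctness. Note that $\ba\sim\bx_\bi$ gives in particular $\bi\prec\ba$ (since $\bi\prec\bx_\bi\prec\ba$), so the prescription $b_j\coloneqq a_\alpha$ for any $\alpha$ with $i_\alpha=j$ is well-defined for each $j\in\{\bi\}$. Moreover, the relation $\bb\sim\bx$ forces $\bb$ to be constant on each block; hence on each touched block $B_\ell$ the value of $\bb$ is the unique prescribed value $v_\ell$. The $s'$ values $v_\ell$ obtained in this way are pairwise distinct: if $B_\ell\neq B_{\ell'}$ are touched by $j\in\{\bi\}$ and $j'\in\{\bi\}$ respectively, choose $\alpha,\alpha'$ with $i_\alpha=j$, $i_{\alpha'}=j'$; then $x_{i_\alpha}\neq x_{i_{\alpha'}}$, and $\bx_\bi\prec\ba$ gives $v_\ell=a_\alpha\neq a_{\alpha'}=v_{\ell'}$.

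Finally I would count the extensions. The remaining freedom is to assign, on each of the $s-s'$ untouched blocks, a single value in $[k]$, such that all $s$ values on the $s$ blocks are pairwise distinct. Equivalently, one must choose an injection from the set of $s-s'$ untouched blocks into $[k]\setminus\{v_1,\dots,v_{s'}\}$, a set of cardinality $k-s'$. The number of such injections is $(k-s')(k-s'-1)\cdots(k-s+1)=\frac{(k-s')!}{(k-s)!}$, and unpacking $s=|\{\bx\}|$ and $s'=|\{\bx_\bi\}|$ gives exactly the claimed formula. The ``hard'' part is really just the verification that the prescribed values $v_\ell$ on touched blocks are automatically distinct (so that the distinctness condition on untouched blocks reduces cleanly to the injection count), and this is where the hypothesis $\ba\sim\bx_\bi$ is used in full.
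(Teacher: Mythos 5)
Your argument is correct, but it proceeds quite differently from the paper's. The paper proves the formula by induction on $k-|\{\bx\}|$: it picks two coordinates $v\neq w$ with $x_v=x_w$, replaces $x_w$ by a fresh value to obtain $\bx'$, replaces occurrences of $w$ in $\bi$ by $v$ to obtain $\bi'$, and then exhibits a bijection between $S_{\ba,\bi',\bx'}$ and the set of pairs $(\bb,r)$ with $\bb\in S_{\ba,\bi,\bx}$ and $r\in[k]\setminus\{\bb\}$, which gives the recurrence $s_{\ba,\bi',\bx'}=s_{\ba,\bi,\bx}\cdot(k-|\{\bx\}|)$. You instead give a direct bijective count: you pass to the partition of $[k]$ induced by $\bx$, observe that $\bb\sim\bx$ forces $\bb$ to be constant on blocks and injective across blocks, that $\bb_\bi=\ba$ then fixes $\bb$ on the $s'=|\{\bx_\bi\}|$ blocks meeting $\{\bi\}$ to distinct values, and that the remaining freedom is exactly an injection from the $s-s'$ untouched blocks into the $k-s'$ unused colours. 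This is shorter and more structural than the paper's inductive bijection, and it makes transparent why the answer depends only on the two cardinalities $|\{\bx\}|$ and $|\{\bx_\bi\}|$. The only blemish is a small mislabelling: when you derive $v_\ell\neq v_{\ell'}$ from $x_{i_\alpha}\neq x_{i_{\alpha'}}$, the direction of $\prec$ being used is $\ba\prec\bx_\bi$ (contrapositively), not $\bx_\bi\prec\ba$ as you wrote; this is harmless since the hypothesis $\ba\sim\bx_\bi$ gives both directions, and indeed the other direction $\bx_\bi\prec\ba$ is what guarantees well-definedness of $v_\ell$ when a touched block meets $\{\bi\}$ in more than one index.
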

\begin{proof}
Without loss of generality, we shall assume that $c\geq k$. Let $S_{\ba,\bi,\bx}$ denote the set $\{\bb\in [k]^k: \bb_\bi=\ba\mbox{ and }\bb\sim\bx\}$, and let $s_{\ba,\bi,\bx}$ denote the cardinality of $S_{\ba,\bi,\bx}$.

Suppose first that $\ba\not\sim \bx_\bi$. In this case, if $\bb\in S_{\ba,\bi,\bx}$, then $\bb\sim\bx$, which implies $\bb_\bi\sim\bx_\bi$. Since $\ba=\bb_\bi$, this yields $\ba\sim\bx_\bi$, a contradiction. Hence, $s_{\ba,\bi,\bx}=0$ as required.

Suppose now that $\ba\sim \bx_\bi$. We use induction on the number $k-|\{\bx\}|$. If $k-|\{\bx\}|=0$, the elements in the tuple $\bx$ are all distinct. Hence, $S_{\ba,\bi,\bx}$ is the set of tuples $\bb\in [k]^k$ such that all the elements of $\bb$ are distinct and $\bb_\bi=\ba$. This amounts to assigning distinct values from $[k]\setminus \{\ba\}$ to each $b_i$ with $i\in [k]\setminus \{\bi\}$. Observing that $|\{\bi\}|=|\{\bx_\bi\}|=|\{\ba\}|$, this results in $(k-|\{\bx_\bi\}|)!$ choices, so the result holds in this case. Suppose now, for the inductive step, that $k-|\{\bx\}|>0$. Take $v,w\in [k]$ such that $v\neq w$ and $x_v=x_w$. From $|\{\bx\}|<k\leq c$ we have that $\{\bx\}\subsetneqq [c]$, so we can find an element $p\in [c]\setminus \{\bx\}$. Consider the tuple ${\bx}'\in [c]^k$ defined by ${x}'_\alpha=x_\alpha$ if $\alpha\neq w$, ${x}'_w=p$. Consider also the tuple ${\bi}'\in [k]^k$ defined by ${i}'_\beta=i_\beta$ if $i_\beta\neq w$, ${i}'_\beta=v$ otherwise. We claim that $\bx'_{\bi'}=\bx_\bi$. Indeed, for any $\beta\in [k]$, if $i_\beta\neq w$ we have $x'_{i'_\beta}=x'_{i_\beta}=x_{i_\beta}$, while if $i_\beta=w$ we have $x'_{i'_\beta}=x'_{v}=x_v=x_w=x_{i_\beta}$. It follows that $\ba\sim \bx'_{\bi'}$. Consider now a tuple $\bb\in S_{\ba,\bi,\bx}$. For any $r\in [k]\setminus \{\bb\}$ we can consider a new tuple $\bb^{(r)}\in [k]^k$ defined by $b^{(r)}_\alpha=b_\alpha$ if $\alpha\neq w$, $b^{(r)}_w=r$. 
We claim that 
\begin{align}
\label{eq_sets_1611_1856}
S_{\ba,\bi',\bx'}=\{\bb^{(r)}:\bb\in S_{\ba,\bi,\bx} \mbox{ and } r\in [k]\setminus \{\bb\}\}.
\end{align}
To prove the \enquote{$\supseteq$} inclusion, observe first that $\bb^{(r)}\sim\bx'$ follows from the fact that $r\not\in \{\bb\}$ and $p\not\in \{\bx\}$. We now need to show that $\bb^{(r)}_{\bi'}=\ba$. Take $\beta\in [k]$. If $i_\beta\neq w$, $b^{(r)}_{i'_\beta}=b^{(r)}_{i_\beta}=b_{i_\beta}=a_\beta$. If $i_\beta=w$, $b^{(r)}_{i'_\beta}=b^{(r)}_v=b_v$. From $\bb\sim\bx$ and $x_v=x_w$, it follows $b_v=b_w$, whence $b^{(r)}_{i'_\beta}=b_w=b_{i_\beta}=a_\beta$. This concludes the proof that $\bb^{(r)}\in S_{\ba,\bi',\bx'}$. To prove the \enquote{$\subseteq$} inclusion, take $\bb'\in S_{\ba,\bi',\bx'}$, let $r=b'_w$, and consider the tuple $\bd\in [k]^k$ defined by $d_\alpha=b'_\alpha$ if $\alpha\neq w$, $d_w=b'_v$. We claim that $r\in [k]\setminus \{\bd\}$. Indeed, if $r\in \{\bd\}$, then $r=b'_t$ for some $t\in [k]$, $t\neq w$. But then $b'_t=b'_w$, and from $\bb'\sim\bx'$ we would get $x'_t=x'_w$, whence $x_t=x'_t=x'_w=p$; this is a contradiction, since $p\not\in\{\bx\}$. Next, we claim that $\bd\in S_{\ba,\bi,\bx}$. By the definition of $\bd$, one readily checks that $\bd\sim\bx$. To show that $\bd_\bi=\ba$, take $\beta\in [k]$. If $i_\beta\neq w$, $d_{i_\beta}=b'_{i_\beta}=b'_{i'_\beta}=a_\beta$, while, if $i_\beta=w$, $d_{i_\beta}=d_w=b'_v=b'_{i'_\beta}=a_\beta$. It is also clear from the definition of $\bd$ that $\bb'=\bd^{(r)}$.
This concludes the proof of~\eqref{eq_sets_1611_1856}.

If $\bb\in S_{\ba,\bi,\bx}$, then $|[k]\setminus\{\bb\}|=k-|\{\bb\}|=k-|\{\bx\}|$. Therefore, it follows from~\eqref{eq_sets_1611_1856} that $s_{\ba,\bi',\bx'}=s_{\ba,\bi,\bx}\cdot(k-|\{\bx\}|)$.
Observe that $|\{\bx'\}|=|\{\bx\}|+1$, so $k-|\{\bx'\}|=k-|\{\bx\}|-1$. Applying the inductive hypothesis, we find
\begin{align*}
s_{\ba,\bi,\bx}
&=
\frac{1}{k-|\{\bx\}|}\cdot s_{\ba,\bi',\bx'}
=
\frac{1}{k-|\{\bx\}|}\cdot \frac{(k-|\{\bx'_{\bi'}\}|)!}{(k-|\{\bx'\}|)!}
=
\frac{1}{k-|\{\bx\}|}\cdot \frac{(k-|\{\bx_{\bi}\}|)!}{(k-|\{\bx\}|-1)!}
=
\frac{(k-|\{\bx_{\bi}\}|)!}{(k-|\{\bx\}|)!}
\end{align*} 
as required.
\end{proof}
\begin{prop*}[Proposition~\ref{prop_SA_cliques} restated]
Let $\X$ be a loopless digraph and let $k\in\N$, $k\geq 2$. Then $\SA^k(\X,\K_k)$ accepts.
\end{prop*}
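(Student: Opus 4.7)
The plan is to apply Theorem~\ref{thm:main} by exhibiting an explicit homomorphism $g\colon \X^\tensor{k}\to \mathbb{F}_{\Qconv}(\K_k^\tensor{k})$. Since Theorem~\ref{thm:main} requires $k$-enhanced structures, I first pass to $\tilde\X$ and $\tilde\K_k$ by adjoining the symbol $R_k$ with $R_k^{\tilde\X}=X^k$ and $R_k^{\tilde\K_k}=[k]^k$; as observed in Section~\ref{sec_extended_abstract}, this enhancement does not affect whether $\SA^k$ accepts. Motivated by \eqref{blocks_1408_2003}, I define the map $g\colon X^k\to \cT^{k;k}(\Q)$ by
\[
E_\ba\ast g(\bx)=
\begin{cases}
(k-|\{\bx\}|)!/k! & \text{if } \bx\sim\ba,\\
0 & \text{otherwise,}
\end{cases}
\]
so $g(\bx)$ is the uniform distribution on the set of tuples $\ba\in[k]^k$ whose equality pattern is the same as $\bx$'s. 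The cardinality of this set equals the number of injections $\{\bx\}\hookrightarrow[k]$, namely $k!/(k-|\{\bx\}|)!$, which makes the entries sum to $1$ and certifies that $g(\bx)\in\Qconv^{(k^k)}$.

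Next, I verify that $g$ preserves the edge relation $R$ of $\tilde\K_k$. Given $(x,y)\in R^{\tilde\X}$, looplessness of $\X$ yields $x\neq y$, so I can take $q\in\Qconv^{(|R^{\tilde\K_k}|)}$ to be uniform on the $k(k-1)$ edges of $\K_k$ and check that $q_{/\pi_\bi}=g((x,y)_\bi)$ for every $\bi\in [2]^k$. By Lemma~\ref{lem_multiplication_rule_P}, $E_\ba\ast P_\bi\ast q$ counts (up to the factor $1/(k(k-1))$) the number of edges $\bb\in R^{\K_k}$ with $\bb_\bi=\ba$; a small case split on $|\{\bi\}|\in\{1,2\}$ shows this matches the formula for $g((x,y)_\bi)$.

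The core calculation is then that $g$ preserves $R_k$. By Proposition~\ref{lem_a_symmetry}, this amounts to the consistency equation $g(\bx_\bi)=\Pi_\bi\ast g(\bx)$ for all $\bx\in X^k$, $\bi\in [k]^k$. Using Lemma~\ref{lem_multiplication_rule}, I expand
\[
E_\ba\ast\Pi_\bi\ast g(\bx)=\frac{(k-|\{\bx\}|)!}{k!}\cdot\bigl|\{\bb\in [k]^k : \bb_\bi=\ba,\; \bb\sim\bx\}\bigr|,
\]
and Lemma~\ref{lem_involved_1511_1718} evaluates the cardinality as $(k-|\{\bx_\bi\}|)!/(k-|\{\bx\}|)!$ when $\ba\sim\bx_\bi$ and $0$ otherwise, so the factorials telescope to exactly $E_\ba\ast g(\bx_\bi)$.

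The main obstacle is the last computation: picking the correct normalization $(k-|\{\bx\}|)!/k!$ so that the factorials from Lemma~\ref{lem_involved_1511_1718} produce the right value independently of $|\{\bx\}|$, which is what makes the consistency equation hold in every instance. Once this is settled, Proposition~\ref{prop_first_implication_homo_SAk} (the easy direction of Theorem~\ref{thm:main}) converts the homomorphism $g\colon\tilde\X^\tensor{k}\to \mathbb{F}_{\Qconv}(\tilde\K_k^\tensor{k})$ into a feasible solution of $\SA^k(\tilde\X,\tilde\K_k)$, and hence of $\SA^k(\X,\K_k)$.
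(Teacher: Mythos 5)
Your proposal matches the paper's proof essentially verbatim: same map $g$ with the normalization $(k-|\{\bx\}|)!/k!$ over tuples $\ba\sim\bx$, same choice of uniform $q$ on the $k(k-1)$ edges to verify preservation of $R$, and the same reduction of preservation of $R_k$ to the consistency equation via Proposition~\ref{lem_a_symmetry}, Lemma~\ref{lem_multiplication_rule}, and Lemma~\ref{lem_involved_1511_1718}. The argument is correct and complete.
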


\begin{proof}
Let $\tilde{\X}$ (resp. $\tilde{\K}_k$) be obtained from $\X$ (resp. $\K_k$) by adding the relation $R_k^{\tilde{\X}}=X^k$ (resp. $R_k^{\tilde{\K}_k}=[k]^k$). Clearly, $\SA^k(\X,\K_k)$ accepts if and only if $\SA^k(\tilde \X,\tilde \K_k)$ accepts. By virtue of Theorem~\ref{thm:main}, then, it is sufficient to show that $\tilde \X^\tensor{k}\to \mathbb{F}_{\Qconv}(\tilde{\K}_k^\tensor{k})$.
For brevity, we denote $\mathbb{F}_{\Qconv}(\tilde{\K}_k^\tensor{k})$ by $\textbf{F}$. 

Consider the function
\begin{align*}
g:X^k&\to \mathcal{T}^{k;k}(\Q)\\
\bx&\mapsto \frac{(k-|\{\bx\}|)!}{k!}G_\bx
\end{align*}
where $G_\bx\in\mathcal{T}^{k;k}(\Q)$ is the tensor defined by 
\begin{align*}
E_\ba\ast G_\bx=
\left\{
\begin{array}{cl}
1 & \mbox{if }\ba\sim\bx\\
0 & \mbox{otherwise}
\end{array}
\right.
\hspace{1cm}
\forall \ba\in [k]^k.
\end{align*}
Notice that, for any $\bx\in X^k$,
\begin{align*}
&\sum_{\ba\in [k]^k}E_\ba \ast G_\bx=|\{\ba\in [k]^k:\ba\sim\bx\}|
=k(k-1)(k-2)\dots(k-|\{\bx\}|+1)
=\frac{k!}{(k-|\{\bx\}|)!}.
\end{align*}
It follows that $\sum_{\ba\in [k]^k} E_\ba\ast g(\bx)=1$, so $g(\bx)\in\Qconv^{(k^k)}$ $\forall \bx\in X^k$. To conclude the proof, we need to show that $g$ yields a homomorphism from $\tilde \X^\tensor{k}$ to $\textbf{F}$.

Let ${R}$ be the binary relation symbol corresponding to the edge sets of $\X$ and $\K_k$. Take an element of ${R}^{\tilde{\X}^\tensor{k}}$ and write it as $(x,y)^\tensor{k}$, where $(x,y)\in R^{\tilde{\X}}$ (so $x\neq y$, because $\X$ is loopless). We need to show that $g((x,y)^\tensor{k})\in {R}^{\textbf{F}}$. Observe that $|{R}^{\tilde{\K}_k}|=k^2-k$. Take $q\coloneqq\frac{1}{k^2-k}\bone_{k^2-k}\in\Qconv^{(|{R}^{\tilde{\K}_k}|)}$. We claim that $g((x,y)_\bi)=q_{/\pi_\bi}$ $\forall \bi\in [2]^k$; this would imply that $g((x,y)^\tensor{k})\in {R}^{\textbf{F}}$. For $\ba\in [k]^k$, observe that
\begin{align}
\label{eqn_1511_1543}
\notag
E_\ba\ast q_{/\pi_\bi} 
&=
E_\ba\ast P_\bi \ast q
=
\frac{1}{k^2-k}E_\ba\ast P_\bi \ast \bone_{k^2-k}\\
\notag
&=
\frac{1}{k^2-k}\sum_{(x',y')\in {R}^{\tilde{\K}_k}} E_\ba\ast P_\bi \ast \be_{(x',y')}\\
&=
\frac{1}{k^2-k}|\{(x',y')\in {R}^{\tilde{\K}_k}: (x',y')_\bi=\ba\}|,
\end{align}
where we have used~\eqref{defn_tensor_Pi}. 
Suppose now that $\bi=(1,\dots,1)$. In this case,~\eqref{eqn_1511_1543} yields
\begin{align*}
E_\ba\ast q_{/\pi_\bi}
&=
\frac{1}{k^2-k}|\{(x',y')\in {R}^{\tilde{\K}_k}: (x',\dots,x')=\ba\}|
=
\left\{
\begin{array}{cl}
\frac{1}{k} &\mbox{if }\ba \mbox{ is constant}\\
0 & \mbox{otherwise}.
\end{array}
\right.
\end{align*}
On the other hand,
\begin{align*}
E_\ba\ast g((x,y)_\bi)
&=
E_\ba \ast g((x,\dots,x))
=
\frac{1}{k} E_\ba\ast G_{(x,\dots,x)}
=
\left\{
\begin{array}{cl}
\frac{1}{k} & \mbox{if }\ba \mbox{ is constant}\\
0 & \mbox{otherwise}.
\end{array}
\right.
\end{align*}
Hence, the claim holds in this case. The case $\bi=(2,\dots,2)$ follows analogously. Suppose now that $\bi\not\in\{(1,\dots,1),(2,\dots,2)\}$. In this case,~\eqref{eqn_1511_1543} yields
\begin{align*}
E_\ba\ast q_{/\pi_\bi} 
&=
\left\{
\begin{array}{cl}
\frac{1}{k^2-k} & \mbox{if }\ba\sim \bi\\
0 & \mbox{otherwise}.
\end{array}
\right.
\end{align*}
On the other hand,
\begin{align*}
E_\ba\ast g((x,y)_\bi)
&=
\frac{1}{k^2-k}E_\ba\ast G_{(x,y)_\bi}
=
\left\{
\begin{array}{cl}
\frac{1}{k^2-k} & \mbox{if }\ba\sim (x,y)_\bi\\
0 & \mbox{otherwise},
\end{array}
\right.
\end{align*}
so the claim holds in this case, too, since $\bi\sim(x,y)_\bi$ and \enquote{$\sim$} is transitive. It follows that $g$ preserves ${R}$.

We now claim that $g$ satisfies the consistency equation~\eqref{eqn_consistency}. Take $\bx\in X^k$ and $\bi,\ba\in [k]^k$. Using Lemma~\ref{lem_multiplication_rule} and Lemma~\ref{lem_involved_1511_1718}, we find
\begin{align*}
E_\ba\ast \Pi_\bi\ast g(\bx)
&=
\sum_{\substack{\bb\in [k]^k\\ \bb_\bi=\ba}}E_\bb\ast g(\bx)
=
\frac{(k-|\{\bx\}|)!}{k!}\sum_{\substack{\bb\in [k]^k\\ \bb_\bi=\ba}}E_\bb\ast G_\bx\\
&=
\frac{(k-|\{\bx\}|)!}{k!}|\{\bb\in [k]^k: \bb_\bi=\ba \mbox{ and } \bb\sim\bx\}|\\
&=
\left\{
\begin{array}{cl}
\frac{(k-|\{\bx\}|)!}{k!}\frac{(k-|\{\bx_\bi\}|)!}{(k-|\{\bx\}|)!} & \mbox{if } \ba\sim\bx_\bi\\
0 & \mbox{otherwise}
\end{array}
\right.
\\
&=
\left\{
\begin{array}{cl}
\frac{(k-|\{\bx_\bi\}|)!}{k!} & \mbox{if } \ba\sim\bx_\bi\\
0 & \mbox{otherwise}.
\end{array}
\right.
\end{align*}
On the other hand,
\begin{align*}
E_\ba\ast g(\bx_\bi)
&=
\frac{(k-|\{\bx_\bi\}|)!}{k!}E_\ba\ast G_{\bx_\bi}
=
\left\{
\begin{array}{cl}
\frac{(k-|\{\bx_\bi\}|)!}{k!} & \mbox{if }\ba\sim\bx_\bi\\
0 & \mbox{otherwise},
\end{array}
\right.
\end{align*}
so the claim holds. It follows by Proposition~\ref{lem_a_symmetry} that $g$ preserves the $k$-ary symbol $R_k$. Hence, $g$ is a homomorphism from $\tilde{\X}^\tensor{k}$ to $\textbf{F}$, which concludes the proof.
\end{proof}
\begin{rem}
\label{cor_SAk_no_solves_approx_graph_color}
A direct consequence of Proposition~\ref{prop_SA_cliques} is that $\SA^k$ does not solve $\PCSP(\K_c,\K_d)$ if $3\leq c\leq d$ and $2\leq k\leq c$.
To see this, let us denote by $\tilde{\A}$ the structure obtained from a given structure $\A$ by adding the relation $R_k^{\tilde\A}=A^k$. By Proposition~\ref{prop_SA_cliques}, $\SA^k(\K_{d+1},\K_k)$ (and thus $\SA^k(\tilde \K_{d+1},\tilde \K_k)$) accepts. By Theorem~\ref{thm:main}, $\tilde{\K}_{d+1}^\tensor{k}\to\mathbb{F}_{\Qconv}(\tilde{\K}_k^\tensor{k})$. Since $k\leq c$, $\K_k\to \K_c$, so $\tilde{\K}_k\to\tilde{\K}_c$, which implies by Lemma~\ref{prop_correspondence_morphisms_tensor_structures} $\tilde{\K}_k^\tensor{k}\to\tilde{\K}_c^\tensor{k}$ and by Lemma~\ref{lem_A_B_free_struc} in Appendix~\ref{app:minions} $\mathbb{F}_{\Qconv}(\tilde{\K}_k^\tensor{k})\to\mathbb{F}_{\Qconv}(\tilde{\K}_c^\tensor{k})$. Composing the two homomorphisms, we get $\tilde{\K}_{d+1}^\tensor{k}\to\mathbb{F}_{\Qconv}(\tilde{\K}_c^\tensor{k})$ so, by Theorem~\ref{thm:main}, $\SA^k(\tilde{\K}_{d+1},\tilde{\K}_c)$ (and thus $\SA^k({\K}_{d+1},{\K}_c)$) accepts. However, clearly, $\K_{d+1}\not\to \K_d$.
\end{rem}

\begin{rem}
A second consequence of Proposition~\ref{prop_SA_cliques} (combined with Proposition~\ref{prop_sherali_adams_exact}) is the complete characterisation of the behaviour of Sherali-Adams on cliques: Letting $c,d,k\in \N$ be at least $2$, $\SA^k(\K_c,\K_d)$ accepts if and only if $\min(k,c)\leq d$.

To show this, let us denote by $\tilde{\A}$ the structure obtained from a given structure $\A$ by adding the relation $R_t^{\tilde\A}=A^t$ for each $t\in [k]$. 

If $k\leq d$, by Proposition~\ref{prop_SA_cliques} $\SA^d(\K_c,\K_d)$ accepts, which implies that $\SA^d(\tilde{\K}_c,\tilde{\K}_d)$ accepts whence, by Proposition~\ref{prop_SAkaccepts_hom_tensor}, $\tilde{\K}_c^\tensor{d}\to\mathbb{F}_{\Qconv}(\tilde{\K}_d^\tensor{d})$. Applying Proposition~\ref{prop_sherali_adams_smaller_level}, we deduce that $\tilde{\K}_c^\tensor{k}\to\mathbb{F}_{\Qconv}(\tilde{\K}_d^\tensor{k})$. From Theorem~\ref{thm:main}, we find that $\SA^k(\tilde{\K}_c,\tilde{\K}_d)$ accepts and, hence, $\SA^k({\K}_c,{\K}_d)$ accepts.

If $c\leq d$, the following chain of implications holds:
\[
\begin{array}{lll}
&\K_c\to \K_d \hspace{.4cm} \Rightarrow \hspace{.4cm} \tilde{\K}_c\to\tilde{\K}_d\\[2pt]
\Rightarrow & \tilde{\K}_c^\tensor{k}\to\tilde{\K}_d^\tensor{k} \to \mathbb{F}_{\Qconv}(\tilde{\K}_d^\tensor{k})\\[2pt]
\Rightarrow
& \SA^k(\tilde{\K}_c,\tilde{\K}_d) \mbox{ accepts}
\hspace{.4cm} \Rightarrow
\hspace{.4cm} \SA^k({\K}_c,{\K}_d) \mbox{ accepts},
\end{array}
\]
where the first and fourth implications are trivial, the second follows from Lemma~\ref{prop_correspondence_morphisms_tensor_structures} and Lemma~\ref{lem_A_maps_free_structure_of_A} in Appendix~\ref{app:minions}, and the third from Theorem~\ref{thm:main}.

If $m\coloneqq \min(k,c)> d$, suppose, for the sake of contradiction, that $\SA^k(\K_c,\K_d)$ accepts, whence it follows that $\SA^k(\tilde \K_c,\tilde \K_d)$ accepts. The following chain of implications holds:
\[
\begin{array}{llll}
&\K_m\to \K_c \hspace{.4cm} \Rightarrow \hspace{.4cm} \tilde{\K}_m\to\tilde{\K}_c\\[2pt]
\Rightarrow & \tilde{\K}_m^\tensor{k}\to\tilde{\K}_c^\tensor{k}\to \mathbb{F}_{\Qconv}(\tilde{\K}_d^\tensor{k})\\[2pt] 
\Rightarrow & 
\tilde{\K}_m^\tensor{m}\to\mathbb{F}_{\Qconv}(\tilde{\K}_d^\tensor{m})\\[2pt]
\Rightarrow & \tilde{\K}_m\to \tilde{\K}_d \hspace{.4cm} \Rightarrow \hspace{.4cm} \K_m\to \K_d,
\end{array}
\]
where the first and fifth implications are trivial, the second follows from Lemma~\ref{prop_correspondence_morphisms_tensor_structures} and Theorem~\ref{thm:main}, the third from Proposition~\ref{prop_sherali_adams_smaller_level}, and the fourth from Proposition~\ref{prop_sherali_adams_exact} (notice that $\tilde{\K}_m$ and $\tilde{\K}_d$ are $m$-enhanced). Since $m>d$, this leads to a contradiction.  
\end{rem}

We now use Proposition~\ref{prop_SA_cliques} to prove Theorem~\ref{main_theorem_approximate_colouring} in full generality by exploiting the properties of the line digraph construction. Let $\X$ be a digraph, whose binary edge relation shall be denoted by $R^\X$. The \emph{line digraph} of $\X$ is the digraph $\delta\X$ on vertex set $R^\X$ whose binary relation is $S^{\delta\X}=\{((x,y),(y,z)): (x,y),(y,z)\in R^\X\}$. 
The line digraph construction has been utilised in~\cite{WZ20,GS20:icalp} as a polynomial-time (and in fact log-space) reduction between $\PCSP$s. In particular, the construction changes the chromatic number in a controlled way, as we now describe.

Consider the integer functions $a$ and $b$ defined by $a(p)=2^p$ and $b(p)={p\choose \floor{p/2}}$ for $p\in\N$, and notice that $a(p)\geq b(p)$ for each $p$. Let also $a^{(i)}$ (resp. $b^{(i)}$) be the function obtained by iterating $a$ (resp. $b$) $i$-many times, for $i\in \N$. The following result bounds the chromatic number of the line digraph in terms of that of the original digraph. 
\begin{thm}[\cite{HarnerE72}]
\label{prop_chromatic_number_line_digraph}
Let $\X$ be a digraph and let $p\in\N$.
\begin{itemize}
\item
If $\delta\X\to\K_p$, then $\X\to\K_{a(p)}$.
\item
If $\X\to\K_{b(p)}$, then $\delta\X\to \K_p$.
\end{itemize}
\end{thm}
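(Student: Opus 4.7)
The plan is to establish the two implications by explicit, almost dual colouring constructions. For the first implication, starting from a homomorphism $\phi:\delta\X\to\K_p$, I would associate to each vertex $v\in X$ the set $A_v=\{\phi((v,w)):(v,w)\in R^\X\}\subseteq [p]$ of colours that $\phi$ assigns to the outgoing arcs of $v$. This yields a map $\psi:X\to 2^{[p]}$ whose codomain has exactly $2^p=a(p)$ elements, which I would verify is a homomorphism $\X\to\K_{a(p)}$. The crucial observation is that, for every arc $(v,w)\in R^\X$, the colour $c=\phi((v,w))$ lies in $A_v$ by construction, while every outgoing arc $(w,z)$ from $w$ gives rise to an edge from $(v,w)$ to $(w,z)$ in $\delta\X$, forcing $\phi((w,z))\neq c$ and hence $c\notin A_w$. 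This yields $A_v\neq A_w$, which is exactly the condition required for $\psi$ to send arcs of $\X$ to edges of $\K_{a(p)}$.

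For the second implication, given a homomorphism $\chi:\X\to\K_{b(p)}$, I would identify the $b(p)={p\choose\floor{p/2}}$ colours with the $\floor{p/2}$-element subsets of $[p]$, turning $\chi$ into a map from $X$ to this family of subsets. Since any two distinct $\floor{p/2}$-subsets are incomparable under inclusion, for every arc $(v,w)\in R^\X$ the difference $\chi(v)\setminus\chi(w)$ is nonempty, and I would define $\phi((v,w))$ to be any element of it. For consecutive arcs $(v,w),(w,z)\in R^\X$, the element $\phi((v,w))$ lies outside $\chi(w)$ while $\phi((w,z))$ lies inside $\chi(w)$, so $\phi((v,w))\neq\phi((w,z))$; this delivers the required homomorphism $\phi:\delta\X\to\K_p$.

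No serious obstacle is expected in either direction. The two non-trivial ingredients are the elementary fact that consecutive arcs of $\X$ correspond to adjacent vertices of $\delta\X$ (used in the first part to preclude $c\in A_w$) and the antichain property of middle-layer subsets in the Boolean lattice (used in the second part, where Sperner's theorem identifies $\floor{p/2}$ as the subset size that maximises $b(p)$ while still guaranteeing incomparability of any two distinct chosen subsets). The choice of $\floor{p/2}$ is therefore precisely what turns $b(p)$ into the best chromatic bound on $\X$ that this argument can accommodate.
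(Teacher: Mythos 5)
The paper does not prove this statement: it is cited directly from Harner and Entringer~\cite{HarnerE72}, so there is no internal proof to compare against. Judged on its own, your argument is correct and is the standard construction for both directions.

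A few small observations. In the first implication, your verification that $c=\phi((v,w))\notin A_w$ handles the degenerate case where $w$ has no outgoing arcs automatically (then $A_w=\emptyset$ while $A_v\ni c$), so no separate case is needed; it is worth noting, though, that the argument tacitly requires $\X$ to be loopless, and this is indeed forced by the hypothesis, since a loop $(v,v)$ in $\X$ would produce a loop at $(v,v)$ in $\delta\X$ and hence preclude $\delta\X\to\K_p$, and likewise a loop in $\X$ already precludes $\X\to\K_{b(p)}$ in the second implication. In the second implication, Sperner's theorem is not actually used in the proof: all you need is that two distinct subsets of the same size are incomparable, which is immediate; Sperner only explains why $\lfloor p/2\rfloor$ is the size that makes the bound $b(p)$ as large as possible, i.e., it is relevant to the optimality of the constant, not to the correctness of the implication. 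Your phrasing gestures at this distinction but could state it more crisply. With these caveats, the proposal is a complete and correct proof of the cited result.
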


It was suggested by Jakub Opr\v{s}al that the line digraph construction not only provides a polynomial-time reduction, 
but also preserves acceptance by Sherali-Adams. We now prove this fact using the tensorisation machinery.
\begin{prop*}[Proposition~\ref{prop_reduction_line_digraph} restated]
Let $k\in\N$ with $k\geq 2$, let $\X,\A$ be digraphs, and suppose that $\SA^{2k}(\X,\A)$ accepts. Then $\SA^k(\delta\X,\delta\A)$ accepts.
\end{prop*}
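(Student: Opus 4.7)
The plan is to use Theorem~\ref{thm:main} to translate both acceptance statements into tensor homomorphism problems, and then build the required homomorphism from the given one by \emph{flattening} each line-digraph vertex (an arc of the original digraph) into a pair of vertices. After $2k$-enhancing $\X,\A$ and $k$-enhancing $\delta\X,\delta\A$ -- neither operation affects Sherali-Adams acceptance -- Theorem~\ref{thm:main} yields a homomorphism $\xi:\tilde\X^\tensor{2k}\to\mathbb{F}_{\Qconv}(\tilde\A^\tensor{2k})$, and it will suffice to exhibit a homomorphism $\eta:\widetilde{\delta\X}^\tensor{k}\to\mathbb{F}_{\Qconv}(\widetilde{\delta\A}^\tensor{k})$. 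Letting $\phi$ denote the flattening $((x_1,y_1),\ldots,(x_k,y_k))\mapsto(x_1,y_1,\ldots,x_k,y_k)$ on both $(R^\X)^k\to X^{2k}$ and $(R^\A)^k\to A^{2k}$, I set
\[E_\bu\ast\eta(\bz)\coloneqq E_{\phi(\bu)}\ast\xi(\phi(\bz))\qquad\forall\bz\in(R^\X)^k,\;\forall\bu\in(R^\A)^k.\]
This is stochastic because $\phi(\bz)$ exhibits the arc $(x_i,y_i)\in R^\X$ at positions $(2i-1,2i)$, so Proposition~\ref{lem_partial_homo} forces $E_\ba\ast\xi(\phi(\bz))=0$ whenever some consecutive pair $(a_{2i-1},a_{2i})$ of $\ba$ fails to lie in $R^\A$, i.e., whenever $\ba$ does not lie in the $\phi$-image of $(R^\A)^k$.

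The heart of the argument is to verify that $\eta$ preserves both relations of $\widetilde{\delta\X}$. For the enhanced symbol $R_k$, given $\bi\in[k]^k$, the natural lift $\bj(\bi)\in[2k]^{2k}$ with $j(\bi)_{2\ell-1}=2i_\ell-1$ and $j(\bi)_{2\ell}=2i_\ell$ satisfies $\phi(\bz_\bi)=\phi(\bz)_{\bj(\bi)}$; combining Proposition~\ref{lem_a_symmetry} for $\xi$ with Lemmas~\ref{lem_multiplication_rule_P} and~\ref{lem_multiplication_rule} reduces the claim $\eta(\bz_\bi)=\Pi_\bi^{\delta\A}\ast\eta(\bz)$ to a bijection between $\{\bb\in A^{2k}:\bb_{\bj(\bi)}=\phi(\bu)\}$ and $\{\bu'\in(R^\A)^k:\bu'_\bi=\bu\}$, once Proposition~\ref{lem_partial_homo} has restricted attention to those $\bb$ of the form $\phi(\bu')$; Proposition~\ref{lem_a_symmetry} then gives preservation of $R_k$.

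For the edge relation, given $\bw=((x,y),(y,z))\in S^{\delta\X}$, I pad $(x,y,z)$ to $\bv=(x,y,z,x,x,\ldots,x)\in X^{2k}$ and define $q\in\Qconv^{(|S^{\delta\A}|)}$ by $\be_{((p,r),(r,t))}\ast q\coloneqq E_{(p,r,t,p,\ldots,p)}\ast\xi(\bv)$. Lemma~\ref{cor_vanishing_contractions} restricts the support of $\xi(\bv)$ to tuples with $a_1=a_4=\cdots=a_{2k}$, and two applications of Proposition~\ref{lem_partial_homo} force $(a_1,a_2),(a_2,a_3)\in R^\A$, so $q$ is well-defined and stochastic on $S^{\delta\A}$. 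Since $\phi(\bw_\bi)=\bv_{\bj'(\bi)}$ for the lift $\bj'(\bi)\in[2k]^{2k}$ assigning $(1,2)$ or $(2,3)$ at positions $(2\ell-1,2\ell)$ according as $i_\ell$ is $1$ or $2$, the consistency equation for $\xi$ together with Lemmas~\ref{lem_multiplication_rule_P} and~\ref{lem_multiplication_rule} yield $\eta(\bw_\bi)=q_{/\pi_\bi^{\delta\A}}$ for every $\bi\in[2]^k$, whence $\eta(\bw^\tensor{k})\in S^{\mathbb{F}_{\Qconv}(\widetilde{\delta\A}^\tensor{k})}$. Applying Theorem~\ref{thm:main} in the reverse direction to $\eta$ concludes.

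The main obstacle is the indexing bookkeeping: at each step one must correctly identify the lift $\bj(\bi)$ of a $\delta\X$-level projection to an $\X$-level projection, and then invoke Proposition~\ref{lem_partial_homo} and Lemma~\ref{cor_vanishing_contractions} to ensure that the support of $\xi(\phi(\bz))$ (resp.\ of $\xi(\bv)$) is restricted enough for the raw sum over $A^{2k}$-indices appearing in $q_{/\pi}$ or $\Pi\ast\eta$ to collapse exactly to the sum over $\delta\A$-indices actually present in $\eta$. Beyond this, the verification is a routine chain of the contraction identities established in Section~\ref{sec_tensorisation}.
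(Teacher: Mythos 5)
Your proof is correct and follows essentially the same approach as the paper: translate both acceptance statements via Theorem~\ref{thm:main}, define the new homomorphism by flattening $k$-tuples of arcs to $2k$-tuples of vertices, and verify preservation of $R_k$ and the edge relation using the vanishing facts (Lemma~\ref{cor_vanishing_contractions}, Proposition~\ref{lem_partial_homo}) and the consistency equation (Proposition~\ref{lem_a_symmetry}) together with the contraction identities. The one place where you diverge is the extraction of the witness $q$ for edge-preservation: the paper enhances $\A$ with a $3$-ary relation $R_3$, pulls a tuple $w\in\Qconv^{(n^3)}$ out of $\xi\bigl((x,y,z)^{\tensor{2k}}\bigr)\in R_3^{\mathbb{F}_{\Qconv}(\A^{\tensor{2k}})}$, and sets $\be_{((a,b),(b,c))}\ast q\coloneqq\be_{(a,b,c)}\ast w$, whereas you read $q$ directly off the entries of $\xi(\bv)$ for the padded tuple $\bv=(x,y,z,x,\ldots,x)\in X^{2k}$. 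These agree, since $\xi(\bv)=\xi((x,y,z)_\bi)=w_{/\pi_\bi}$ for $\bi=(1,2,3,1,\ldots,1)\in[3]^{2k}$, so $E_{(p,r,t,p,\ldots,p)}\ast\xi(\bv)=\be_{(p,r,t)}\ast w$; your route has the small advantage of not needing to add $R_3$ (only $R_{2k}$), and the trade-off is that stochasticity of $q$ then has to be argued via Lemma~\ref{cor_vanishing_contractions} and two invocations of Proposition~\ref{lem_partial_homo}, exactly as you do.
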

\begin{proof}
Let $\tilde{\X}$ (resp. $\tilde{\A}$) be obtained from $\X$ (resp. $\A$) by adding the relations $R_t^{\tilde{\X}}=X^t$ (resp. $R_t^{\tilde{\A}}=A^t$) for each $t\in [2k]$. Let also $\widetilde{\delta\X}$ (resp. $\widetilde{\delta\A}$) be obtained from $\delta\X$ (resp. $\delta\A$) by adding the relation $R_k^{\widetilde{\delta\X}}=(R^\X)^k$ (resp. $R_k^{\widetilde{\delta\A}}=(R^\A)^k$). Clearly, $\SA^{2k}(\X,\A)$ accepts if and only if $\SA^{2k}(\tilde\X,\tilde\A)$ accepts, and $\SA^k(\delta\X,\delta\A)$ accepts if and only if $\SA^k(\widetilde{\delta\X},\widetilde{\delta\A})$ accepts. To avoid cumbersome notation, we shall not write the symbols $\sim$ explicitly. By virtue of Theorem~\ref{thm:main}, there exists a homomorphism $\xi:\X^\tensor{2k}\to \mathbb{F}_{\Qconv}(\A^\tensor{2k})$, and to prove the proposition we need to build a homomorphism $\vartheta:\delta\X^\tensor{k}\to \mathbb{F}_{\Qconv}(\delta\A^\tensor{k})$. Given a tuple of arcs $\bx=(\bx_1,\dots,\bx_k)\in (R^\X)^k$ such that $\bx_\ell=(x_{\ell ,1},x_{\ell, 2})\in R^\X$ for each $\ell\in [k]$, we shall consider the tuple of vertices $\bx'=(x_{1,1},x_{1,2},x_{2,1},x_{2,2},\dots,x_{k,1},x_{k,2})\in X^{2k}$; given $\ba\in (R^\A)^k$, we define $\ba'\in A^{2k}$ analogously. We set
\begin{align*}
E_\ba\ast\vartheta(\bx)\coloneqq E_{\ba'}\ast\xi(\bx').
\end{align*}
Observe that 
\begin{align}
\label{eqn_2502_1250}
\sum_{\ba\in (R^\A)^k}E_\ba\ast\vartheta(\bx)
&=
\sum_{\ba\in (R^\A)^k}E_{\ba'}\ast\xi(\bx')
\leq
\sum_{\bb\in A^{2k}}E_{\bb}\ast\xi(\bx')
=
1.
\end{align}
If~\eqref{eqn_2502_1250} hold with strict inequality, we would have that $E_\bb\ast\xi(\bx')>0$ for some $\bb\in A^{2k}$ such that $(b_{2i-1},b_{2i})\not\in R^\A$ for some $i\in [k]$. Taking $\bi=(2i-1,2i)\in [2k]^2$, this means that $\bb_\bi\not\in R^\A$, while, by definition of $\bx'$, $\bx'_\bi\in R^\X$. Applying Proposition~\ref{lem_partial_homo}, we get a contradiction. It follows that $\vartheta(\bx)\in\Qconv^{(m^k)}$, where $m=|R^\A|$. We claim that $\vartheta$ yields a homomorphism from $\delta\X^\tensor{k}$ to $\mathbb{F}_{\Qconv}(\delta\A^\tensor{k})$.

To show that $\vartheta$ preserves the binary symbol $S$, take $((x,y),(y,z))\in S^{\delta\X}$, so $((x,y),(y,z))^\tensor{k}\in S^{\delta\X^\tensor{k}}$. We need to check that $\vartheta(((x,y),(y,z))^\tensor{k})\in S^{\mathbb{F}_{\Qconv}(\delta\A^\tensor{k})}$. Observe that $(x,y,z)\in X^3=R_3^\X$, so $(x,y,z)^\tensor{2k}\in R_3^{\X^\tensor{2k}}$ and, hence, $\xi((x,y,z)^\tensor{2k})\in R_3^{\mathbb{F}_{\Qconv}(\A^{\tensor{2k}})}$. It follows that there exists $w\in \Qconv^{(|R_3^\A|)}=\Qconv^{(n^3)}$ such that $\xi((x,y,z)_\bi)=w_{/\pi_\bi}$ for each $\bi\in [3]^{2k}$. Let $\ell\coloneqq |S^{\delta\A}|$, and consider the rational $\ell$-vector $q$ defined by $\be_{((a,b),(b,c))}\ast q\coloneqq \be_{(a,b,c)}\ast w$ for $((a,b),(b,c))\in S^{\delta\A}$. We claim that $q\in\Qconv^{(\ell)}$. Observe that 
\begin{align}
\label{eqn_2502_1630}
\sum_{((a,b),(b,c))\in S^{\delta\A}}\be_{((a,b),(b,c))}\ast q
&=
\sum_{((a,b),(b,c))\in S^{\delta\A}}\be_{(a,b,c)}\ast w
\leq
\sum_{(a,b,c)\in A^3}\be_{(a,b,c)}\ast w
=1.
\end{align}
Suppose~\eqref{eqn_2502_1630} holds with strict inequality. It follows that $\be_{(a,b,c)}\ast w>0$ for some $(a,b,c)\in A^3$ such that $(a,b)\not\in R^\A$ or $(b,c)\not\in R^\A$. Assume $(a,b)\not\in R^\A$ (the other case follows analogously). Consider the tuples $\bi=(1,2,3,3,\dots,3)\in [3]^{2k}$, $\ba=(a,b,c,c,\dots,c)\in A^{2k}$, and $\bj=(1,2)\in [2k]^2$. Note that $((x,y,z)_\bi)_\bj=(x,y)\in R^\X$ while $\ba_\bj=(a,b)\not\in R^\A$. It follows from Proposition~\ref{lem_partial_homo} that $E_\ba\ast\xi((x,y,z)_\bi)=0$, whence we find
\begin{align*}
0
&=
E_\ba\ast\xi((x,y,z)_\bi)
=
E_\ba\ast w_{/\pi_\bi}
=
E_\ba\ast P_\bi\ast w
=
\sum_{\substack{\bb\in A^3\\\bb_\bi=\ba}}\be_\bb\ast w
=
\be_{(a,b,c)}\ast w,
\end{align*}
a contradiction (where we have used Lemma~\ref{lem_multiplication_rule_P}). We deduce that~\eqref{eqn_2502_1630} holds with equality, so $q\in\Qconv^{(\ell)}$ as claimed. Consider now $\bj=(j_1,\dots,j_k)\in [2]^k$, and set $\bx=((x,y),(y,z))_\bj\in (R^\X)^k$. We claim that $\vartheta(\bx)=q_{/\pi_\bj}$. This would conclude the proof that $\vartheta$ preserves $S$. For $\ba\in (R^\A)^k$, we find
\begin{align}
\label{eqn_2502_1832_A}
E_\ba\ast q_{/\pi_\bj}
&=
E_\ba\ast P_\bj\ast q
=
\sum_{\substack{((a,b),(b,c))\in S^{\delta\A}\\((a,b),(b,c))_\bj=\ba}}\be_{((a,b),(b,c))}\ast q
=
\sum_{\substack{((a,b),(b,c))\in S^{\delta\A}\\((a,b),(b,c))_\bj=\ba}}\be_{(a,b,c)}\ast w.
\end{align}
Consider the tuple $\bi=(i_1,\dots,i_{2k})\in [3]^{2k}$ defined by $i_{2\ell-1}=j_\ell$, $i_{2\ell}=j_\ell+1$ for each $\ell\in [k]$. Observe that this choice guarantees $(x,y,z)_\bi=\bx'$. Therefore, 
\begin{align}
\label{eqn_2502_1832_B}
E_\ba\ast\vartheta(\bx)
&=
E_{\ba'}\ast\xi(\bx')
=
E_{\ba'}\ast\xi((x,y,z)_\bi)
=
E_{\ba'}\ast w_{/\pi_\bi}
=
E_{\ba'}\ast P_\bi\ast w
=
\sum_{\substack{(a,b,c)\in A^3\\(a,b,c)_\bi=\ba'}}\be_{(a,b,c)}\ast w.
\end{align}
Note that, given $(a,b),(b,c)\in R^\A$, $((a,b),(b,c))_\bj=\ba$ if and only if $(a,b,c)_\bi=\ba'$. Hence, combining~\eqref{eqn_2502_1832_A} and~\eqref{eqn_2502_1832_B} yields 
$
E_\ba\ast q_{/\pi_\bj}\leq E_\ba\ast\vartheta(\bx)
$,
and strict inequality can happen only if there exists a tuple $(a,b,c)\in A^3$ such that $\be_{(a,b,c)}\ast w>0$ and at least one among $(a,b)$ and $(b,c)$ does not belong to $R^\A$. As shown above, this leads to a contradiction. It follows that  
$
E_\ba\ast q_{/\pi_\bj}= E_\ba\ast\vartheta(\bx)
$, which proves the claim.

To show that $\vartheta$ preserves $R_k$, we shall prove that the consistency equation~\eqref{eqn_consistency} holds, and then apply Proposition~\ref{lem_a_symmetry}. Take $\bx\in (R^\X)^k$, $\bi=(i_1,\dots,i_k)\in [k]^k$, and $\ba\in (R^\A)^k$. Consider also the tuple $\bj=(j_1,\dots,j_{2k})\in [2k]^{2k}$ defined by $j_{2\ell-1}=2i_\ell-1$, $j_{2\ell}=2i_\ell$ for $\ell\in [k]$. We find
\begin{align}
\label{eqn_1633_2602}
\begin{array}{lllllllll}
&\displaystyle
E_\ba\ast\Pi_\bi\ast\vartheta(\bx)
&=&\displaystyle
\sum_{\substack{\bb\in (R^\A)^k\\\bb_\bi=\ba}}E_\bb\ast\vartheta(\bx)
&=&\displaystyle
\sum_{\substack{\bb\in (R^\A)^k\\\bb_\bi=\ba}}
E_{\bb'}\ast\xi(\bx')
&\leq& \displaystyle
\sum_{\substack{\bc\in A^{2k}\\\bc_\bj=\ba'}}E_\bc\ast\xi(\bx')\\[25pt]
=&\displaystyle
E_{\ba'}\ast\Pi_\bj\ast\xi(\bx')
&=&\displaystyle
E_{\ba'}\ast\xi((\bx')_\bj)
&=&\displaystyle
E_{\ba'}\ast\xi((\bx_\bi)')
&=&\displaystyle
E_\ba\ast\vartheta(\bx_\bi),
\end{array}
\end{align}
where the first and third equalities follow from Lemma~\ref{lem_multiplication_rule}, the fourth from Proposition~\ref{lem_a_symmetry} applied to $\xi$, the fifth from the fact that $(\bx')_\bj=(\bx_\bi)'$, and the inequality from the fact that $\bb_\bi=\ba$ if and only if $\bb'_\bj=\ba'$. If strict inequality hold in~\eqref{eqn_1633_2602}, there would exist some $\bc\in A^{2k}$ such that $E_\bc\ast\xi(\bx')>0$ and $(c_{2\ell-1},c_{2\ell})\not\in R^\A$ for some $\ell\in [k]$ -- which would contradict Proposition~\ref{lem_partial_homo}. It follows that $\vartheta$ satisfies the consistency equation, and is thus a homomorphism.
\end{proof}

We have now all the ingredients to strengthen Remark~\ref{cor_SAk_no_solves_approx_graph_color}
and prove that the approximate graph colouring problem is not solved by any constant level of Sherali-Adams.

\begin{proof}[Proof of Theorem~\ref{main_theorem_approximate_colouring}]
Suppose, for the sake of contradiction, that $\SA^k$ solves $\PCSP(\K_c,\K_d)$ for some $3\leq c\leq d$. Since $\SA^2$ is at least as strong as $\SA^1$, we can assume $k\geq 2$. We will now use Theorem~\ref{prop_chromatic_number_line_digraph} and Proposition~\ref{prop_reduction_line_digraph} to derive a contradiction to Proposition~\ref{prop_SA_cliques}. 

If $c\geq 4$, we can take $i\in\N$ such that $b^{(i)}(c)\geq k\cdot 2^i$. Consider the graph $\X=\K_{a^{(i)}(d)+1}$ and observe that, by Proposition~\ref{prop_SA_cliques} (cf.~Remark~\ref{cor_SAk_no_solves_approx_graph_color}), $\SA^{k\cdot 2^i}(\X,\K_{b^{(i)}(c)})$ accepts. Applying Proposition~\ref{prop_reduction_line_digraph} repeatedly, we find that $\SA^k(\delta^{(i)}\X,\delta^{(i)}\K_{b^{(i)}(c)})$ accepts, where $\delta^{(i)}$ denotes the application of the line digraph operator $i$-many times. Indicating by $\sim$ the $k$-enhanced completion of a structure, Theorem~\ref{thm:main} then yields 
\begin{align*}
(\widetilde{\delta^{(i)}\X})^\tensor{k}\to \mathbb{F}_{\Qconv}((\widetilde{\delta^{(i)}\K_{b^{(i)}(c)}})^\tensor{k}). 
\end{align*}
Applying the second part of Theorem~\ref{prop_chromatic_number_line_digraph} repeatedly, we find $\delta^{(i)}\K_{b^{(i)}(c)}\to\K_c$. In combination with Lemma~\ref{prop_correspondence_morphisms_tensor_structures} and Lemma~\ref{lem_A_B_free_struc} in Appendix~\ref{app:minions}, this yields
\begin{align*}
\mathbb{F}_{\Qconv}((\widetilde{\delta^{(i)}\K_{b^{(i)}(c)}})^\tensor{k})\to \mathbb{F}_{\Qconv}(\tilde\K_c^\tensor{k})
\end{align*}
so that, composing the two homomorphisms, we obtain
\begin{align*}
(\widetilde{\delta^{(i)}\X})^\tensor{k}\to\mathbb{F}_{\Qconv}(\tilde\K_c^\tensor{k}).
\end{align*}
It follows that $\SA^k(\delta^{(i)}\X,\K_c)$ accepts; since we are supposing that $\SA^k$ solves $\PCSP(\K_c,\K_d)$, this yields  $\delta^{(i)}\X\to\K_d$. Applying the first part of Theorem~\ref{prop_chromatic_number_line_digraph} repeatedly, we find $\X\to\K_{a^{(i)}(d)}$, a contradiction.

If $c=3$, we use the fact, observed in~\cite[Lemma~4.19]{KrokhinOWZ20}, that $\delta^{(2)}\K_4\to\K_3$. Let $\X$ be a digraph such that $\SA^{4k}(\X,\K_4)$ accepts. From Proposition~\ref{prop_reduction_line_digraph}, we find that $\SA^k(\delta^{(2)}\X,\delta^{(2)}\K_4)$ accepts. Reasoning as above, we obtain 
\begin{align*}
(\widetilde{\delta^{(2)}\X})^\tensor{k}\to\mathbb{F}_{\Qconv}((\widetilde{\delta^{(2)}\K_4})^\tensor{k})\to\mathbb{F}_{\Qconv}(\tilde{\K}_3^\tensor{k}),
\end{align*} 
so $\SA^k(\delta^{(2)}\X,\K_3)$ accepts, whence we get $\delta^{(2)}\X\to\K_d$ and, from the first part of Theorem~\ref{prop_chromatic_number_line_digraph}, $\X\to\K_{a^{(2)}(d)}$. This means that $\SA^{4k}$ solves $\PCSP(\K_4,\K_{a^{(2)}(d)})$, which yields a contradiction as discussed above.
\end{proof}

\section{Acceptance vs. solvability}
\label{sec:solv}

In light of the acceptance characterisation obtained in Theorem~\ref{thm:main},
one may be tempted to conjecture that $\SA^k$ solves a $\PCSP$ template $(\A,\B)$ if and only if there exists a minion homomorphism $\Qconv\to\Pol(\A^\tensor{k},\B^\tensor{k})$. The latter condition is certainly sufficient. Indeed, it follows from~\cite[Lemma~4.4]{BBKO21} that $\Qconv\to\Pol(\A^\tensor{k},\B^\tensor{k})$ is equivalent to $\mathbb{F}_{\Qconv}(\A^{\tensor{k}})\to\B^\tensor{k}$. If this is the case, Proposition~\ref{prop_SAkaccepts_hom_tensor} implies that any $\sigma$-structure $\X$ such that $\SA^k(\X,\A)$ accepts satisfies $\X^\tensor{k}\to\mathbb{F}_{\Qconv}(\A^{\tensor{k}})\to\B^\tensor{k}$. By virtue of Lemma~\ref{prop_correspondence_morphisms_tensor_structures}, this yields $\X\to\B$. Also, clearly, if $\SA^k(\X,\A)$ does not accept then $\X\not\to\A$. 

However,
for $k\geq 2$ the condition $\Qconv\to\Pol(\A^\tensor{k},\B^\tensor{k})$ is far from being necessary. A consequence of Theorem~\ref{thm_pol_AB_pol_tensor_iso}, which we shall prove in this section, is that $\Qconv\to\Pol(\A^\tensor{k},\B^\tensor{k})$ is in fact equivalent to $\Qconv\to\Pol(\A,\B)$; this last condition, in turn, is equivalent to $\BLP$ solving $\PCSP(\A,\B)$ (cf.~Theorem~\ref{thm:blp} in Appendix~\ref{app:BLP}). Since $\SA^k$ is strictly stronger than $\BLP$ (for example, when applied to $\CSP$s), we deduce that $\Qconv\to\Pol(\A^\tensor{k},\B^\tensor{k})$ is not necessary for $\SA^k$ to solve $\PCSP(\A,\B)$.

We start off with two technical results -- Proposition~\ref{prop_iso_tensor_cartesian} and Lemma~\ref{lem_k_enhanced_cartesian} -- both needed in the proof of Theorem~\ref{thm_pol_AB_pol_tensor_iso}. 

An \emph{isomorphism} between two $\sigma$-structures (or two minions) is a bijective homomorphism whose inverse is also a homomorphism. 
\begin{prop}
\label{prop_iso_tensor_cartesian}
Let $L,k\in \N$ and let $\A$ be a $\sigma$-structure. Then there exists an isomorphism $\xi:{(\A^L)}^\tensor{k}\to {\left(\A^\tensor{k}\right)}^L$. 
\end{prop}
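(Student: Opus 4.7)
The plan is to exhibit the natural \enquote{transposition} bijection between the two domains and then verify that it maps each relation onto the corresponding one.

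First, I would observe that both structures are $\sigma^{\tensor{k}}$-structures and that their domains, $(A^L)^k$ and $(A^k)^L$, are canonically in bijection with the set of $A$-valued arrays indexed by $[k]\times[L]$. Concretely, any $\ba\in(A^L)^k$ can be written as $\ba=(\ba^{(1)},\dots,\ba^{(k)})$ with $\ba^{(i)}=(a^{(i)}_1,\dots,a^{(i)}_L)\in A^L$, and any $\bc\in(A^k)^L$ can be written as $\bc=(\bc^{(1)},\dots,\bc^{(L)})$ with $\bc^{(\ell)}=(c^{(\ell)}_1,\dots,c^{(\ell)}_k)\in A^k$. Define $\xi:(A^L)^k\to(A^k)^L$ by setting $c^{(\ell)}_i\coloneqq a^{(i)}_\ell$, i.e.\ matrix transposition; clearly $\xi$ is a bijection with inverse the transpose in the opposite direction.

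Next, I would fix $R\in\sigma$ of arity $r$ (so arity $r^k$ in $\sigma^{\tensor{k}}$) and unpack both relations in terms of a common parameter set. By the definition of tensor power, $R^{(\A^L)^{\tensor{k}}}=\{\ba^{\tensor{k}}:\ba\in R^{\A^L}\}$; writing $\ba=(b_1,\dots,b_r)$ with $b_j=(b_{j,1},\dots,b_{j,L})\in A^L$, the condition $\ba\in R^{\A^L}$ is precisely that, for each $\ell\in[L]$, the \enquote{row} $\bd_\ell\coloneqq(b_{1,\ell},\dots,b_{r,\ell})$ belongs to $R^\A$. On the other side, a tuple in $R^{(\A^{\tensor{k}})^L}$ is, by the definition of the $L$-th power applied to the $\sigma^{\tensor{k}}$-structure $\A^{\tensor{k}}$, a collection of columns of an $L\times r^k$ array over $A^k$ whose $\ell$-th row lies in $R^{\A^{\tensor{k}}}$; equivalently, it is parametrised by an $L$-tuple $(\bd_1^{\tensor{k}},\dots,\bd_L^{\tensor{k}})$ with $\bd_\ell\in R^\A$, reassembled coordinate-wise across the $L$ layers. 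Using \emph{the same} tuples $\bd_1,\dots,\bd_L$ on both sides is the heart of the argument.

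Finally, I would check that $\xi$, applied entrywise to the tensor $\ba^{\tensor{k}}\in R^{(\A^L)^{\tensor{k}}}$, produces precisely the tuple in $R^{(\A^{\tensor{k}})^L}$ parametrised by those very $\bd_1,\dots,\bd_L$. This reduces to the coordinate identity
\begin{align*}
\xi\bigl(E_{\bi}\ast\ba^{\tensor{k}}\bigr)\;=\;\xi(\ba_\bi)\;=\;\bigl((\bd_1)_\bi,\dots,(\bd_L)_\bi\bigr)\qquad\forall\,\bi\in[r]^k,
\end{align*}
which, after writing $\ba_\bi=(b_{i_1},\dots,b_{i_k})$ and expanding each $b_{i_s}\in A^L$, is immediate from the definition of $\xi$ and of $\bd_\ell$. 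Since the correspondence $\ba\leftrightarrow(\bd_1,\dots,\bd_L)$ is itself a bijection and $\xi$ is a bijection, the same calculation run backwards shows that $\xi^{-1}$ transports $R^{(\A^{\tensor{k}})^L}$ into $R^{(\A^L)^{\tensor{k}}}$, so $\xi$ is an isomorphism. The only real obstacle is bookkeeping: cleanly separating the $k$ tensor-power indices from the $L$ matrix-power indices and matching the \enquote{tensor} description on one side with the \enquote{tuple of columns} description on the other; once the coordinates are laid out as a $k\times L$ array on both sides, the isomorphism is just the transposition acting on each level of the construction.
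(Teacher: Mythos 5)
Your proof is correct and takes essentially the same route as the paper's: both define $\xi$ as matrix transposition between $(A^L)^k$ and $(A^k)^L$ and then verify entrywise that $\xi$ carries $\ba^{\tensor{k}}$ to the tuple built from the "row" tuples $\bd_\ell$. The one small improvement in your write-up is that instead of running the entrywise computation a second time for $\xi^{-1}$ (as the paper does), you observe that on both sides the relation $R$ is parametrised bijectively by $L$-tuples $(\bd_1,\dots,\bd_L)\in(R^\A)^L$ — via Footnote~\ref{footnote_RA_RAk} for the $R^{\A^{\tensor{k}}}$ side and via the definition of $R^{\A^L}$ for the other — and that $\xi$ matches up these parametrisations, so its restriction to $R^{(\A^L)^{\tensor{k}}}$ is automatically a bijection onto $R^{(\A^{\tensor{k}})^L}$ and $\xi^{-1}$ preserves relations for free. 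If you formalise this, do make the "$\ba\leftrightarrow(\bd_1,\dots,\bd_L)$ is a bijection" step explicit, since it is what lets you avoid the second computation.
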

\begin{proof}
Consider the map $\xi:{(A^L)}^k\to {(A^k)}^L$ defined as follows: Given $\ba_1,\dots,\ba_k\in A^L$, let $M\in A^{L\times k}$ be the matrix satisfying $M\be_i=\ba_i$ $\forall i\in [k]$, and set $\xi((\ba_1,\dots,\ba_k))\coloneqq(M^T\be_1,\dots,M^T\be_L)$. Clearly, $\xi$ is bijective. We now show that $\xi$ is a homomorphism from ${(\A^L)}^\tensor{k}$ to ${\left(\A^\tensor{k}\right)}^L$ and $\xi^{-1}$ is a homomorphism from ${\left(\A^\tensor{k}\right)}^L$ to ${(\A^L)}^\tensor{k}$.
 
To prove that $\xi$ is a homomorphism, choose $R\in\sigma$ of arity $r$, let $\tau=(\tau_1,\dots,\tau_r)\in R^{\A^L}$, where $\tau_i\in A^L$ $\forall i\in [r]$, and consider $\tau^\tensor{k}\in R^{{(\A^L)}^\tensor{k}}$. Let $Q_{\tau}\in A^{L\times r}$ be the matrix satisfying $Q_{\tau}\be_i=\tau_i$ $\forall i\in [r]$. By the definition of the $L$-th power of a relational structure, $Q_{\tau}^T\be_j\in R^\A$ $\forall j\in [L]$; hence, $(Q_{\tau}^T\be_j)^\tensor{k}\in R^{\A^\tensor{k}}$ $\forall j\in [L]$. Consider now the tensor $\varphi\in\cT^{k;r}((A^k)^L)$ defined by $E_\bi\ast\varphi\coloneqq\left((Q_\tau^T\be_1)_\bi,\dots,(Q_\tau^T\be_L)_\bi\right)$ $\forall \bi\in [r]^k$. Observe that, for any $j\in [L]$, $\left[(Q_\tau^T\be_j)_\bi\right]_{\bi\in [r]^k}=(Q_{\tau}^T\be_j)^\tensor{k}\in R^{\A^\tensor{k}}$, so $\varphi\in R^{{\left(\A^\tensor{k}\right)}^L}$. We claim that $\xi(\tau^\tensor{k})=\varphi$; this would imply that $\xi$ is a homomorphism. Let $\bi=(i_1,\dots,i_k)\in [r]^k$, and consider the matrix $K\in \{0,1\}^{r\times k}$ satisfying $K\be_j=\be_{i_j}$ $\forall j\in [k]$. We find
\begin{align*}
E_\bi\ast\xi\left(\tau^\tensor{k}\right)
&=
\xi\left(E_\bi\ast \tau^\tensor{k}\right)
=
\xi(\tau_\bi)
=
\xi((\tau_{i_1},\dots,\tau_{i_k}))\\
&=
\xi\left(\left(Q_\tau\be_{i_1},\dots,Q_\tau\be_{i_k}\right)\right)\\
&=
\xi\left(\left(Q_\tau K\be_1,\dots,Q_\tau K\be_k\right)\right)\\
&=
\left(K^TQ_\tau^T\be_1,\dots,K^TQ_\tau^T\be_L\right)\\
&=
\left(\left(Q_\tau^T\be_1\right)_\bi,\dots,\left(Q_\tau^T\be_L\right)_\bi\right)
=
E_\bi\ast\varphi,
\end{align*}
from which the claim follows.

To prove that $\xi^{-1}$ is a homomorphism, choose $R\in\sigma$ of arity $r$, and take $\varphi\in R^{{\left(\A^\tensor{k}\right)}^L}$ -- that is, $\varphi$ is a tensor in $\cT^{k;r}((A^k)^L)$ such that, for each $j\in [L]$, \[
\varphi_{(j)}\coloneqq \left[\left(E_\bi\ast \varphi\right)_j\right]_{\bi\in [r]^k}\in R^{\A^\tensor{k}},
\]
where $\left(E_\bi\ast \varphi\right)_j\in A^k$ is the $j$-th element in the tuple $E_\bi\ast \varphi\in (A^k)^L$. Therefore, for each $j\in [L]$, $\varphi_{(j)}=\ba_{(j)}^\tensor{k}$ for some $\ba_{(j)}\in R^\A$. Consider the matrix $Q\in A^{L\times r}$ satisfying $Q^T\be_j=\ba_{(j)}$ $\forall j\in [L]$, and observe that $\tau\coloneqq\left(Q\be_1,\dots,Q\be_r\right)\in R^{\A^L}$, so $\tau^\tensor{k}\in R^{{(\A^L)}^\tensor{k}}$. We claim that $\xi^{-1}(\varphi)=\tau^\tensor{k}$; this would imply that $\xi^{-1}$ is a homomorphism. Let $\bi=(i_1,\dots,i_k)\in [r]^k$, and consider the matrix $K$ introduced above. We find
\begin{align*}
E_\bi\ast \xi^{-1}(\varphi)
&=
\xi^{-1}(E_\bi\ast\varphi)
=
\xi^{-1}\left(\left(\left(E_\bi\ast\varphi\right)_1,\dots,\left(E_\bi\ast\varphi\right)_L\right)\right)\\
&=
\xi^{-1}\left(\left(E_\bi\ast\varphi_{(1)},\dots,E_\bi\ast\varphi_{(L)}\right)\right)\\
&=
\xi^{-1}\left(\left(E_\bi\ast\ba_{(1)}^\tensor{k},\dots,E_\bi\ast\ba_{(L)}^\tensor{k}\right)\right)\\
&=
\xi^{-1}\left(\left((\ba_{(1)})_\bi,\dots,(\ba_{(L)})_\bi\right)\right)\\
&=
\xi^{-1}\left(\left(K^T\ba_{(1)},\dots,K^T\ba_{(L)}\right)\right)\\
&=
\xi^{-1}\left(\left(K^TQ^T\be_1,\dots,K^TQ^T\be_L\right)\right)\\
&=
\xi^{-1}\left(\left((QK)^T\be_1,\dots,(QK)^T\be_L\right)\right)\\
&=
\left(QK\be_1,\dots,QK\be_k\right)
=
\left(Q\be_{i_1},\dots,Q\be_{i_k}\right)\\
&=
\tau_\bi
=
E_\bi\ast\tau^\tensor{k},
\end{align*}
from which the claim follows.
\end{proof}
\begin{lem}
\label{lem_k_enhanced_cartesian}
Let $k,L\in\N$, and let $\A$ be a $k$-enhanced $\sigma$-structure. Then $\A^L$ is $k$-enhanced, too.
\end{lem}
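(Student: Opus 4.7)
The plan is to directly unpack the definitions and verify that $R_k^{\A^L} = (A^L)^k$, which is exactly what being $k$-enhanced means for $\A^L$ (the signature already contains the symbol $R_k$ of arity $k$, so no modification of $\sigma$ is needed).

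First I would check the easy inclusion $R_k^{\A^L}\subseteq (A^L)^k$, which holds by definition since any relation of $\A^L$ lies in a power of its domain $A^L$. The substantive step is the reverse inclusion. I would take an arbitrary tuple $(\ba_1,\dots,\ba_k)\in (A^L)^k$ and form the $L\times k$ matrix $M$ whose $j$-th column is $\ba_j$; that is, $M$ is defined by $M\be_j=\ba_j$ for each $j\in[k]$. Every row $M^T\be_i$ for $i\in[L]$ is a $k$-tuple of elements of $A$, and hence belongs to $A^k=R_k^\A$ precisely because $\A$ is $k$-enhanced. By the definition of the $L$-th power of a relational structure, if all rows of an $L\times k$ matrix lie in $R_k^\A$ then the tuple of columns belongs to $R_k^{\A^L}$. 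Applied to $M$, this yields $(\ba_1,\dots,\ba_k)\in R_k^{\A^L}$.

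Combining the two inclusions gives $R_k^{\A^L}=(A^L)^k$, so $\A^L$ is $k$-enhanced. There is no real obstacle here; the argument is a one-line unfolding of the definitions of ``$k$-enhanced'' and of the $L$-th power, and the only care needed is to be explicit about the row/column interpretation of the defining matrix so that the condition $R_k^\A=A^k$ can be applied row-wise.
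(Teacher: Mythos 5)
Your proof is correct and follows essentially the same route as the paper's: form the $L\times k$ matrix from the $k$ given tuples, observe that every row lies in $A^k=R_k^\A$ because $\A$ is $k$-enhanced, and invoke the definition of $\A^L$ to conclude that the tuple of columns is in $R_k^{\A^L}$. The only cosmetic difference is notational (your index $i$ runs over columns and $j$ over rows, whereas the paper uses the opposite convention).
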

\begin{proof}
Since $\A$ is $k$-enhanced, $\sigma$ contains a $k$-ary symbol $R_k$ such that $R_k^\A=A^k$. We claim that $R_k^{\A^L}=(A^L)^k$. Clearly, we have $R_k^{\A^L}\subseteq (A^L)^k$. To prove the other inclusion, consider $k$ tuples $\ba_1,\dots,\ba_k\in A^L$, and let $M$ be the matrix in $A^{L\times k}$ satisfying $M\be_i=\ba_i$ for $i\in [k]$. Note that, for each $j\in [L]$, $M^T\be_j\in A^k=R_k^\A$. Hence, by the definition of the $L$-th power of a relational structure, $(\ba_1,\dots,\ba_k)\in R_k^{\A^L}$. This concludes the proof of the claim and of the lemma.
\end{proof}
We say that two minions $\mathscr{M}$ and $\mathscr{N}$ are \emph{homomorphically equivalent} if $\mathscr{M}\to\mathscr{N}$ and $\mathscr{N}\to\mathscr{M}$.
\begin{thm}
\label{thm_pol_AB_pol_tensor_iso}
Let $k\in\N$ and let $(\A,\B)$ be a $\PCSP$ template. Then $\Pol(\A,\B)$ and $\Pol(\A^\tensor{k},\B^\tensor{k})$ are homomorphically equivalent; if, in addition, $\A$ is $k$-enhanced, then $\Pol(\A,\B)$ and $\Pol(\A^\tensor{k},\B^\tensor{k})$ are isomorphic.
\end{thm}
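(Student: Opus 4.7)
The plan is to construct candidate minion homomorphisms $\Phi: \Pol(\A,\B) \to \Pol(\A^\tensor{k},\B^\tensor{k})$ and $\Psi: \Pol(\A^\tensor{k},\B^\tensor{k}) \to \Pol(\A,\B)$ out of the tools assembled earlier, and then upgrade them to mutually inverse isomorphisms in the $k$-enhanced case. The two key ingredients are already at hand: Proposition~\ref{prop_iso_tensor_cartesian} supplies, for every $L \in \N$, an isomorphism $\xi_L: (\A^L)^\tensor{k} \to (\A^\tensor{k})^L$, and Lemma~\ref{prop_correspondence_morphisms_tensor_structures}(i) supplies, for any pair of structures, the pair of operations $(\cdot)^*$ (entrywise lift) and $(\cdot)_*$ (restriction to the diagonal, reading off the first coordinate). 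Setting $\Phi(f) \coloneqq f^* \circ \xi_L^{-1}$ for $f: \A^L \to \B$ and $\Psi(g) \coloneqq (g \circ \xi_L)_*$ for $g: (\A^\tensor{k})^L \to \B^\tensor{k}$ yields well-defined maps between the two minions, since both factors in each composition are homomorphisms.

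The first task is to check that $\Phi$ and $\Psi$ are minion homomorphisms. Arities are preserved on the nose, so what needs verification is minor preservation. For this I would first unwind the definition of $\xi_L$ from the proof of Proposition~\ref{prop_iso_tensor_cartesian} -- which amounts to transposing a $k\times L$ array of entries of $A$ -- to obtain the concrete formulas $\Phi(f)(t_1,\ldots,t_L) = (f(t_1^{(i)}, \ldots, t_L^{(i)}))_{i\in [k]}$ for $t_j = (t_j^{(1)},\ldots,t_j^{(k)}) \in A^k$, and $\Psi(g)(a_1,\ldots,a_L) = \be_1^T g((a_1,\ldots,a_1), \ldots, (a_L,\ldots,a_L))$. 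In these forms, compatibility with $(\cdot)_{/\pi}$ reduces to straightforward index bookkeeping: both $(\Phi(f))_{/\pi}(t_1,\ldots,t_{L'})$ and $\Phi(f_{/\pi})(t_1,\ldots,t_{L'})$ unfold to $(f(t_{\pi(1)}^{(i)}, \ldots, t_{\pi(L)}^{(i)}))_{i\in [k]}$, and the analogous computation works for $\Psi$. This already establishes the homomorphic equivalence half of the statement.

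For the isomorphism under the $k$-enhanced hypothesis, I would first invoke Lemma~\ref{lem_k_enhanced_cartesian} to guarantee that every power $\A^L$ is again $k$-enhanced, which lets me appeal to Lemma~\ref{prop_correspondence_morphisms_tensor_structures}(ii): on each arity, $(\cdot)^*$ and $(\cdot)_*$ become mutually inverse bijections between $\Hom(\A^L,\B)$ and $\Hom((\A^L)^\tensor{k}, \B^\tensor{k})$. A direct computation then gives $\Psi(\Phi(f)) = ((f^*\circ\xi_L^{-1})\circ\xi_L)_* = (f^*)_* = f$ (which uses only the unconditional identity $(f^*)_* = f$) and $\Phi(\Psi(g)) = ((g\circ\xi_L)_*)^* \circ \xi_L^{-1} = (g\circ\xi_L)\circ\xi_L^{-1} = g$ (where the middle equality $(h_*)^* = h$ is the $k$-enhanced part of Lemma~\ref{prop_correspondence_morphisms_tensor_structures}(ii)), establishing that $\Phi$ and $\Psi$ are mutually inverse. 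The only genuine obstacle in this plan is the notational one of aligning the indices of $\xi_L$ with the coordinate-wise structure of $(\cdot)^*$; no new mathematical idea beyond the preceding lemmas is required.
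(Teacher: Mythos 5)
Your construction of $\Phi$ and $\Psi$ is exactly the paper's $\psi$ and $\psi'$, built from the same ingredients (Proposition~\ref{prop_iso_tensor_cartesian}, Lemma~\ref{prop_correspondence_morphisms_tensor_structures}, and Lemma~\ref{lem_k_enhanced_cartesian}), and the argument for mutual inversion under $k$-enhancement is verbatim the paper's. The only cosmetic difference is that you unfold $\xi_L$ to explicit coordinate formulas when checking minor preservation, whereas the paper keeps the computation in matrix form; the substance is the same.
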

\begin{proof}
For $L\in\N$, consider the maps
\[
\begin{array}{lrcl}
\rho_L:&\Hom(\A^L,\B)&\to&\Hom\left({(\A^L)}^\tensor{k},\B^\tensor{k}\right),\\
&f&\mapsto& f^\ast\\[5pt]
\rho'_L:&\Hom\left({(\A^L)}^\tensor{k},\B^\tensor{k}\right)&\to&\Hom(\A^L,\B)\\
&g&\mapsto& g_\ast
\end{array}
\]
introduced in the proof of Lemma~\ref{prop_correspondence_morphisms_tensor_structures}. In addition, consider the isomorphism $\xi_L:{(\A^L)}^\tensor{k}\to {\left(\A^\tensor{k}\right)}^L$ described in Proposition~\ref{prop_iso_tensor_cartesian}. We define the map $\psi:\Pol(\A,\B)\to\Pol(\A^\tensor{k},\B^\tensor{k})$ by setting $f\mapsto \rho_L(f)\circ\xi_L^{-1}$ for $f\in\left(\Pol(\A,\B)\right)^{(L)}$. Moreover, we define the map $\psi':\Pol(\A^\tensor{k},\B^\tensor{k})\to\Pol(\A,\B)$ by setting $g\mapsto \rho'_L(g\circ\xi_L)$ for $g\in \left(\Pol(\A^\tensor{k},\B^\tensor{k})\right)^{(L)}$. Clearly, both $\psi$ and $\psi'$ preserve the arities. We now show that they also preserve the minors. Choose a map $\pi:[L]\to [L']$ for some $L,L'\in\N$, and let ${P}$ be the matrix in $\{0,1\}^{L'\times L}$ satisfying ${P}\be_i=\be_{\pi(i)}$ for $i\in [L]$.

Consider a map $f:\A^L\to\B$ and $L'$ tuples $\ba_1,\dots,\ba_{L'}\in A^k$, and let $M\in A^{k\times L'}$ be the matrix satisfying $M\be_i=\ba_i$ for $i\in[L']$. Observe that
{
\allowdisplaybreaks
\begin{align*}
\psi(f)_{/\pi}((\ba_1,\dots,\ba_{L'}))
&=
\psi(f)_{/\pi}((M\be_1,\dots,M\be_{L'}))\\
&=
\psi(f)((M\be_{\pi(1)},\dots,M\be_{\pi(L)}))\\
&=
\psi(f)((M{P}\be_1,\dots,M{P}\be_L))\\
&=
\rho_L(f)\left(\xi_L^{-1}((M{P}\be_1,\dots,M{P}\be_L))\right)\\
&=
\rho_L(f)(({P}^TM^T\be_1,\dots,{P}^TM^T\be_k))\\
&=
f^*(({P}^TM^T\be_1,\dots,{P}^TM^T\be_k))\\
&=(f({P}^TM^T\be_1),\dots,f({P}^TM^T\be_k)),\\
\psi(f_{/\pi})((\ba_1,\dots,\ba_{L'}))
&=
\psi(f_{/\pi})((M\be_1,\dots,M\be_{L'}))\\
&=
\left[\rho_{L'}(f_{/\pi})\right]\left(\xi_{L'}^{-1}((M\be_1,\dots,M\be_{L'}))\right)\\
&=
\left[\rho_{L'}(f_{/\pi})\right]((M^T\be_1,\dots,M^T\be_k))\\
&=
(f_{/\pi})^*((M^T\be_1,\dots,M^T\be_k))\\
&=\left(f_{/\pi}(M^T\be_1),\dots,f_{/\pi}(M^T\be_k)\right)\\
&=
(f({P}^TM^T\be_1),\dots,f({P}^TM^T\be_k)),
\end{align*}
}
so $\psi(f)_{/\pi}=\psi(f_{/\pi})$. Hence, $\psi$ is a minion homomorphism.

Consider now a map $g:\left(\A^\tensor{k}\right)^L\to\B^\tensor{k}$ and a tuple $\ba=(a_1,\dots,a_{L'})\in A^{L'}$. We find
\begin{align*}
\psi'(g)_{/\pi}(\ba)
&=
\psi'(g)({P}^T\ba)
=
[\rho'_L(g\circ\xi_L)]({P}^T\ba)\\
&=
(g\circ\xi_L)_*({P}^T\ba)
=
\be_1^T[(g\circ\xi_L)(({P}^T\ba,\dots,{P}^T\ba))]\\
&=
\be_1^T[g(\xi_L(({P}^T\ba\bone_k^T\be_1,\dots,{P}^T\ba\bone_k^T\be_k)))]\\
&=
\be_1^T[g((\bone_k\ba^T{P}\be_1,\dots,\bone_k\ba^T{P}\be_L))],\\
\psi'(g_{/\pi})(\ba)
&=
[\rho'_{L'}(g_{/\pi}\circ\xi_{L'})](\ba)
=
(g_{/\pi}\circ\xi_{L'})_*(\ba)\\
&=
\be_1^T\left[(g_{/\pi}\circ\xi_{L'})((\ba,\dots,\ba))\right]\\
&=
\be_1^T\left[g_{/\pi}(\xi_{L'}((\ba\bone_k^T\be_1,\dots,\ba\bone_k^T\be_k)))\right]\\
&=
\be_1^T\left[g_{/\pi}((\bone_k\ba^T\be_1,\dots,\bone_k\ba^T\be_{L'}))\right]\\
&=
\be_1^T\left[g((\bone_k\ba^T\be_{\pi(1)},\dots,\bone_k\ba^T\be_{\pi(L)}))\right]\\
&=
\be_1^T\left[g((\bone_k\ba^T{P}\be_1,\dots,\bone_k\ba^T{P}\be_L))\right],
\end{align*}
so $\psi'(g)_{/\pi}=\psi'(g_{/\pi})$. Hence, $\psi'$ is a minion homomorphism, too. This concludes the proof that $\Pol(\A,\B)$ and $\Pol(\A^\tensor{k},\B^\tensor{k})$ are homomorphically equivalent.

Suppose now that $\A$ is $k$-enhanced. By virtue of Lemma~\ref{lem_k_enhanced_cartesian}, $\A^L$ is $k$-enhanced for any $L\in\N$. Using part $(ii)$ of Lemma~\ref{prop_correspondence_morphisms_tensor_structures}, we deduce that, in this case, the maps $\rho_L$ and $\rho'_L$ are mutually inverse bijections. Therefore, given $f:\A^L\to\B$ and $g:\left(\A^\tensor{k}\right)^L\to\B^\tensor{k}$, we find
\begin{align*}
\begin{array}{lllllllllllll}
(\psi'\circ\psi)(f)
&=&
\psi'(\rho_L(f)\circ\xi_L^{-1})
&=&
\rho'_L(\rho_L(f)\circ\xi_L^{-1}\circ\xi_L)
&=&
\rho'_L(\rho_L(f))
&=&
f,\\[7pt]
(\psi\circ\psi')(g)
&=&
\psi(\rho'_L(g\circ\xi_L))
&=&
\rho_L(\rho'_L(g\circ\xi_L))\circ\xi_L^{-1}
&=&
g\circ\xi_L\circ\xi_L^{-1}
&=&
g.
\end{array}
\end{align*}
Hence, $\psi$ and $\psi'$ are mutually inverse minion homomorphisms, so $\Pol(\A,\B)$ and $\Pol(\A^\tensor{k},\B^\tensor{k})$ are isomorphic.
\end{proof}

As seen at the beginning of the section, a consequence of Theorem~\ref{thm_pol_AB_pol_tensor_iso} is that the condition $\Qconv\to\Pol(\A^\tensor{k},\B^\tensor{k})$ does not characterise solvability of a $\PCSP$ template $(\A,\B)$ through $\SA^k$ and, in fact, it is equivalent to solvability through $\BLP$. In the next, final remark we argue that the theory of \emph{tensor rank} (cf.~\cite{lim2013tensors}) could play a role in understanding the power of Sherali-Adams. 

\begin{rem}
\label{remark_rank_one_segre}
One primary message of this work is that the multilinear structure $\mathbb{F}_{\Qconv}(\A^\tensor{k})$ determines acceptance for the $k$-th level of Sherali-Adams in the same way as the linear structure 
$\mathbb{F}_{\Qconv}(\A)$ determines acceptance for $\BLP$. What is the connection between these two structures? We recall from Proposition~\ref{prop_f_q_a_k} that there exists a natural homomorphism $\psi$ from $\mathbb{F}_{\Qconv}(\A^\tensor{k})$ to $[\mathbb{F}_{\Qconv}(\A)]^\tensor{k}$. We have also seen that $[\mathbb{F}_{\Qconv}(\A)]^\tensor{k}$ carries the same information as $\mathbb{F}_{\Qconv}(\A)$ (in the sense of Lemma~\ref{prop_correspondence_morphisms_tensor_structures}); indeed, the problem of understanding whether an input structure $\X$ is homomorphic to $\mathbb{F}_{\Qconv}(\A)$ is equivalent to the problem of understanding whether $\X^\tensor{k}$ is homomorphic to $[\mathbb{F}_{\Qconv}(\A)]^\tensor{k}$. Now, intuitively speaking, the relations in $\mathbb{F}_{\Qconv}(\A^\tensor{k})$ should be seen as sets of tensors having generic rank, while the relations in $[\mathbb{F}_{\Qconv}(\A)]^\tensor{k}$ are sets of rank-one tensors obtained as outer Segre products of vectors (cf.~Footnote~\ref{footnote_segre}). Then, the homomorphism $\psi$ in the proof of Proposition~\ref{prop_f_q_a_k} can be viewed as a \emph{compressive-sensing} procedure taking a tensor $T$ as input and returning an approximation of $T$ having rank one as output. In other words, the structure corresponding to $\BLP$ is essentially a rank-one compression of the structure corresponding to $\SA^k$. Reconstructing the signal from a rank-one compression is impossible in general, so we do not have a homomorphism from $[\mathbb{F}_{\Qconv}(\A)]^\tensor{k}$ to $\mathbb{F}_{\Qconv}(\A^\tensor{k})$ for $k\geq 2$ -- which corresponds to the fact that Sherali-Adams is strictly more powerful than $\BLP$. 

This informal argument suggests a strategy to characterise the power of the Sherali-Adams hierarchy in terms of the (known) characterisation for $\BLP$ by studying the possible ways to decompose the tensors in $\mathbb{F}_{\Qconv}(\A^\tensor{k})$ in combinations of rank-one components -- i.e., the so-called \emph{rank-retaining decompositions} of the tensors (for example, see~\cite{kruskal1977three}).
\end{rem}

\begin{appendices}

\section{BLP relaxation}
\label{app:BLP}
Let $\X$ and $\A$ be $\sigma$-structures. We introduce a variable $\lambda_x(a)$ for every $x\in X$, $a\in A$, and a variable $\lambda_{R,\bx}(\ba)$ for every $R\in\sigma,\bx\in R^\X,\ba\in R^\A$. The system
\begin{align}
\label{eqn_BLP}
\begin{array}{lll}
\displaystyle\sum_{a\in A}\lambda_x(a)&=1 \hspace{2cm}& x\in X\\
\displaystyle\sum_{\substack{\ba\in R^\A\\a_i=a}}\lambda_{R,\bx}(\ba)&=\lambda_{x_i}(a) & R\in\sigma, \bx\in R^\X, a\in A, i\in [\ar(R)]
\end{array}
\end{align}
admits a $0$--$1$ solution exactly when $\X\to\A$. If the variables are allowed to take values in the interval $[0,1]$, we can check the existence of a solution to~\eqref{eqn_BLP} in polynomial time (in the size of $\X$). This yields the so-called basic linear programming ($\BLP$) relaxation~\cite{Kun12:itcs}: We say that $\BLP(\X,\A)$ \emph{accepts} if~\eqref{eqn_BLP} has a solution such that all variables take real (equivalently, rational) values in $[0,1]$. From~\cite{BBKO21}, this is equivalent to the existence of a homomorphism $\X\to\mathbb{F}_{\Qconv}(\A)$.

By construction, if $\X\to\A$ then
$\BLP(\X,\A)$ accepts. We say that $\BLP$ \emph{solves} a $\PCSP$ template $(\A,\B)$ if whenever $\BLP(\X,\A)$ accepts we have $\X\to\B$.
Solvability of $\PCSP$s (and, thus, $\CSP$s) through $\BLP$ is characterised algebraically in the following result.
\begin{thm}[\cite{BBKO21}]\label{thm:blp}
Let $(\A,\B)$ be a $\PCSP$ template. The following are equivalent:
  \begin{enumerate}
    \item[(1)] $\BLP$ solves $\PCSP(\A,\B)$;
    \item[(2)] $\Pol(\A,\B)$ admits a minion homomorphism from $\Qconv$;
    \item[(3)] $\Pol(\A,\B)$ contains symmetric operations\footnote{An $L$-ary operation $f:A^L\to B$ is called \emph{symmetric} if
$f(a_1,\ldots,a_L)=f(a_{\pi(1)},\ldots,a_{\pi(L)})$ for every $a_1,\ldots,a_L\in
A$ and every permutation $\pi:[L]\to[L]$.} 
of all arities.
  \end{enumerate}
\end{thm}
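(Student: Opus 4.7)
The plan is to establish the three-way equivalence through the chain $(1) \Leftrightarrow (2) \Leftrightarrow (3)$, leveraging the fact (stated in Appendix~\ref{app:BLP}) that $\BLP(\X,\A)$ accepts if and only if $\X \to \mathbb{F}_{\Qconv}(\A)$, together with the universal property of free structures.

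For $(1) \Leftrightarrow (2)$, I would first rephrase $(1)$ using the characterisation of $\BLP$ acceptance: $\BLP$ solves $\PCSP(\A,\B)$ precisely when every $\sigma$-structure $\X$ with $\X \to \mathbb{F}_{\Qconv}(\A)$ also satisfies $\X \to \B$. The implication $\mathbb{F}_{\Qconv}(\A) \to \B \Rightarrow (1)$ is immediate by composing homomorphisms. For the converse, apply the hypothesis to the finite substructures $\Y$ of $\mathbb{F}_{\Qconv}(\A)$ (each of which trivially satisfies $\Y \to \mathbb{F}_{\Qconv}(\A)$ by inclusion), obtaining $\Y \to \B$; then, since $\B$ has finite domain, a standard compactness argument assembles these partial maps into a homomorphism $\mathbb{F}_{\Qconv}(\A) \to \B$. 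Finally, I invoke the universal property of free structures from~\cite{BBKO21}: for any minion $\sM$ and finite structures $\A, \B$, one has $\mathbb{F}_{\sM}(\A) \to \B$ if and only if there exists a minion homomorphism $\sM \to \Pol(\A,\B)$. Applying this with $\sM = \Qconv$ gives $(1) \Leftrightarrow (2)$.

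For $(2) \Rightarrow (3)$, I would exploit the canonical symmetric elements of $\Qconv$: for each $L \in \N$, the uniform distribution $q_L \coloneqq \frac{1}{L}\bone_L \in \Qconv^{(L)}$ satisfies $(q_L)_{/\pi} = q_L$ for every permutation $\pi : [L] \to [L]$, since $P q_L = q_L$ whenever $P$ is a permutation matrix. Any minion homomorphism $\xi : \Qconv \to \Pol(\A,\B)$ preserves minors, so $\xi(q_L)_{/\pi} = \xi(q_L)$ for all permutations $\pi$, meaning $\xi(q_L)$ is an $L$-ary symmetric polymorphism.

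For $(3) \Rightarrow (2)$, given a family $(f_L)_{L \in \N}$ of symmetric polymorphisms with $f_L \in \Pol(\A,\B)^{(L)}$, I would construct $\xi$ as follows. Any $q \in \Qconv^{(L)}$ can be written as $q = \frac{1}{N}(a_1, \dots, a_L)$ for some $N \in \N$ and non-negative integers $a_i$ summing to $N$. Let $\pi_q : [N] \to [L]$ be any surjection whose fibre over $i$ has size $a_i$ (e.g., sending the first $a_1$ coordinates to $1$, the next $a_2$ to $2$, and so on), and define $\xi(q) \coloneqq (f_N)_{/\pi_q}$. The main obstacle is well-definedness, which I expect to be the technical heart: a single $q$ admits many representations (different choices of $N$ via scaling, and different orderings of the fibres). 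Independence of the ordering of the fibres follows from symmetry of $f_N$ (reorderings of $\pi_q$ differ by a pre-composition with a permutation); independence of the scaling factor follows from the identity $(f_N)_{/\pi_q} = (f_{mN})_{/\pi_{q'}}$ for the corresponding inflated representation $q'$, which in turn reduces — via the chain rule $(M_{/\pi})_{/\tilde\pi} = M_{/\tilde\pi \circ \pi}$ — to the fact that $f_{mN}$ arises from $f_N$ by a symmetric-preserving minor (since $f_{mN}$ must also be symmetric). Minor preservation of $\xi$ is then a direct verification using the chain rule. This completes $(3) \Rightarrow (2)$ and closes the cycle.
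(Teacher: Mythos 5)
The paper cites this theorem from~\cite{BBKO21} and does not prove it, so there is no internal proof to compare against. On the merits, your $(1)\Leftrightarrow(2)$ (via $\BLP$ acceptance $\Leftrightarrow$ $\X\to\mathbb{F}_{\Qconv}(\A)$, a compactness argument, and the universal property of the free structure) and your $(2)\Rightarrow(3)$ (push forward the uniform vectors $\frac{1}{L}\bone_L$) are both sound.

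The gap is in $(3)\Rightarrow(2)$. You define $\xi(q)\coloneqq (f_N)_{/\pi_q}$ and observe that well-definedness under rescaling reduces to $(f_N)_{/\pi_q}=(f_{mN})_{/\pi_{q'}}$, which you justify by saying that ``$f_{mN}$ arises from $f_N$ by a symmetric-preserving minor.'' That claim is unjustified: condition~(3) only gives you, for each arity $L$, \emph{some} symmetric polymorphism $f_L$, with no coherence between the different arities. For instance, with $q=\frac{1}{2}(1,1)$ represented both at $N=2$ and at $N=4$, your definition yields $f_2$ on one hand and $(f_4)_{/\pi}$ on the other (with $\pi:[4]\to[2]$ balanced), and there is no reason these should coincide for an arbitrary choice of $f_2$ and $f_4$. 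So the map $\xi$ is not well-defined as described.

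There are two standard ways to repair this. One is to abandon the direct construction of $\xi$ and close the cycle via $(3)\Rightarrow(1)$: given a rational BLP solution, clear denominators to obtain a common $N$, feed the resulting integer multiplicities into $f_N$ coordinatewise, and use symmetry of $f_N$ to reconcile the orderings dictated by different constraints; this yields a homomorphism $\X\to\B$, and then $(1)\Rightarrow(2)$ is the part you already have. The other is to first replace the arbitrary family $(f_L)$ by a \emph{coherent} one: for $L\mid L'$, the balanced minor $\phi_{L,L'}$ maps symmetric $L'$-ary polymorphisms to symmetric $L$-ary ones, so the nonempty finite sets of symmetric polymorphisms form an inverse system over $(\N,\mid)$, and a compactness (K\H{o}nig-type) argument produces a coherent family $(g_L)$ on which your definition of $\xi$ is genuinely well-defined. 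Either route fills the gap; as written, the argument does not.
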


\section{Two facts about minions}
\label{app:minions}

We present two simple facts about minions. The first is implicitly proved in~\cite{BBKO21} for the special case of minions of operations.

\begin{lem}\label{lem_A_maps_free_structure_of_A}
Let $\mathscr{M}$ be a nonempty minion and let $\A$ be a $\sigma$-structure. Then $\A\to\mathbb{F}_{\mathscr{M}}(\A)$.
\end{lem}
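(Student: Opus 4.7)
The plan is to build a homomorphism $h:\A\to\mathbb{F}_{\mathscr{M}}(\A)$ by constant-composition of a single arbitrary minor. Since $\mathscr{M}$ is nonempty, fix some $N\in\mathscr{M}^{(L)}$ for some $L\in\N$. For each $a\in A$, let $c_a:[L]\to A$ denote the constant function with value $a$, and set $h(a)\coloneqq N_{/c_a}\in\mathscr{M}^{(|A|)}$. This lands in the domain of $\mathbb{F}_{\mathscr{M}}(\A)$ by the definition of free structure.

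To check that $h$ is a homomorphism, let $R\in\sigma$ have arity $r$ and take $\ba=(a_1,\dots,a_r)\in R^{\A}$; I must exhibit some $q\in\mathscr{M}^{(|R^\A|)}$ witnessing $(h(a_1),\dots,h(a_r))\in R^{\mathbb{F}_{\mathscr{M}}(\A)}$, i.e.\ satisfying $h(a_i)=q_{/\pi_i}$ for each $i\in[r]$. The natural candidate is $q\coloneqq N_{/c_{\ba}}$, where $c_{\ba}:[L]\to R^\A$ is the constant function with value $\ba$. Using the minor-composition axiom of $\mathscr{M}$, one gets
\[
q_{/\pi_i}=(N_{/c_{\ba}})_{/\pi_i}=N_{/\pi_i\circ c_{\ba}}.
\]
The composition $\pi_i\circ c_{\ba}:[L]\to A$ is the constant function with value $\pi_i(\ba)=a_i$, which is exactly $c_{a_i}$. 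Hence $q_{/\pi_i}=N_{/c_{a_i}}=h(a_i)$, as required.

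There is no real obstacle here beyond being comfortable with identifying $R^\A$ and $A$ with $[|R^\A|]$ and $[|A|]$ so that the minor operations of $\mathscr{M}$ apply; once that bookkeeping is fixed, the computation above is immediate from the two minion axioms $M_{/\id}=M$ and $(M_{/\pi})_{/\tilde\pi}=M_{/\tilde\pi\circ\pi}$.
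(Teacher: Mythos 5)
Your proof is correct and takes essentially the same approach as the paper: both define the map by sending $a$ to the minor of a fixed element of $\mathscr{M}$ along the constant map with value $a$, and both verify the homomorphism property by taking $q$ to be the analogous minor along the constant map with value $\ba$ and then using the minor-composition axiom. The only cosmetic difference is that the paper first collapses the nonempty $\mathscr{M}$ to a unary element $M\in\mathscr{M}^{(1)}$ (implicitly via the constant map $[L]\to[1]$) and works from there, whereas you work directly with an arbitrary $N\in\mathscr{M}^{(L)}$; the underlying computation is identical.
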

\begin{proof}
Since $\mathscr{M}$ is nonempty, it contains a unary element $M\in\mathscr{M}^{(1)}$. Consider the map 
\begin{align*}
f:A&\to\mathscr{M}^{(n)}\\
a&\mapsto M_{/\rho_a}
\end{align*}
where $\rho_a:[1]\to [n]=A$ is defined by $\rho_a(1)=a$. Take $R\in\sigma$ of arity $r$, and consider a tuple $\ba=(a_1,\dots,a_r)\in R^\A$. Let $m=|R^\A|$, and consider the function $\pi:[1]\to [m]$ defined by $\pi(1)=\ba$. Let $q=M_{/\pi}\in\mathscr{M}^{(m)}$. For each $i\in [r]$, consider the function $\pi_i:[m]\to [n]$ defined by $\pi_i(\bb)=b_i$, where $\bb=(b_1,\dots,b_r)\in R^\A$. Observe that $\rho_{a_i}=\pi_i\circ\pi$ for each $i\in[r]$. We obtain
\begin{align*}
f(\ba)&=(f(a_1),\dots,f(a_r))
=
(M_{/\rho_{a_1}},\dots,M_{/\rho_{a_r}})
=
(M_{/\pi_1\circ\pi},\dots,M_{/\pi_r\circ\pi})\\
&=  
  ((M_{/\pi})_{/\pi_1},\dots,(M_{/\pi})_{/\pi_r})
=
(q_{/\pi_1},\dots,q_{/\pi_r})
\in 
R^{\mathbb{F}_{\mathscr{M}}(\A)},
\end{align*}
thus showing that $f$ is a homomorphism from $\A$ to $\mathbb{F}_{\mathscr{M}}(\A)$.
\end{proof}

\begin{lem} \label{lem_A_B_free_struc}
Let $\mathscr{M}$ be a minion and let $\A,\B$ be $\sigma$-structures with $\A\to\B$. Then $\mathbb{F}_{\mathscr{M}}(\A)\to\mathbb{F}_{\mathscr{M}}(\B)$.
\end{lem}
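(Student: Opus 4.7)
The plan is to build the homomorphism $\bar{h}:\mathbb{F}_{\mathscr{M}}(\A)\to\mathbb{F}_{\mathscr{M}}(\B)$ directly from the given homomorphism $h:\A\to\B$ by using minor operations. Identifying $A$ with $[|A|]$ and $B$ with $[|B|]$, the map $h$ can be regarded as a function $[|A|]\to[|B|]$, so it makes sense to set $\bar{h}(M)\coloneqq M_{/h}$ for each $M\in\mathscr{M}^{(|A|)}$. This is well defined into $\mathscr{M}^{(|B|)}$, which is the domain of $\mathbb{F}_{\mathscr{M}}(\B)$.

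The main (and essentially only) thing to verify is that $\bar{h}$ preserves every relation symbol. First I would observe that since $h$ is a homomorphism, the map $h_R:R^\A\to R^\B$ sending $\ba\mapsto h(\ba)$ is well defined, so that we can also form the minor $q_{/h_R}\in\mathscr{M}^{(|R^\B|)}$ for any $q\in\mathscr{M}^{(|R^\A|)}$. Next, I would record the key compatibility identity between the projections: for $i\in[\ar(R)]$, let $\pi_i:R^\A\to A$ and $\tilde{\pi}_i:R^\B\to B$ be the $i$-th coordinate projections; then for every $\ba\in R^\A$,
\begin{align*}
(\tilde{\pi}_i\circ h_R)(\ba)=\tilde{\pi}_i(h(\ba))=h(a_i)=(h\circ\pi_i)(\ba),
\end{align*}
so $\tilde{\pi}_i\circ h_R=h\circ\pi_i$.

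With this in hand, the preservation of relations drops out from the associativity axiom of minions. Indeed, pick $(M_1,\dots,M_r)\in R^{\mathbb{F}_{\mathscr{M}}(\A)}$ and a witness $q\in\mathscr{M}^{(|R^\A|)}$ such that $M_i=q_{/\pi_i}$ for each $i$. Set $\tilde{q}\coloneqq q_{/h_R}\in\mathscr{M}^{(|R^\B|)}$. Then, using the minor-composition axiom and the identity above,
\begin{align*}
\tilde{q}_{/\tilde{\pi}_i}=(q_{/h_R})_{/\tilde{\pi}_i}=q_{/\tilde{\pi}_i\circ h_R}=q_{/h\circ\pi_i}=(q_{/\pi_i})_{/h}=(M_i)_{/h}=\bar{h}(M_i),
\end{align*}
which exhibits $(\bar{h}(M_1),\dots,\bar{h}(M_r))\in R^{\mathbb{F}_{\mathscr{M}}(\B)}$. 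Hence $\bar{h}$ is a homomorphism, as required. There is no real obstacle here: the construction is entirely formal, and the only ingredient beyond unpacking the definitions is the functorial compatibility $(M_{/\pi})_{/\tilde{\pi}}=M_{/\tilde{\pi}\circ\pi}$ of minor operations, which is built into the definition of a minion.
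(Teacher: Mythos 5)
Your proof is correct and follows essentially the same route as the paper's: define the map by $M\mapsto M_{/h}$, push the witness $q$ through the induced map $R^\A\to R^\B$, and use the composition axiom for minors together with the identity $\tilde{\pi}_i\circ h_R=h\circ\pi_i$. The notation differs slightly (the paper calls the induced map $g$ and the homomorphism $\xi$), but the argument is the same.
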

\begin{proof}
Fix a homomorphism $f:\A\to\B$, and consider the map $\xi:\mathscr{M}^{(|A|)}\to\mathscr{M}^{(|B|)}$ defined by $M\mapsto M_{/f}$ for $M\in \mathscr{M}^{(|A|)}$. We need to show that $\xi$ yields a homomorphism from $\mathbb{F}_{\mathscr{M}}(\A)$ to $\mathbb{F}_{\mathscr{M}}(\B)$. Let $R\in\sigma$ be a relation symbol of arity $r$, and take a tuple $(M_1,\dots,M_r)\in R^{\mathbb{F}_{\mathscr{M}}(\A)}$. This means that there exists $q\in \mathscr{M}^{(|R^\A|)}$ such that $M_i=q_{/\pi_i}$ for each $i\in [r]$, where $\pi_i:R^\A\to A$ is defined by $\ba\mapsto a_i$ for $\ba=(a_1,\dots,a_r)\in R^\A$. We have that $\xi((M_1,\dots,M_r))=({M_1}_{/f},\dots,{M_r}_{/f})$. Consider the map $g:R^\A\to R^\B$ defined by $\ba\mapsto f(\ba)$ for $\ba\in R^\A$, and take $\tilde{q}= q_{/g}\in \mathscr{M}^{(|R^\B|)}$. We claim that ${M_i}_{/f}=\tilde{q}_{/\tilde{\pi}_i}$ for $i\in [r]$, where $\tilde{\pi}_i:R^\B\to B$ is defined by $\bb\mapsto b_i$ for $\bb=(b_1,\dots,b_r)\in R^\B$. This would conclude the proof of the lemma. Observe that ${M_i}_{/f}=(q_{/\pi_i})_{/f}=q_{/f\circ \pi_i}$ and $\tilde{q}_{/\tilde{\pi}_i}=(q_{/g})_{/\tilde{\pi}_i}=q_{/\tilde{\pi}_i\circ g}$. The claim follows since $f\circ\pi_i=\tilde{\pi}_i\circ g$.
\end{proof}

\end{appendices}

{\small
\bibliographystyle{plainurl}
\bibliography{cz}
}

\end{document}